%% file: hal.tex
\documentclass[a4paper,UKenglish]{article}

\title{Absolute convergence and Taylor expansion in web based 
models of Linear Logic} 

\date{}

\author{Christine Tasson \\ 
Fédération ENAC ISAE-SUPAERO ONERA, Université de Toulouse, France.
\and 
Aymeric Walch \\ 
Fédération ENAC ISAE-SUPAERO ONERA, Université de Toulouse, France.}

\usepackage[utf8]{inputenc}
\usepackage[T1]{fontenc}

\usepackage{notations}

\usepackage[total={5.5in, 8.5in}]{geometry}
\usepackage{microtype}

\usepackage{amssymb} 
\usepackage{amsthm}
\usepackage{mathtools} 

\usepackage{thm-restate} 

\usepackage{hyperref}

\usepackage{cleveref}

\theoremstyle{plain}
\newtheorem{theorem}{Theorem}
\newtheorem{lemma}[theorem]{Lemma}
\newtheorem{corollary}[theorem]{Corollary}
\newtheorem{proposition}[theorem]{Proposition}
\newtheorem*{theorem*}{Theorem}

\theoremstyle{definition}
\newtheorem{definition}[theorem]{Definition}
\newtheorem*{definition*}{Definition}

\newtheorem{remark}[theorem]{Remark}
\newtheorem{example}[theorem]{Example}

\usepackage[pdftex,dvipsnames]{color} 
\usepackage{svg}
\usepackage{pdfpages}

\definecolor{blue(pigment)}{rgb}{0.2, 0.2, 0.6}
\definecolor{darkgreen}{rgb}{0.0, 0.5, 0.0}
\definecolor{darkred}{HTML}{9F000F}
\definecolor{darkdarkred}{HTML}{4A0005}

\definecolor{deepblue}{rgb}{0,0,0.5}
\definecolor{deepred}{rgb}{0.6,0,0}
\definecolor{deepgreen}{rgb}{0,0.5,0}

\hypersetup
{
     colorlinks=true,
     linkcolor=darkdarkred, 
     filecolor=magenta,            
     urlcolor=blue(pigment),
     citecolor=darkgreen
}

\makeatletter
\newcommand{\labeltext}[2]{%
  \@bsphack
  \csname phantomsection\endcsname 
  \def\@currentlabel{#1}{\label{#2}}%
  \@esphack
}
\makeatother

\usepackage{tikz-cd}
\tikzcdset{row sep/abysmal/.initial=0.1em} 
\usepackage{quiver}

\begin{document}

\maketitle

\paragraph*{Abstract.}
The differential $\lambda$-calculus studies
  how the quantitative aspects of programs correspond to 
  differentiation and to Taylor expansion inside 
  models of linear logic.
  Recent work has generalized the axioms of 
  Taylor expansion so they apply to many models that only feature  
  partial sums. However, that work does not cover the classic 
  web based models of Köthe spaces and finiteness spaces.
  
  First, we provide a generic 
  construction of web based models 
  with partial sums. It
  captures  models,
  ranging from coherence spaces to probabilistic coherence spaces, 
  finiteness spaces and Köthe spaces.
  Second, we generalize the theory of Taylor expansion 
  to models in which coefficients can be non-positive. 
  We then use our generic web model construction 
  to provide a unified proof that all the aforementioned web based 
  models feature such Taylor expansion.

\paragraph*{Keywords.} Categorical semantics, Linear Logic, Quantitative 
semantics, Taylor expansion.

\paragraph*{Acknowledgements.} 
We would like to thank Thomas Ehrhard for his helpful 
advices.
Our notion of absolute PCR was created following one of his suggestions.

\input{content.tex}

\end{document}

%% file: content.tex
\section*{Introduction}

%
Quantitative semantics was introduced with linear logic
 after Girard noticed that the interpretation of programs in mathematical models,
were \emph{analytic maps}~\cite{Girard87}. They can be written as an infinite
sum of multilinear maps through a formula called the 
%
Taylor expansion:

 $\displaystyle f(x) = \sum_{n \in \N} \frac{1}{\factorial n} \hod{f}{n}(0) \cdot (x, \ldots, x)\qquad $
  where $\hod f n(0)$ is the $n$-th derivative of $f$ at $0$.\\
The $n$-th derivative at $0$ captures the 
part of the program that uses its input \emph{exactly} $n$-times 
during execution.
%
In this way, quantitative semantics  helps us understand 
 how computation interacts with 
non-determinism \cite{LairdManzonettoMcCuskerPagani13}, 
probabilities \cite{DanosEhrhard11,EhrhardGeoffroy25}
or quantum primitives \cite{PaganiSelingerValiron14}.

The differential $\lambda$-calculus \cite{EhrhardRegnier02} and the syntactic 
Taylor expansion \cite{EhrhardRegnier08} extend quantitative 
semantics. They give a purely syntactic way to track resources
by building  derivative and Taylor expansion inside the language itself.
These theories  also uncover a strong link between the world of logic and computation,
and the world of topology and analysis 
\cite{Ehrhard02,Ehrhard05,KerjeanTasson18,KerjeanPedrot24}.
%

The differential $\lambda$-calculus
\cite{BucciarelliEhrhardManzonetto10} and the syntactical 
Taylor expansion, and their model \cite{Manzonetto12}  have a fatal 
drawback: they allow  any \emph{countable} sums. The only way to interpret
such sums computationally is as \emph{non-determinism}.
From a syntactic viewpoint, this is puzzling.
Indeed, Taylor expansion can be applied to deterministic~\cite{EhrhardRegnier08}
or probabilistic~\cite{DalLagoLeventis19}
programs, and enjoys \emph{uniformity} properties 
that are needed to prove normalization~\cite{EhrhardRegnier08,Vaux17}.
From a semantic viewpoint, allowing countable sums rules out most quantitative 
semantics. 
Fortunately, recent works show that the differential 
calculus \cite{Ehrhard23-cohdiff,EhrhardWalch25} and the 
Taylor expansion \cite{EhrhardWalch25} can be axiomatized using only 
\emph{partial sums}. Partial sums 
match computational behaviours such as 
determinism in coherence spaces~\cite{Girard87}, or randomness
in probabilistic coherence spaces~\cite{DanosEhrhard11}.

One limitation remains. The infinite partial sums used in the \emph{coherent} 
theory of Taylor expansion \cite{EhrhardWalch25} 
rely on $\Sigma$-monoids \cite{Haghverdi00}.
$\Sigma$-monoids only allow positive sums: if $x + y = 0$ then $x = y = 0$.
For programming language semantics, this is not a problem as programs never cancel each other out.
However, this restriction becomes a serious issue when developing  
a theory of Taylor expansion that also works  for quantitative 
semantics built on analysis and topology. In particular, 
Köthe spaces~\cite{Ehrhard02} and finiteness spaces \cite{Ehrhard05}
are important quantitative models that use analytic maps. Because 
they involve negative coefficients, they are  not covered by the coherent 
Taylor expansion described in \cite{EhrhardWalch25}.

\subparagraph*{Contributions.} 
Our main contribution (\Cref{part:taylor}) is to generalize the 
coherent Taylor expansion~\cite{EhrhardWalch25} to a more general notion of partial sums called 
PCMs~\cite{Hines13} that includes negative coefficients.
%
This axiomatization of partial 
sum  is more subtle to work with, 
in the same way that summability over real numbers is more subtle than 
summability over non-negative real numbers. 
In order to prove in a \emph{unified} way 
that both Köthe spaces and finiteness spaces feature a coherent Taylor expansion,
we introduce a generic construction of web models (\Cref{part:web-models}), which 
is another contribution.
%
%
We first consider models in which the coefficient of 
the vectors and matrices range over an arbitrary $\Sigma$-monoid
in which objects are sets of vectors and morphisms are matrices.
We then generalize our construction to arbitrary PCMs whose summability is 
given by a notion of absolute convergence. This 
describes web models where coefficients can be negative, as in  
Köthe spaces \cite{Ehrhard02} and
finiteness spaces \cite{Ehrhard05}.
%
%
\subparagraph*{Related work.}
The use of partial sums in semantics is rather common, 
and can be traced back to \emph{algebraic programming} 
semantics \cite{ManesArbib86}. Various notions of positive 
sums have been used in the literature, such as 
in geometry of interaction~\cite{Haghverdi00} and quantum programming 
semantics~\cite{Selinger04,ChoJacobsWesterbaan15}.
Our results on web models provide a unified point of view on various proof 
techniques that appear throughout the literature on web based 
semantics~\cite{Girard87,Ehrhard02,Ehrhard05,DanosEhrhard11,LairdManzonettoMcCuskerPagani13}.
%
%
Under some light additional assumptions (see \cref{rem:semimodules}), our web 
models based on $\Sigma$-monoids 
can be described as categories of modules over $\Sigma$-semirings~\cite{TsukadaKazuyuki22}
with an orthogonal basis reminiscent of tight orthogonality in double glueing~\cite{HylandSchalk03}.
However, \cite{TsukadaKazuyuki22} does not describe 
Taylor expansion in that setting, and even though 
they prove that their model have a Lafont exponential, they do not provide any 
explicit definition. They also do not provide a definition for 
more general PCMs based on absolute convergence. 
Our theory of coherent Taylor expansion is an adaptation of 
the coherent Taylor expansion of~\cite{EhrhardWalch25}
to a setting with possibly negative coefficients.
Our work lies in the field of categorical differentiation, whose aim 
is to study the categorical properties of 
differentiation~\cite{BluteCockettSeely06,BluteCockettSeely09,CockettCruttwell14} 
and of Taylor expansion~\cite{KerjeanLemay23,Lemay24,Walch25}.
\subparagraph*{Notations.}
Let $\catLL$ be a
model of \textbf{linear logic}.
We write the composition of $f \in \catLL(X, Y)$ with 
$g \in \catLL(Y, Z)$ as $g \compl f$.
We write the symmetric monoidal structure as $(\sm, \smunitL, \smunitR, \smassoc, 
\smsym)$ where $\smunitL_X : \smone \sm X \arrow X$ and 
$\smunitR_X : X \sm \smone \arrow X$ are the unitors, 
$\smassoc_{X, Y, Z} : (X \sm Y) \sm Z \arrow X \sm (Y \sm Z)$
is the associator, and $\smsym_{X, Y} : X \sm Y \arrow Y \sm X$ is 
the symmetry.
We write $X \linarrow Y$ the internal hom, the currying of 
$f : X \sm Y \arrow Z$ is $\cur(f) : X \arrow (Y \linarrow Z)$, 
and the evaluation is $\ev : (X \linarrow Y) \sm X \arrow Y$.
There is a bifunctor 
$\_ \linarrow \_ : \catLL^{op} \times \catLL \arrow \catLL$
that maps $f : X' \arrow X$ and $g : Y \arrow Y'$ to 
$f \linarrow g = \cur(g \compl \ev \compl \left( (X \linarrow Y) \sm f \right)) : 
(X \linarrow Y) \arrow (X' \linarrow Y')$.
Let $(\bang, \der, \dig)$ be the ressource comonad,
$\klexp$ the Kleisli category of $\bangtext$, and 
$\Der : \catLL \arrow \klexp$ the canonical functor 
that maps $f \in \catLL(X, Y)$ to $f \compl \der \in \catLL(\bang X, Y)$.

We write $I$-indexed product as $\withFam X_i$, the projections as 
$\proj_i : \withFam X_i \arrow X_i$. 
We write $I$-indexed coproducts as 
$\plusFam X_i$, the injections as
$\inj_i : X_i \arrow \plusFam X_i$. We write the paring and co-pairing 
of a family $f_i : X_i \arrow Y$ as
$\prodPairing{f_i} : X \arrow \withFam Y_i$ and
$\coprodPairing{f_i} : \plusFam X_i \arrow Y$. 

If $A$ is a countable set and $M$ is a set, an $A$-\textbf{indexed family} of 
elements of $M$ is a function $\vect x : A \arrow M$, written 
$\vect x = \family{x_a}$. We define 
the \emph{support} of a family $\vect x = \family{x_a}$ as 
$\supp{\vect x} = \{a \in A | x_a \neq 0 \}$.
A finite \textbf{multiset} over a set $A$ is a function $m : A\arrow \N$ 
with finite support.
Let $\mfin{(A)}$ be the set of all finite multisets over $A$.
We also use the notation 
$[a_1, \ldots, a_n]$ for the 
function that maps $a$ to the size of 
$\set{i \suchthat a = a_i}$.
We write $m_1 + m_2$ their pointwise sum, so that 
$[a_1, \ldots, a_n] + [a_1', \ldots, a_k'] = [a_1, \ldots, a_n, a_1', \ldots, a_k']$.

A category $\catLL$ has \textbf{zero morphisms} if for all $X, Y$, there 
exists a morphism $0^{X, Y} \in \catLL(X, Y)$ 
such that for all $f \in \catLL(X', X)$ and $g \in \catLL(Y, Y')$, 
$0^{X, Y} \compl f = 0^{X', Y}$ and  $g \compl 0^{X, Y} = 0^{X, Y'}$.
We  define, for all object $X$, 
the \textbf{Kronecker} symbole $\kronecker i j \in \catLL(X, X)$
by $\kronecker i i = \id$ and $\kronecker i j = 0$ if 
$i \neq j$.

\section{Partial Commutative Monoids (PCMs)} 

\label{sec:pcm}

We need a suitable axiomatization of partial sums
that captures sums that are typical to analysis and topology.
We use the notion of \emph{partial commutative monoid}~\cite{Hines13}.
We consider a set $M$  with a partial function 
$\Sigma$ called the \emph{sum} from indexed family on $M$ to $M$. An indexed family 
$\family{x_a}$ is \emph{summable} if it is in the domain of $\Sigma$,
and we write its image as $\sum_{a \in A} x_a$.

\begin{definition} \label{notation:sum-def}
  For any  expression
  $e$ and $e'$ involving partial sums, we write 
  $e \sumsub e'$ if whenever $e$ is defined, then $e'$ is also defined 
  and $e = e'$. We write  $e \sumiff e'$ if $e \sumsub e'$ and $e \sumsubinv e'$.
\end{definition}

\begin{definition}[\cite{Hines13}] 
    \label{def:pcm} 
    \labeltext{(Unary)}{ax:unary}
  \labeltext{(WPA)}{ax:wpa}
  The tuple $(M, \Sigma)$ is a \emph{partial commutative monoid (PCM)} if $M$ 
  is not empty and if the
  sum $\Sigma$ satisfies the following axioms. \begin{itemize}
    \item \ref{ax:unary} Unary sum axiom: every singleton family $(x)$ is summable 
    with sum $x$.
    \item \ref{ax:wpa} Weak partition associativity axiom: 
    Let $\family{x_a}$ be an indexed family, $I$ countable, and $\{A_i\}_{i \in I}$ a partition 
    of $A$ (the $A_i$ can be empty). 
    Then $\sum_{a \in A} x_a \sumsub 
      \sum_{i \in I} \left(\sum_{a \in A_i} x_a \right)$
  \end{itemize}
  A PCM is \emph{strong} if it satisfies 
  the partition associativity axiom \labeltext{(PA)}{ax:pa} \ref{ax:pa},
  where the implication~$\sumsub$ of \ref{ax:wpa} is replaced by an equivalence 
  $\sumiff$. 
  Strong PCMs are often  called $\Sigma$-monoids~\cite{Haghverdi00}.
\end{definition}

Observe that the neutral element $0$ is not given as part of the
data, but can be defined from the axioms. Indeed, the sum over the 
empty family is always defined by \cref{prop:subfamily-summable}.

\begin{lemma} \label{prop:subfamily-summable}
  If $A \subseteq B$ and $\family<B>[b]{x_b}$ is summable, then 
  $\family{x_a}$ is summable.
\end{lemma}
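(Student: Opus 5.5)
The idea is to use the weak partition associativity axiom \ref{ax:wpa} with a partition of $B$ built so that one block equals $A$. Once the sum over $B$ is defined, \ref{ax:wpa} yields summability of the outer family; its members are inner sums, and each inner sum must itself be defined. The whole argument rests on that implicit requirement. For $\sum_{i \in I} \left(\sum_{a \in A_i} x_a \right)$ to be defined, every inner sum $\sum_{a \in A_i} x_a$ has to be defined, since the outer family is made of these values.

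Concretely, assume $\family<B>[b]{x_b}$ is summable and $A \subseteq B$. Take the two-element index set $I = \{1,2\}$ and the partition $A_1 = A$, $A_2 = B \setminus A$. The axiom explicitly allows the blocks to be empty, so the cases $A = \emptyset$ and $A = B$ are covered. Then \ref{ax:wpa} gives
\[ \sum_{b \in B} x_b \sumsub \sum_{i \in \{1,2\}} \left( \sum_{b \in A_i} x_b \right). \]
The left-hand side is defined by hypothesis, so by \cref{notation:sum-def} the right-hand side is defined. In particular the inner sum $\sum_{b \in A_1} x_b = \sum_{a \in A} x_a$ is defined, which means $\family{x_a}$ is summable.

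The only delicate point is justifying that definedness of the outer expression forces definedness of each inner sum. This is the standard reading of the convention in \cref{notation:sum-def} for nested partial expressions: an expression is defined only when all of its subexpressions are. I would state this reading explicitly and then conclude.

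As a remark, taking $A = \emptyset$ together with the nonemptiness of $M$ and \ref{ax:unary} (some singleton family is summable) shows that the empty family is summable. This gives the neutral element $0$ mentioned just before the statement.
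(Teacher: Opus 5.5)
Your proof is correct. You partition $B$ as $\{A, B\setminus A\}$, apply \ref{ax:wpa}, and read definedness of the nested sum as requiring every inner sum to be defined, which is the standard argument. The paper states this lemma without proof, but it relies on this same reading of \ref{ax:wpa} elsewhere (e.g.\ to extract summability of $(s_{a_0,b}\,t_{b,c}\,x_{a_0})_{b}$ in the proof of \cref{thm:composition-web-pos}), and your closing remark on the empty family is how the paper obtains the neutral element $0$.
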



\begin{definition} \label{def:preorder-sum}
Every PCM has a canonical preorder (it is not antisymetric 
in general) given by 
$x \leq y \text{ if there exists } z \text{ such that } x + z = y$.
\end{definition}

Algebraic programming semantics
is often based on strong PCMs 
and \emph{additive domains}~\cite{ManesArbib86}.
An additive domain is (roughly) a \emph{strong} PCM where the preorder 
of \cref{def:preorder-sum} is antisymetric, and where
$\sum_{a \in A} x_a = \sup \set{\sum_{a \in F} x_a 
\suchthat F \subseteq A \text{ finite}}$.
However, strong PCMs 
$(M, \Sigma)$ are \emph{positive}~\cite{Hines13}: for all $x, y \in M$, 
$x + y = 0 \imply x = y = 0$, as illustrated 
in \cref{ex:real,ex:finitary}.
Positivity is not an issue 
when one is concerned with programming languages semantics.
However, it does not capture partial summability in model based on traditional analysis.

\begin{example}\label{ex:real}
  A family of \emph{non-negative} real numbers 
  $\family{x_a \in \R_{\geq 0}}$
  is summable when
  $D = \{ \sum_{a \in F} x_a | F \text{ is a finite subset of } A \}$ 
  has an upper bound, written $\sum_{a \in A} x_a = \sup{D}$.
  This sum is an additive domain.
  A countable family of real numbers $\family{x_a \in \R}$ is 
  \emph{absolutely convergent}
  if the family of their absolute 
  values $\family{\abs{x_a}}$ is summable. 
  Then we can define the sum of the $\family{x_a}$ as
  $\sum_{a \in A} x_a \defEq \sum_{a \in A} \absplus x_a - \sum_{a \in A} \absminus x_a$,
  where $\absplus x_a = \max(0, x_a)$ 
  and $\absminus x_a = \max(0, -x_a)$. 
  The notion of absolute convergence yields a PCM, the axiom \ref{ax:wpa} 
  corresponds to the Fubini-Tonelli 
  theorem. However, this PCM is not strong: 
  $(-1, 1)$ is summable with sum $0$, $(0, 0, \ldots)$ is summable, but 
  $(-1, 1, -1, 1, \ldots)$ is not.
\end{example}

\begin{example} \label{ex:finitary}
For all monoid $M$, we can define the \emph{finitary PCM} generated by $M$ 
in which a family $\vect x = \family{x_a}$ is summable if 
$\supp{\vect x} = \set{a \in A \suchthat x_a \neq 0}$ is finite, with 
sum defined as 
$\sum_{a \in A} x_a = \sum_{a \in \supp{\vect{x}}} x_a$.
This PCM is strong if and only if $M$ is positive.
\end{example}
%


%
\subparagraph*{Properties of PCMs.}
First, the sum over the empty family, 
$0$, is a neutral element.

\begin{proposition} \label{prop:zero-neutral}
  Let $\vect x = \family{x_a}$ be an indexed family, and $A'$ be a set such that 
  $\supp{\vect x} \subseteq A' \subseteq A$.
  Then $\sum_{a \in A'} x_a \sumiff \sum_{a \in A} x_a$. 
\end{proposition}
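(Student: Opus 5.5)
We prove the two directions of $\sumiff$ separately, using only \ref{ax:unary}, \ref{ax:wpa} and \cref{prop:subfamily-summable}. Write $0$ for the sum of the empty family, which exists by \cref{prop:subfamily-summable} applied to any singleton family (summable by \ref{ax:unary}). The starting observation is that $0 + x \sumsub x$ in a suitable sense. Since $(x)$ is summable, \ref{ax:wpa} applied to the partition of a one-element index set into $\{\ast\}$ and $\emptyset$ shows that the family $(x, 0)$ is summable with sum $x$. More generally, partitioning any index set into the given blocks together with countably many empty blocks shows that padding a summable family with zeros, each zero realized as an empty sub-sum, keeps it summable with the same sum.

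\emph{Direction $\sum_{a \in A} x_a \sumsub \sum_{a \in A'} x_a$.} Assume $\vect x$ is summable over $A$. By \cref{prop:subfamily-summable}, both $\family<A'>[a]{x_a}$ and $\family<A\setminus A'>[a]{x_a}$ are summable. Apply \ref{ax:wpa} to $A$ with the partition $\{A'\} \cup \{\{a\} \suchthat a \in A \setminus A'\}$. This gives
\[
\sum_{a\in A} x_a = \sum_{a \in A'} x_a + \sum_{a \in A\setminus A'} x_a ,
\]
where the second sum is a family of $x_a = 0$ terms. The trick is to arrange that these zeros are literally empty sums, since $0$ is defined as the empty sum. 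Each singleton sum $\sum_{b\in\{a\}} x_b$ equals $x_a = 0$ by \ref{ax:unary}, so the outer family is $(s, 0, 0, \ldots)$ with $s = \sum_{a\in A'} x_a$. It remains to check that this outer sum equals $s$. Here I would apply \ref{ax:wpa} a second time, to the family $\family<A'\cup (A \setminus A')>{\cdot}$ viewed differently. A cleaner route is to apply \ref{ax:wpa} directly to $A$ with the single-block partition $\{A\}$ reindexed: partition $A$ into $A'$ and the blocks $\emptyset$ only. That is impossible, since a partition must cover $A$, and this is exactly the main obstacle. We cannot erase the indices in $A\setminus A'$ by a partition argument alone; we must show that adding zeros does not change the sum.

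\emph{Key lemma (main obstacle): if $(y_i)_{i\in I}$ is summable, then so is the family extended by elements equal to $0$, with the same sum, and conversely.} For the forward direction I would use \ref{ax:wpa} on the padded index set $I \sqcup J$ in the reverse way. Consider instead the family $(y_i)_{i \in I}$ and the partition of $I$ into singletons $\{i\}$ together with empty blocks $B_j$ for $j \in J$. \ref{ax:wpa} says $\sum_I y_i \sumsub \sum_{I\sqcup J} z_k$, where $z_i = y_i$ and $z_j = \sum_{\emptyset} = 0$. This is exactly the padded family, so padding preserves summability and the sum. Applied to $A' \subseteq A$ with $J = A\setminus A'$ (all $x_a = 0$ there), this gives $\sum_{a\in A'} x_a \sumsub \sum_{a \in A} x_a$.

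For the converse direction, assume $\vect x$ is summable over $A$. Summability over $A'$ follows from \cref{prop:subfamily-summable}, and the padding argument just given shows that the padded family, which is $\vect x$ itself, has sum $\sum_{A'} x_a$. So the two sums agree whenever both are defined. Combining the two directions gives $\sumiff$. The only delicate point is recognizing that empty blocks in \ref{ax:wpa} are allowed and produce the zeros, which is what makes the padding lemma work.
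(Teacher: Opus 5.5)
Your argument is correct and is essentially the paper's proof. Your padding lemma is exactly the paper's application of \ref{ax:wpa} to the family indexed by $A'$, using the $A$-indexed partition $A_a = \{a\}$ for $a \in A'$ and $A_a = \emptyset$ otherwise, with \ref{ax:unary} and the definition of $0$ as the empty sum identifying the outer family with $\vect x$; the converse then follows, as in the paper, from \cref{prop:subfamily-summable} together with that first inclusion. (The abandoned first attempt in your paragraph on the direction $\sum_{a \in A} x_a \sumsub \sum_{a \in A'} x_a$ is not needed and can simply be deleted.)
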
 

\begin{proof}
  Assume that $\family<A'>{x_a}$ is summable. Then for all $a \in A$, define 
  $A_a = \{a\}$ if $a \in A'$, and $A_a = \emptyset$ otherwise. 
  Then, $\{A_a\}_{a \in A}$ is a partition of $A'$. 
  Thus, by \ref{ax:wpa}, 
  $\sum_{a \in A'} x_a \sumsub \sum_{a \in A} \sum_{a' \in A_a} x_{a'}$.
  But for all $a \in A$, 
  $\sum_{a' \in A_a} x_{a'} = x_a$, by \ref{ax:unary} if $x_a \neq 0$, 
  and by definition of $0$ if $x_a = 0$.
  Thus, $ \sum_{a \in A'} x_a \sumsub \sum_{a \in A} x_a$. 
  Conversely, if $\family{x_a}$ is summable then $\family<A'>{x_{a}}$ is summable 
  by \cref{prop:subfamily-summable}
  because it is a subfamily of $\family{x_a}$, and it is proved above that the sums are equal.  
\end{proof}

Sums in PCMs are agnostic to the index set.
%
%
Given two index sets $A$ and $B$, an injection \(\phi:A \injection B\) 
and a family
$\vect x= \family{x_a}$ we define a $B$-indexed family
$\famact\phi \vect x= \family<B>[b]{y_b}$ by
$y_{\phi(a)}= x_a$, and $y_b = 0$ if $b\notin\phi(A)$.


\begin{proposition} \label{prop:reindexing}
For any injection $\phi : A \injection B$ and any $A$-indexed family $\vect x$,
$\famact{\phi} \vect x$ is summable if and only if $\vect x$ is summable, 
and the two sums are equal.
\end{proposition}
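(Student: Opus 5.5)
The plan is to reduce to \cref{prop:zero-neutral} by viewing both families as living on the common support. Write $\vect y = \famact{\phi}\vect x = \family<B>[b]{y_b}$. By definition $y_b \neq 0$ only if $b = \phi(a)$ with $x_a \neq 0$, so $\supp{\vect y} \subseteq \phi(\supp{\vect x})\subseteq \phi(A) \subseteq B$. By \cref{prop:zero-neutral} applied with $A' = \phi(A)$, we get $\sum_{b \in B} y_b \sumiff \sum_{b \in \phi(A)} y_b$. So it suffices to compare the $\phi(A)$-indexed family $\family<\phi(A)>[b]{y_b}$ with $\vect x$.

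For this comparison, the key point is that $\phi$ restricts to a bijection $A \to \phi(A)$ with $y_{\phi(a)} = x_a$. Summability and the sum should be invariant under bijective reindexing. I would obtain this from \ref{ax:wpa} together with \ref{ax:unary}. For the direction from $\vect y$ to $\vect x$, take the $\phi(A)$-indexed family $\vect y$ and the partition $\{A_a\}_{a\in A}$ of $\phi(A)$ given by $A_a = \{\phi(a)\}$. This is indexed by the countable set $A$, and it is a partition because $\phi$ is injective. By \ref{ax:wpa},
$\sum_{b \in \phi(A)} y_b \sumsub \sum_{a \in A} \sum_{b \in A_a} y_b$,
and each inner singleton sum equals $y_{\phi(a)} = x_a$ by \ref{ax:unary}. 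Hence $\sum_{b\in\phi(A)} y_b \sumsub \sum_{a\in A} x_a$. The symmetric argument applies \ref{ax:wpa} to $\vect x$ with the partition $\{\{\phi^{-1}(b)\}\}_{b \in \phi(A)}$ of $A$, indexed by $\phi(A)$. It gives $\sum_{a\in A} x_a \sumsub \sum_{b\in\phi(A)} y_b$. Together these give $\sumiff$.

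Chaining the two equivalences yields that $\vect y$ is summable iff $\vect x$ is, with equal sums. The only subtle point, and the main thing to check, is the intended reading of ``family'' in \ref{ax:wpa}. A family indexed by a set is a function, so relabelling the index set via the bijection is exactly what the partition trick achieves. Also, the partition blocks are singletons, so no emptiness issues arise, and the index sets $A$ and $\phi(A)$ are countable as required.
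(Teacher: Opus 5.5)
Your proof is correct and follows essentially the same route as the paper: restrict to the image $\phi(A)$ using \cref{prop:zero-neutral}, then handle the bijective case with singleton partitions via \ref{ax:wpa} and \ref{ax:unary}. The only cosmetic difference is that the paper writes out just one partition argument, the partition $\{\{\phi^{-1}(b)\}\}_{b}$ of $A$, and gets the other direction by applying it to $\phi^{-1}$, which unfolds to your partition $\{\{\phi(a)\}\}_{a \in A}$.
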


\begin{proof}
  Let  $\vect y = \famact{\phi} \vect x = \family<B>[b]{y_b}$.
  By definition, $\supp{\vect y} \subseteq \im(\phi) \subseteq B$, 
  so by \cref{prop:zero-neutral}, $\vect y$ 
  is summable if and only if $\family<\im(\phi)>[b]{y_b}$
  is summable, and the two sums are equal. 
  Thus, it suffices to prove our result for all bijections to conclude 
  that it holds for all injections.
  We now assume that $\phi$ is a bijection.
  Assume that $\vect x$ is summable.
  For all $b \in B$, let $A_b = \{\phi^{-1}(b)\}$. Then the $\{A_b\}_{b \in B}$
  are a partition of $A$, so by \ref{ax:wpa} we have
  $\sum_{a \in A} x_a \sumsub \sum_{b \in B} \sum_{a \in A_b} x_a$.
  By \ref{ax:unary}, $\sum_{a \in A_b} x_a = \famact{\phi} \vect x (b)$. 
  Thus, $\famact{\phi} \vect x$ is summable 
  with the same sum as $\vect x$. 
  Conversely, if $\famact{\phi} \vect x$ is summable then 
  $\famact{(\phi^{-1})} \famact{\phi} \vect x = \vect x$ is summable. 
\end{proof}
\subparagraph*{Partial commutative rigs.}
We introduce a PCM counterpart to the notion of rigs 
(also called commutative semirings)~\cite{Glazek02}. 
They will serve as the coefficients of our web models~(\cref{part:web-models}).

\begin{definition}
A Partial Commutative Rig (PCR) is the data of $(\rig, \Sigma, 1, \cdot)$
such that $(\rig, \Sigma)$ is a PCM, 
$(\rig, 1, \cdot)$ is a commutative monoid, and such that 
for all indexed families 
$\family{x_a}$ and $\family<B>[b]{y_b}$ of $\rig$,
$\left(\sum_{a \in A} x_a \right) \cdot \left(\sum_{b \in B} y_b \right)
\sumsub \sum_{a \in A, b \in B} (x_a \cdot y_b)$.
A PCR is strong if the underlying PCM is strong. Strong PCR are 
called $\Sigma$-semirings in~\cite{TsukadaKazuyuki22}.
\end{definition}


\begin{example} \label{ex:pcr}
\begin{itemize}
\item Both $\R$ and $\Rpos$ are PCR, with the PCM structure of
\cref{ex:real} and with the usual multiplication.
\item Every rig $(\rig, 0, +, 1, \cdot)$
induces a PCR called the \emph{finitary PCR} generated by $\rig$,
in which the PCM structure is the finitary PCM induced by $(\rig, +)$
(see \cref{ex:finitary}).
\item A complete rig is a PCR in which sums are always defined. A \emph{continous 
rig} \cite{LairdManzonettoMcCuskerPagani13} is a complete 
rig in which the PCM structure is an additive domain.
A standard example of complete rig is the completion of positive real 
numbers $\Rbar = \Rpos \cup \set{\omega}$.
\end{itemize}
\end{example}

\smallskip
\noindent 
\textbf{PCM categories.}
We now give a definition of categories ``enriched'' over PCM.
This enrichement is at the core of our theory of Taylor expansion, in the 
same way that differential categories \cite{BluteCockettSeely06}
are enriched over commutative monoids.
Let $\catLL$ be a locally small category such that 
for all objects $X, Y$, $\catLL(X, Y)$ is a PCM.
%

\begin{definition}[\cite{Hines13}] \label{def:pcm-category}
  The category $\catLL$ is a 
  PCM-category if for all 
$\family<A>[a]{f_a \in \catLL(X, Y)}$ and $\family<B>[b]{g_b \in \catLL(Y, Z)}$,
$\left( \sum_{b \in B} g_b \right) \compl \left( \sum_{a \in A} f_a \right) 
\sumsub  \sum_{(a, b) \in A \times B} (g_b \compl f_a)$.
\end{definition}

This property is called the strong distributivity of the composition.
By \ref{ax:unary}, strong distributivity implies both left and right distributivity:
$\left( \sum_{b \in B} g_b \right) \compl f \sumsub \sum_{b \in B} (g_b \compl f)$
and $g \compl \left( \sum_{a \in A} f_a \right) \sumsub \sum_{a \in A} (g \compl f_a)$.
However, left and right distributivity do not imply strong 
distributivity, unless the PCM is strong \cite{Hines13}. We need to be very careful
about this subtlety.
It follows from left and right distributivity
that a PCM-category always have $0$-morphisms, defined as the sum over 
the empty family.

\part{A generic construction of web models}

\label{part:web-models}

We provide in this part a unified construction of web models of linear 
logic, based on a generic PCR $\rig$. 
This definition encompasses many web based models 
of the literature depending on the choice of $\rig$ (see \cref{ex:web-models-positive}), and clearly identify
what common patterns these models share. 
We give two constructions. The first one follows the standard construction of most 
web based models, but only works when the PCR is \emph{strong}.
The second one works on more general PCR, whose summability consists of 
some sort of \emph{absolute convergence}.

Let us introduce new notations for this part.
We define the Kronecker symbol: $\kronecker a b = 0$ if $a \neq b$, and 
$\kronecker a a = 1$. We define for all $a \in \web{X}$ a vector 
$e_a \in \vrig X$ by $(e_a)_{a'} = \kronecker a {a'}$. 
For all set $\web X$ (called the \emph{web}) and all PCR $\rig$, we define the following 
\emph{partial} maps on vectors: 
\begin{itemize}
\item scalar product $\scalarRig{.}{.} : \vrig{X} \times 
\vrig{X} \arrow \rig$ by  $\scalarRig{x}{y} = \sum_{a \in \web{X}} x_a y_a$,
\item matrix multiplication
$\_ \cdot \_ : \rig^{\web{Y} \times \web{Z}} \times \rig^{\web{X} \times \web{Y}}
\arrow \rig^{\web X \times \web Z}$ by $(t \cdot s)_{a, c} = \sum_{b \in \web{Y}} s_{a, b} \cdot t_{b, c} $,
\item and the application 
of a matrix to a vector $\_ \cdot \_ : \rig^{\web{X} \times \web{Y}} \times \rig^{\web X} 
\arrow \rig^{\web Y}$ (this is an instance of matrix multiplication) by $(s \cdot x)_b = \sum_{a \in \web X} s_{a,b} x_a$.
\end{itemize}

\section{Web models based on strong PCMs}
\label{sec:positive-web}

\subsection{Orthogonality and objects}

First we assume that $\rigpos$ is a strong PCR, and we consider 
$\ball \subseteq \rigpos$. We assume that 
$\ball$ is \emph{downward closed}:
if $x \leq y$ and $y \in \ball$ then $x \in \ball$.
This assumption is not necessary to build the model, but it will be
necessary to ensure that the construction yields PCM categories.
We define an \emph{orthogonality relation} $x \orthrel y$ on vectors 
and the \emph{orthogonal} of a set
$F \subseteq \vrigpos{X}$ as:
\[ x \orthrel  y \text{ if } \scalarRig{x}{y} \text{ is defined and belongs to } \ball \qquad 
\orth{F} = \set{x' \in \vrigpos X \suchthat x \orthrel x'}. \]
This construction has the usual properties of an orthogonality
construction: $F \subseteq \biorth{F}$, 
$(F \subseteq G) \imply (\orth{G} \subseteq 
\orth{F})$, and $\triorth{F} = \orth{F}$.
%
%
Different choices of $\rigpos$ and $\ball$ yield the usual definitions 
of orthogonality in various web models
(see \cref{ex:web-models-positive}).

\begin{definition} \label{def:covering}
  A set of vectors $F \subseteq \vrigpos{X}$ is a covering
  if for all $a \in \web X$, 
  there exists $x \in F$ such that its $a$ component $x_a$ is invertible.
\end{definition}

\begin{definition} \label{def:rigpos-space}
An $\rigpos$-space
is a pair $X = (\web X, \points X)$ where 
$\web{X}$ is a set and $\points X \subseteq \vrigpos{X}$ is a set of vectors 
such that $\biorth{\points X} = \points X$, and such that 
both $\points X$ and $\orth{\points X}$ are coverings.
The dual of an $\rigpos$-space $X$ 
is defined as the $\rigpos$-space
$\orth{X} = (\web{X}, \orth{\points X})$.
\end{definition}

The downward closure of $\ball$ induces the downward closure of 
$\rigpos$-spaces, for the pointwise preorder on $\vrigpos X$ 
induced by the preorder on $\rigpos$ given in \cref{def:preorder-sum}.

\begin{lemma}[Downward closure] \label{prop:downward-closure}
For all $x, y \in \vrigpos X$
such that $x \leq y$, $y \in \points X \imply x \in \points X$. 
\end{lemma}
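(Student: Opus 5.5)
Since $\points X = \biorth{\points X}$, it suffices to show that $x \orthrel x'$ for every $x' \in \orth{\points X}$. That is, we must show that $\scalarRig{x}{x'} = \sum_{a \in \web X} x_a x'_a$ is defined and belongs to $\ball$. We know that $\scalarRig{y}{x'}$ is defined and lies in $\ball$, since $y \in \points X$. The plan is to decompose $y$ as $x + z$ componentwise, use the strong PCM axioms to split $\scalarRig{y}{x'}$ into $\scalarRig{x}{x'} + \scalarRig{z}{x'}$, and then conclude by downward closure of $\ball$.

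First, since $x \leq y$ pointwise, for each $a \in \web X$ we choose $z_a$ with $x_a + z_a = y_a$. Here $x_a + z_a$ denotes the sum of the two-element family $(x_a, z_a)$. By the PCR distributivity axiom, applied with the family $(x_a, z_a)$ and the singleton family $(x'_a)$ and combined with \ref{ax:unary}, the family $(x_a x'_a, z_a x'_a)$ is summable with sum $y_a x'_a$.

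Next, consider the family $\family<\web X \times \{0,1\}>{w_{a,i}}$ with $w_{a,0} = x_a x'_a$ and $w_{a,1} = z_a x'_a$. Partition it by $a$: each block $\{(a,0),(a,1)\}$ is summable with sum $y_a x'_a$, and $\family<\web X>{y_a x'_a}$ is summable because $\scalarRig{y}{x'}$ is defined. Strong partition associativity \ref{ax:pa} then gives that the whole family $w$ is summable with sum $\scalarRig{y}{x'}$. This is the key step, and the reason strongness is required: \ref{ax:wpa} alone only goes from the whole family to the blocks, not from the blocks to the whole.

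Now repartition $w$ by $i \in \{0,1\}$. By \ref{ax:wpa} (or \ref{ax:pa}), the two blocks $\family<\web X>{x_a x'_a}$ and $\family<\web X>{z_a x'_a}$ are summable, and their two-element sum equals $\scalarRig{y}{x'}$. Hence $\scalarRig{x}{x'}$ is defined, and $\scalarRig{x}{x'} + \scalarRig{z}{x'} = \scalarRig{y}{x'}$, so $\scalarRig{x}{x'} \leq \scalarRig{y}{x'}$ in the preorder of \cref{def:preorder-sum}. Since $\scalarRig{y}{x'} \in \ball$ and $\ball$ is downward closed, we get $\scalarRig{x}{x'} \in \ball$, i.e. $x \orthrel x'$. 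Thus $x \in \biorth{\points X} = \points X$.

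The only delicate point is the use of \ref{ax:pa} to assemble the doubly indexed family; the remaining steps are direct applications of the axioms.
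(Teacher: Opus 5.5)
Your proof is correct and follows the route the paper intends. The paper gives no separate proof, only the remark that downward closure of $\ball$ induces downward closure of $\rigpos$-spaces through $\points X = \biorth{\points X}$; you make explicit the needed monotonicity $\scalarRig{x}{x'} \leq \scalarRig{y}{x'}$ via PCR distributivity and \ref{ax:pa}, and you correctly identify that step as the one place where strongness of $\rigpos$ is used.
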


\begin{example} \label{ex:web-models-positive}
This definition and the PCRs of \cref{ex:pcr} 
describe many web models. \begin{itemize}
\item Relational model: take $\rigpos$ to be the complete boolean rig $\{0,1\}$ (the infinite 
sum $1+1+\dots$ is defined and equal to $1$), and $\ball = \rigpos$.
\item Weighted relational model \cite{LairdManzonettoMcCuskerPagani13}: take $\rigpos$ to be any 
complete rig (or continuous, if we want fixpoints), and $\ball = \rigpos$.
\item Probabilistic coherence spaces \cite{DanosEhrhard11}: 
take $\rigpos = \Rpos$ (recall \cref{ex:real}) 
and $\ball = [0, 1]$.
\item Coherence spaces \cite{Girard87}: take $\rigpos = \{0, \omega\}$ with $\omega + \omega$ undefined, 
and $\ball = \{0, \omega\}$.
\item Finiteness spaces (relational version) \cite{Ehrhard05}: take $\rigpos$ to be the 
finitary PCM induced by the finite boolean rig $\{0,1\}$ (the infinite 
sum $1+1+\dots$ is not defined), and $\ball = \rigpos$.
\item Köthe spaces (positive version) \cite{Ehrhard02}: take $\rigpos = \Rpos$ and $\ball = \Rpos$.
\end{itemize}

Different choices of PCR describe different computational properties: 
the countable non-determinism of the (weighted) relationnal model, 
the randomness of probabilistic coherence spaces, 
the determinism for coherence spaces, and the finite non-determinism
for finiteness spaces.
There are however some web models that cannot be described 
by an orthogonality relation, for instance the category non-uniform 
coherence spaces \cite{BucciarelliEhrhard01}.
\end{example}

\subsection{Tensor product of objects and morphisms}

\label{sec:positive-web-morphism}

We can now build the category of $\rigpos$-spaces following 
the standard procedure of web based models: we 
first build the tensor product of objects, and we deduce
the morphisms by duality.
For all vectors $x \in \vrigpos{X}$ and $y \in \vrigpos{Y}$,
define $x \sm y \in \rigpos^{\web{X} \times \web{Y}}$  
by $(x \sm y)_{a, b} = x_a y_b$. For all sets $F \subseteq \vrigpos{X}$
and $G \subseteq \vrigpos{Y}$, we set
$F \sm G = \set{x \sm y \suchthat x \in F, y \in G}$.
The \emph{tensor} $X \sm Y$ of two $\rigpos$-spaces
$X$ and $Y$ is defined as
\[ \web{X \sm Y} = \web{X} \times \web{Y} 
\qquad \points{(X \sm Y)} = \biorth{(\points X \sm \points Y)}. \]
The closure under double orthogonality is crucial to ensure that 
this is an $\rigpos$-space.
We now define the $\rigpos$-space
of linear maps between $\rigpos$-spaces by duality:
$X \linarrow Y = \orth{(X \sm \orth{Y})}$.

\begin{remark} \label{rem:linarrow-predual}
  $\points(X \linarrow Y) = \orth{\points(X \sm \orth{Y})} 
= \triorth{(\points X \sm \points\orth{Y})} = \orth{(\points X \sm \points\orth{Y})}$.
Thus, a matrix $s \in \rig^{\web{X \linarrow Y}}$ belongs 
to $\points(X \linarrow Y)$ if and only if 
$s \orthrel (x \sm y')$ for all $x \in \points X, y' \in \points \orth Y$.  
This predual characterization of $X \linarrow Y$ 
is the backbone of web semantics.
\end{remark}

\begin{restatable}{theorem}{compositionwebpos} 
  \label{thm:composition-web-pos}
For all $s \in \points(X \linarrow Y)$ and $t \in \points(Y \linarrow Z)$,
$t \cdot s$ is well-defined and belongs to 
$\points(X \linarrow Z)$.
Furthermore, the identity matrix $\id_X \in \vrigpos{X \linarrow X}$
belongs to $\points{(X \linarrow X)}$. 
Thus, we can define the category $\WEBPOS$ whose objects 
are the $\rigpos$-spaces, whose morphisms
are $\WEBPOS(X, Y) = \points(X \linarrow Y)$, and
whose composition is given by matrix multiplication.
\end{restatable}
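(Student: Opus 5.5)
We want: for $s\in\points(X\linarrow Y)$, $t\in\points(Y\linarrow Z)$, the product $t\cdot s$ is defined and lies in $\points(X\linarrow Z)$; and $\id_X\in\points(X\linarrow X)$. By \cref{rem:linarrow-predual}, the second claim of membership reduces to checking $t\cdot s\orthrel (x\sm z')$ for all $x\in\points X$ and $z'\in\points\orth Z$. The key intermediate fact is a \emph{transport lemma}: if $s\in\points(X\linarrow Y)$ and $x\in\points X$, then $s\cdot x$ is defined and $s\cdot x\in\points Y$.

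\textbf{Step 1 (transport lemma).} Fix $x\in\points X$ and $y'\in\points\orth Y$. By \cref{rem:linarrow-predual}, the double sum
\[
\scalarRig{s}{x\sm y'}=\sum_{(a,b)} s_{a,b}\,x_a y'_b
\]
is defined and lies in $\ball$. We then use strong partition associativity \ref{ax:pa} twice.
\begin{itemize}
\item Partitioning the index set by $b$ shows that, for each $b$, the family $(s_{a,b}x_a y'_b)_a$ is summable.
\item To obtain summability of $(s_{a,b}x_a)_a$ itself, we use that $\orth{\points Y}$ is a covering: pick $y'$ with $y'_b$ invertible, and multiply by $(y'_b)^{-1}$ using the PCR distributivity $(\sum)\cdot c\sumsub\sum(\cdot c)$. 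Hence $(s\cdot x)_b$ is defined.
\item For the same $y'$, PCR distributivity gives $(s\cdot x)_b\,y'_b = \sum_a s_{a,b}x_a y'_b$. Applying \ref{ax:pa} again, the outer sum over $b$ is defined and equals $\scalarRig{s}{x\sm y'}\in\ball$.
\end{itemize}
Thus $s\cdot x\orthrel y'$ for all $y'\in\points\orth Y$, so $s\cdot x\in\biorth{\points Y}=\points Y$.

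\textbf{Step 2 (definedness of $t\cdot s$).} The entry $(t\cdot s)_{a,c}=\sum_b s_{a,b}t_{b,c}$ must be defined. Choose $x\in\points X$ with $x_a$ invertible and $z'\in\points\orth Z$ with $z'_c$ invertible, using the covering hypotheses. By Step 1, $s\cdot x\in\points Y$, so $t\orthrel ((s\cdot x)\sm z')$. Consider the triple-indexed family
\[
\bigl(x_{a_0}\,s_{a_0,b}\,t_{b,c_0}\,z'_{c_0}\bigr)_{a_0,b,c_0}.
\]
Its summability follows by \ref{ax:pa} from the summability of the nested sum $\sum_{b,c_0}(\sum_{a_0}x_{a_0}s_{a_0,b})\,t_{b,c_0}z'_{c_0}$; here the inner sums are expanded using PCR distributivity, and \ref{ax:pa} with $\sumiff$ lets us flatten. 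Since $\sumiff$ holds, we may re-bracket: fixing $a_0=a$ and $c_0=c$ gives the summable subfamily $(x_a s_{a,b}t_{b,c}z'_c)_b$ by \cref{prop:subfamily-summable}. Multiplying by $x_a^{-1}(z'_c)^{-1}$ then yields the summability of $(s_{a,b}t_{b,c})_b$.

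\textbf{Step 3 (membership).} For arbitrary $x$ and $z'$, the same flattening of the triple sum shows that $\scalarRig{t\cdot s}{x\sm z'}$ is defined and equals
\[
\scalarRig{t}{(s\cdot x)\sm z'}\in\ball.
\]

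\textbf{Step 4 (identity).} For the identity, $\scalarRig{\id_X}{x\sm x'}$ equals $\scalarRig{x}{x'}$ by \cref{prop:zero-neutral} and \cref{prop:reindexing}, since the support of the family is the diagonal. This lies in $\ball$ for $x\in\points X$ and $x'\in\orth{\points X}$. Associativity and the unit laws of matrix multiplication then follow by the same flattening via \ref{ax:pa}, so $\WEBPOS$ is a category.

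\textbf{Main obstacle.} The delicate point is the bookkeeping of definedness: passing from the summability of the flattened multi-indexed family to that of the bracketed sums, and back. This is exactly where strongness (\ref{ax:pa} rather than \ref{ax:wpa}) is essential. A further subtlety is the use of invertible covering coefficients to cancel the scalar factors $x_a$, $y'_b$ and $z'_c$. To cancel $x_a$, I would multiply by $x_a^{-1}$ and use PCR distributivity applied to a singleton family together with associativity of the multiplication.
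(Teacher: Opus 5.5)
Your proof is correct and follows essentially the paper's route. Your transport lemma is the implication $(1)\Rightarrow(3)$ of \cref{prop:linarrow-characterization}, proved by the same covering-plus-\ref{ax:pa} argument, and, as in the paper, you get definedness of $t\cdot s$ by flattening a nested sum with \ref{ax:pa}, extracting a subfamily and cancelling an invertible covering coefficient. The differences are minor: you pair with $x\sm z'$ through $\scalar{t}{(s\cdot x)\sm z'}$ and conclude with \cref{rem:linarrow-predual}, whereas the paper applies $t$ to $s\cdot x$ and uses $(3)\Rightarrow(1)$, so you additionally need the covering of $\points{\orth{Z}}$ and must cancel $z'_c$, while the paper only inverts $x_{a_0}$; you also explicitly check that $\id_X$ is a morphism and that composition is associative, which the paper's proof leaves implicit.
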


\begin{proof}[Proof sketch of \cref{thm:composition-web-pos}]
The proof (see \cref{app:web-models-positive}) uses the predual characterization
described in \cref{rem:linarrow-predual}.
Now, observe that $\scalar{s}{x \sm y'}$ can be reordered as follows.

\begin{restatable}{lemma}{scalarrearanging} \label{prop:scalar-rearanging}
For all $s \in \rigpos^{\web{X} \times \web{Y}}$, $x \in \vrigpos X$ and 
$y \in \vrigpos{Y}$,  \begin{itemize}
  \item $\scalar{s \cdot x}{y} \sumsub \scalar{s}{x \sm y}$. Furthemore, 
  if $s \cdot x$ is defined, then 
  $\scalar{s \cdot x}{y} \sumiff \scalar{s}{x \sm y}$.
  \item $\scalar{\orth{s} \cdot y}{x} \sumsub \scalar{s}{x \sm y}$. 
  Furthemore, if $\orth{s} \cdot y$ is defined, then 
  $\scalar{\orth{s} \cdot y}{x} \sumiff \scalar{s}{x \sm y}$. 
\end{itemize}
where $\orth{s} \in \rigpos^{\web Y \times \web X}$ is the transpose of $s$,
$(\orth s)_{b, a} = s_{a, b}$
\end{restatable}
Unfortunately, the equivalences in \cref{prop:scalar-rearanging} 
requires $s \cdot x$ and $\orth{s} \cdot y$ 
to be defined. Fortunately, we can 
ensure that this condition often holds, 
by \cref{prop:covering-principle-sm} below.

\begin{restatable}[Covering principle]{lemma}{coveringtwo} 
  \label{prop:covering-principle-sm}
  Let $Q$ be a covering of $\vrigpos Y$. Let $x \in \vrigpos X$ and 
  $s \in \vrigpos{X \linarrow Y}$ such that 
  $\scalar{s}{x \sm y}$ is well-defined for all $y \in Q$.
  Then $s \cdot x$ is well-defined.
  
  Similarly, let $P$ be a covering of $\vrigpos X$. Let $y \in \vrigpos Y$ 
  and $s \in \vrigpos{X \linarrow Y}$ such that 
  $\scalar{s}{x \sm y}$ is well-defined for all $x \in P$.
  Then $\orth{s} \cdot y$ is well-defined.
\end{restatable}
Used together, \cref{prop:scalar-rearanging,prop:covering-principle-sm} 
give us a convenient 
characterization of
$\points{(X \linarrow Y)}$.
This characterization gives a unified account 
of a characterization that appears in all web models, 
such as probabilistic coherence spaces  
(Lemma~3 of~\cite{DanosEhrhard11}), finiteness 
spaces (Lemma~3 of~\cite{Ehrhard05}), and even 
the historical model of coherence 
spaces (see Definition~3.6 of~\cite{Girard87}).

\begin{restatable}{proposition}{linarrowcharacterization} 
  \label{prop:linarrow-characterization}
  Let $s \in \vrigpos{X \linarrow Y}$. The following are equivalent: 

  \noindent
  \begin{minipage}{0.3 \linewidth} 
    \begin{enumerate}
      \item $s \in \points{(X \linarrow Y)}$;
      \item $\orth{s} \in \points{(\orth Y \linarrow \orth X)}$;
    \end{enumerate}
  \end{minipage}
  \begin{minipage}{0.6 \linewidth}
    \begin{enumerate} \setcounter{enumi}{2}
      \item $\forall x \in \points X$, $s \cdot x$ is defined and belongs to $\points Y$;
      \item $\forall y' \in \orth{\points Y}$, $\orth{s} \cdot y'$ is defined 
        and belongs to $\points \orth{X}$.
    \end{enumerate}
  \end{minipage}
\end{restatable}
The proof of \cref{thm:composition-web-pos} uses this characterization.
First, we prove that $t \cdot s$ is well-defined with an 
argument similar to that of \cref{prop:covering-principle-sm}.
Then, we prove that $(t \cdot s) \cdot x \sumiff t \cdot (s \cdot x)$,
and we conclude that $t \cdot s \in \points{(X \linarrow Z)}$ 
by applying \cref{prop:linarrow-characterization} twice.
\end{proof}

To wrap up this section, we further improve the predual
characterization of $X \linarrow Y$
given in \cref{rem:linarrow-predual,prop:linarrow-characterization}.
This new predual characterization 
captures in a unified way various proof 
techniques that appear throughout the litterature
on web models.

\begin{restatable}[Predual characterization of morphisms]{theorem}{predualcharacterization} 
  \label{thm:predual-characterization}
  Let $X$ and $Y$ be two $\rigpos$-spaces. Let 
  $P \subseteq \vrigpos{X}$ and $Q \subseteq \vrigpos Y$ be two coverings
  such that $\biorth{P} = \points X$ and $\orth{Q} = \points Y$. 
  Then for all $s \in \vrigpos{X \linarrow Y}$, the following are equivalent.

  \noindent
  \begin{minipage}{0.3 \linewidth} 
    \begin{enumerate}
      \item $s \in \points{(X \linarrow Y)}$;
      \item $s \in \orth{(P \sm Q)}$;
    \end{enumerate}
  \end{minipage}
  \begin{minipage}{0.6 \linewidth}
    \begin{enumerate} \setcounter{enumi}{2}
      \item for all $x \in P$, $s \cdot x$ is defined and belongs to $\points Y$;
      \item for all $y' \in Q$, $\orth s \cdot y'$ is defined and belongs to $\points \orth X$.
    \end{enumerate}
  \end{minipage}
\end{restatable}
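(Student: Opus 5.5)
I would prove the cycle (1)$\Rightarrow$(2)$\Rightarrow$(3)$\Rightarrow$(1) and separately (2)$\Leftrightarrow$(4) by the symmetric argument. Throughout I use two facts. First, $P \subseteq \biorth P = \points X$ and $Q \subseteq \orth{\orth Q}$, where $\orth{\orth Q} = \orth{\points Y} = \points\orth Y$. Second, I use the scalar rearrangement of \cref{prop:scalar-rearanging} together with the covering principle \cref{prop:covering-principle-sm}.

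For (1)$\Rightarrow$(2), I apply \cref{rem:linarrow-predual}: $s \orthrel (x \sm y')$ holds for all $x\in\points X$ and $y'\in\points\orth Y$. This contains $P \sm Q$ by the inclusions above.

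For (2)$\Rightarrow$(3), fix $x\in P$. For every $y'\in Q$, $\scalar{s}{x\sm y'}$ is defined, and $Q$ is a covering, so \cref{prop:covering-principle-sm} gives that $s\cdot x$ is defined. Then \cref{prop:scalar-rearanging} yields $\scalar{s\cdot x}{y'} = \scalar{s}{x\sm y'}\in\ball$ for all $y'\in Q$. Hence $s\cdot x\in\orth Q=\points Y$. The step (2)$\Rightarrow$(4) is identical, using that $P$ is a covering and the transpose half of both lemmas. This gives $\orth s\cdot y'\in\orth P$, and $\orth P=\triorth P=\orth{\points X}=\points\orth X$. For (3)$\Rightarrow$(2) and (4)$\Rightarrow$(2), I use the $\sumsub$ direction of \cref{prop:scalar-rearanging}, which needs no definedness hypothesis. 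If $s\cdot x\in\points Y=\orth Q$, then $\scalar{s\cdot x}{y'}$ is defined and in $\ball$, so $\scalar{s}{x\sm y'}$ equals it. Thus (2), (3) and (4) are equivalent.

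The main step is (2)$\Rightarrow$(1), since $P$ only generates $\points X$ under biorthogonality. I would use condition (4): for each $y'\in Q$, $\orth s\cdot y'$ is defined and lies in $\points\orth X=\orth P$. For any $x\in\points X$, the vector $x$ lies in $\biorth P$, so $\scalar{\orth s\cdot y'}{x}$ is defined and in $\ball$. By \cref{prop:scalar-rearanging}, this scalar equals $\scalar{s}{x\sm y'}$. So $s\in\orth{(\points X\sm Q)}$, i.e.\ condition (2) holds with $P$ replaced by $\points X$. Now I rerun the (2)$\Rightarrow$(3) argument with the covering $\points X$ in place of $P$; it is a covering by \cref{def:rigpos-space}. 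This gives $s\cdot x\in\points Y$ for all $x\in\points X$, which is condition (3) of \cref{prop:linarrow-characterization}, so $s\in\points(X\linarrow Y)$.

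The subtle point is that each rearrangement must only be applied when the relevant matrix–vector product is known to be defined. This is exactly why the covering hypotheses on $P$, $Q$ and $\points X$ are needed. I would double-check that Lemma \cref{prop:covering-principle-sm} requires only the definedness of the scalars, not their membership in $\ball$.
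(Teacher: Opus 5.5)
Your proof is correct and follows essentially the same route as the paper: (1)$\Rightarrow$(2) by inclusion, then (2)$\Leftrightarrow$(3) and (2)$\Leftrightarrow$(4) via the covering principle and scalar rearrangement, and finally a second pass through the same argument with one test set enlarged to reach (1). The only difference is a mirror choice in that last step: you use (4) to enlarge $P$ to $\points X$ and rerun (2)$\Rightarrow$(3), concluding by item 3 of \cref{prop:linarrow-characterization}, whereas the paper uses (3) to enlarge $Q$ to $\points\orth Y$ and reruns (2)$\Rightarrow$(4), concluding by item 4.
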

The proof of~\cref{thm:predual-characterization} 
relies heavily on \cref{prop:scalar-rearanging,prop:covering-principle-sm}
(see~\cref{app:web-models-positive}). 
The following result immediately follows from \cref{thm:predual-characterization},
using that $X \sm Y = \orth{(X \linarrow \orth{Y})}$.

\begin{corollary} \label{cor:predual-characterization}
  Let $P \subseteq \vrigpos{X}$ and $Q \subseteq \vrigpos Y$ be two coverings
such that $\biorth{P} = \points X$ and $\biorth{Q} = \points Y$.
Then $\points{(X \sm Y)} = \biorth{(P \sm Q)}$.
\end{corollary}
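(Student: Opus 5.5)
The corollary should come out of \cref{thm:predual-characterization} by duality. Since $X \sm Y = \orth{(X \linarrow \orth Y)}$, it suffices to show that $\points{(X \linarrow \orth Y)} = \orth{(P \sm Q)}$. Taking orthogonals of both sides then gives
\[ \points{(X \sm Y)} = \orth{\points{(X \linarrow \orth Y)}} = \biorth{(P \sm Q)}. \]
The identity $X \sm Y = \orth{(X \linarrow \orth Y)}$ itself is an unfolding of definitions. Indeed, $\points(X \linarrow \orth Y) = \orth{\points(X \sm \biorth Y)}$. Because $\points Y$ is biorthogonally closed, $\biorth Y = Y$, so this equals $\triorth{(\points X \sm \points Y)} = \orth{(\points X \sm \points Y)}$. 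Applying $\orth{(\cdot)}$ once more yields $\biorth{(\points X \sm \points Y)} = \points(X \sm Y)$.

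To obtain $\points{(X \linarrow \orth Y)} = \orth{(P \sm Q)}$, I apply \cref{thm:predual-characterization}, specifically the equivalence (1)$\Leftrightarrow$(2), to the pair of spaces $X$ and $\orth Y$. For this I need two coverings: $P$ with $\biorth P = \points X$, which is given, and some $Q'$ with $\orth{Q'} = \points \orth Y$. Taking $Q' = Q$ works, since $\orth Q = \orth{\biorth Q} = \triorth{Q}$. Here I use the standard fact $\triorth F = \orth F$ recalled at the start of the section, together with $\biorth Q = \points Y$, so that $\orth Q = \orth{\points Y} = \points \orth Y$. The hypotheses on $P$ and $Q$ being coverings are exactly what the theorem requires.

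With these hypotheses verified, the theorem gives, for every $s \in \vrigpos{X \linarrow \orth Y}$, that $s \in \points(X \linarrow \orth Y)$ if and only if $s \in \orth{(P \sm Q)}$. This is the desired set equality, and the corollary follows. The only point needing care is matching the theorem's asymmetric hypotheses (biorthogonal for $P$, single orthogonal for $Q$) to the dual space $\orth Y$, which is handled by the triple-orthogonal identity. I expect no real obstacle beyond this.
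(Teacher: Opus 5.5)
Your proposal is correct and follows the paper's own route. The paper states this corollary as an immediate consequence of \cref{thm:predual-characterization} together with the identity $X \sm Y = \orth{(X \linarrow \orth{Y})}$. You apply the equivalence (1)$\Leftrightarrow$(2) to $X$ and $\orth{Y}$, using $\orth{Q} = \triorth{Q} = \points\orth{Y}$, and then take orthogonals, which is exactly that argument with the bookkeeping made explicit.
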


\subsection{The linear logical structure of web models}
\label{sec:positive-web-ll}

We describe here the objects of the linear logical structure
of $\WEBPOS$.
The rest of the structure (symmetric monoidal structure, closure, 
resource comonad, and Seely isomorphisms) is the usual structure of web based models  
(summarized in \cref{fig:web-ll} given in the appendix).
The predual characterization of \cref{thm:predual-characterization}
ensures that all the matrices describing the LL structure (in~\cref{fig:web-ll}) 
are morphisms.
We define the exponential of 
an $\rigpos$-space $X$ by: 
\begin{align*}
\web{\bang X} &= \set{m \in \Mfin{\web X} \suchthat \exists x \in \points X 
\text{ such that } \forall a \in m, x_a \text{ is invertible}}\\
\points{(\bang X)} &= \biorth{\set{\prom x \suchthat x \in \points X}} 
\text{ where } (\prom x)_{[a_1, \ldots, a_n]} = x_{a_1} \cdots x_{a_n} 
\end{align*}
Observe that $\web{\bang X} \neq \Mfin{\web X}$ in general because we want 
$\points{(\bang X)}$ to be a covering. Indeed, in coherence spaces
the web of $\bang X$ is the set of multisets whose support is a clique.

By design, the category is $*$-autonomous, with 
dualizing object $\bot = \orth{\smone}$.
It has products and coproducts. The webs are given by
$\web{\withFam X_i} = \web{\plusFam X_i} = 
\bigcup_{i \in I} \set{i} \times X_i$.
We can define for all $i \in I$ injections 
$\inj_i \in \vrigpos{X_i \linarrow \left(\plusFam X_i\right)}$ 
and projections $\proj_{i} \in \vrigpos{\left(\withFam X_i \right) \linarrow X_i}$
by $(\inj_i)_{a, (j, a')} = (\proj_i)_{(j, a), a'} = \kronecker i j \kronecker a {a'}$.
We can then define 
\[ \points{\Big(\withFam X_i \Big)} = \orth{\set{\inj_i \cdot x_i \suchthat 
x_i \in \points \orth{X_i} }} \qquad 
\points{\Big(\plusFam X_i \Big)} = \biorth{\set{\inj_i \cdot x_i \suchthat 
x_i \in \points X_i }} \]
and we can check that these are products and coproducts.
Observe in particular that for all $x \in \vrigpos{\withFam X_i}$,
$\scalar{\proj_i \cdot x}{x'} \sumiff \scalar{x}{\inj_i \cdot x'}$
so $x \in \points \left(\withFam X_i\right)$ if and only if 
for all $i \in I$, $\proj_i \cdot x \in \points X_i$.
In particular, $\points \left(\plusFam X_i\right) 
\subseteq \points \left(\withFam X_i \right)$, but this is not an equality 
in general. The initial and terminal object is 
$\top = (\emptyset, \set{0})$.

\begin{remark} 
  \label{rem:lafont}
  We conjecture that the exponential we describe here is always 
  Lafont \cite{Lafont88}.
  Except for Köthe spaces, all the web models given in \cref{ex:web-models-positive}
  are known to be Lafont, but the proof changes drastically depending 
  on the model: the proof that 
  probabilistic coherence spaces are Lafont 
  \cite{CrubilleEhrhardPaganiTasson17} uses a generic 
  formula  that fails 
  in finiteness spaces~\cite{MelliesTabareauTasson09}. 
\end{remark}

\section{Web models based on absolute convergence}

\label{sec:negative-web}

We now describe web models based on PCRs that are not strong, but whose 
summability behaves similarly to the absolute convergence 
on real numbers (\cref{ex:real}). We give two examples:
Köthe spaces \cite{Ehrhard02}, and the 
category of modules over a finiteness space \cite{Ehrhard05}.

\begin{definition} \label{def:absolute-pcr}
A PCR $\rig$ is absolute
if there exists a strong PCR $\rigpos$ and a function 
$\abs . : \rig \arrow \rigpos$ (called the absolute value) such that \begin{itemize}
\item $\abs{x} = 0$ if and only if $x = 0$;
\item $\abs{.}$ is a monoid morphism for the multiplicative part: 
$\abs{x y} = \abs{x} \abs{y}$ and $\abs{1} = 1$;
\item  Triangle inequality: $\family{x_a}$ is summable in $\rig$ if and 
only if $\family{\abs{x_a}}$ is summable in $\rigpos$, and 
$\abs{\sum_{a \in A} x_a} \leq \sum_{a \in A} \abs{x_a}$.
\end{itemize}
\end{definition}

We generalize to inequalities the notations of \cref{notation:sum-def}.
For example, the triangle inequality stated in \cref{def:absolute-pcr} 
above is described by the equation
$\abs{\sum_{a \in A} x_a} \leqsumiff \sum_{a \in A} \abs{x_a}$.

\begin{example}
The PCR of real numbers is absolute, taking $\rigpos = \Rpos$ and 
$\abs{.}$ to be the regular absolute value.
The finitary PCR induced by a rig 
(\cref{ex:pcr}) is also absolute, taking
$\rigpos$ to be the finitary PCR induced by the boolean rig, and 
$\abs{x} = \kronecker{x}{1}$.
\end{example}

The absolute value $\abs{.}$ extends to a function 
$\posfunction : \vrig{X} \arrow \vrigpos{X}$ for all web $\web{X}$,
defined by $\pos{x}_a = \abs{x_a}$ for all $a \in \web{X}$.
By triangle inequality, for all
$s \in \vrig{X \linarrow Y}$ and $t \in \vrig{Y \linarrow Z}$,
and for all $x, y \in \vrig X$, 
$\pos{t \cdot s} \leqsumiff \pos{t} \cdot \pos{s}$ and
$\abs{\scalar{x}{y}} \leqsumiff \scalar{\pos{x}}{\pos{y}}
$.

\begin{definition} \label{def:module-rig}
For all $\rigpos$-space $X$, let 
$\semimod X = \set{x \in \vrig X \suchthat \pos{x} \in \points X}$.
We define the category $\WEB$ whose objects are the $\rigpos$-spaces, 
whose morphisms from $X$ to $Y$ are the matrices 
in $\semimod{X \linarrow Y}$, and whose composition is matrix multiplication.
\end{definition}
For all $s \in \semimod {X \linarrow Y}$ and 
$t \in \semimod{Y \linarrow Z}$, 
we have that 
$\pos{t \cdot s} \leqsumiff \pos{t} \cdot \pos{s} \in \points{(X \linarrow Z)}$
so $t \cdot s$ is well-defined and belongs to 
$\semimod{X \linarrow Z}$ by downward closure 
of $\points{(X \linarrow Z)}$ (\cref{prop:downward-closure}).
We can check that $\WEB$ is a model of linear logic, with the 
same structure as the one described in \cref{sec:positive-web-ll}.
These matrices are all morphisms in $\WEB$, 
because $\pos{x \sm y} = \pos{x} \sm \pos{y}$,
$\pos{\prom x} = \prom{\pos x}$ and $\pos{\bang s} \leqsumsubinv \bang \pos{s}$.

\begin{example} 
  \textbf{Finiteness spaces.} The category of modules over a finiteness spaces given in section 
4 of \cite{Ehrhard05} is an instance of our \cref{def:module-rig}
in which $\rig$ is a finitary PCR.

\textbf{Köthe spaces.} Köthe spaces are defined in \cite{Ehrhard02} 
a bit differently, by a generalization of the orthogonality 
relation given in \cref{sec:positive-web} and of the 
notion of $\rigpos$-space given in \cref{def:rigpos-space} to the absolute 
PCR of real numbers. 
In the end though, by Proposition 2.12 of~\cite{Ehrhard02} 
the category of Köthe spaces coincides with our definition 
of $\WEB[\R]$.
In fact, we can check that a  
Köthe space $X = (\web X, \kothepoints X)$ in the sense of 
Definition 2.8 of \cite{Ehrhard02} is exactly the 
same as the module $\semimod[\R]{X'}$ where $X'$ is the 
$\Rpos$-space given by $X' = (\web X, \set{\pos{x} \suchthat x \in \kothepoints X})$.

\end{example}

\section{Summability in web models}

\label{sec:summability-web}

We prove in this section 
that both the category $\WEBPOS$ defined in \cref{sec:positive-web}
and the category $\WEB$ defined in \cref{sec:negative-web} have 
a PCM structure on their morphisms.
In fact, we prove that for all $\rigpos$-space $X$,
both $\points X$ and $\semimod{X}$ have a PCM structure.
We start with $\points X$.

\begin{definition} \label{def:summability-web-positive}
Let $X$ be an $\rigpos$-space. 
A family $(x(i))_{i \in I}$ of elements of $\points X$ is summable
 if their pointwise sum $\left(\sum_{i \in I} x(i)\right)_a = 
 \sum_{i \in I} x(i)_a$ 
 is defined and belong to $\points X$.
\end{definition}

\begin{proposition}
  Let $X$ be an $\rigpos$-space. Then $\points X$ is a strong PCM.
\end{proposition}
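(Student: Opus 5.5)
We have to check \ref{ax:unary} and the strong partition associativity axiom \ref{ax:pa} for the pointwise sum of \cref{def:summability-web-positive}, and also that $\points X$ is non-empty. Non-emptiness is immediate: the zero vector belongs to $\points X$ by \cref{prop:downward-closure}, because $0 \leq x$ for every $x$ and $\points X$ is a covering, hence non-empty when $\web X \neq \emptyset$. If $\web X = \emptyset$, then $\points X = \orth{\orth{\points X}}$ contains the empty vector anyway, since every scalar product over an empty web is the empty sum $0 \in \ball$.

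For \ref{ax:unary}, a singleton family $(x)$ has pointwise sum $x$ by \ref{ax:unary} in $\rigpos$ applied in each coordinate, and $x \in \points X$ by assumption.

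The main work is \ref{ax:pa}. Let $(x(j))_{j \in J}$ be a family in $\points X$ and let $\{J_i\}_{i \in I}$ be a partition of $J$.

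For the direction $\sumsub$, assume the family is summable, with sum $z \in \points X$. Coordinatewise, \ref{ax:pa} in $\rigpos$ gives that for every $a$ and $i$ the sum $y(i)_a = \sum_{j \in J_i} x(j)_a$ is defined, and that $\sum_i y(i)_a = z_a$. It remains to see that each $y(i)$ belongs to $\points X$, so that the inner sums are sums in the PCM $\points X$. This follows from downward closure (\cref{prop:downward-closure}) once we show $y(i) \leq z$ pointwise. In $\rigpos$, \ref{ax:pa} gives $z_a = y(i)_a + \sum_{i' \neq i} y(i')_a$, so $y(i)_a \leq z_a$. Hence $y(i) \in \points X$, the outer family $(y(i))_{i \in I}$ has pointwise sum $z \in \points X$, and the two sums agree.

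For the converse, assume each inner family is summable in $\points X$ with sum $y(i)$, and that $(y(i))_{i\in I}$ is summable in $\points X$ with sum $z$. Coordinatewise \ref{ax:pa} in $\rigpos$ (the $\sumsubinv$ direction) shows that $\sum_{j\in J} x(j)_a$ is defined and equals $z_a$. Since $z \in \points X$, the whole family is summable in $\points X$ with the same sum.

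The only genuinely non-formal point is that the inner partial sums land in $\points X$ in the first direction. Strength of $\rigpos$ is what makes this work: it yields $y(i) \leq z$, and downward closure then finishes the argument. Everything else reduces to the coordinatewise axioms of $\rigpos$.
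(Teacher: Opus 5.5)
Your proof is correct and follows the paper's argument: coordinatewise \ref{ax:pa} in $\rigpos$ handles both directions of the equivalence, and downward closure (\cref{prop:downward-closure}) keeps the inner partial sums in $\points X$, which is exactly the point the paper singles out. Two small remarks on what you add: $y(i) \leq z$ already follows from \ref{ax:wpa}, since strength of $\rigpos$ is really used only in the converse direction and inside \cref{prop:downward-closure}; and your empty-web case tacitly assumes $\ball \neq \emptyset$, which the paper also leaves implicit.
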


\begin{proof}
  It immediately follows from \ref{ax:pa} on $\rigpos$ that for all 
  family $(x(i))_{i \in I}$ of $\vrigpos X$ 
  and all partition $\set{I_k}_{k \in K}$ of $I$,
  $\sum_{i \in I} x(i) \sumiff 
  \sum_{k \in K} \sum_{i \in I_k} x(i)$.
  To conclude, we still need to ensure that whenever 
  $(x(i))_{i \in I}$ is summable in $\points X$, then 
  for all $k \in K$,
  $(x(i))_{i \in I_k}$ is summable in $\points X$.
  This is an immediate consequence of the downward closure of 
  $\points X$ (\cref{prop:downward-closure}).
\end{proof}

\begin{remark} \label{rem:semimodules}
  If the set $\ball$ used to define orthogonality 
  is stable under multiplication, then 
  $\ball$ is a strong PCR and  
  $\rigpos$-spaces are $\ball$-modules
  with orthogonal basis \cite{TsukadaKazuyuki22}.
\end{remark}

  Similarly to morphisms, summability can be characterized in terms
  of predual. 
  The proof is very similar to the proof of 
  \cref{prop:covering-principle-sm} and of 
  \cref{thm:predual-characterization}.

  \begin{proposition} \label{prop:summability-predual}
    Let $P$ be a covering of $\vrigpos X$ such that 
    $\orth{P} = \points X$. Then a family
    $(x(i))_{i \in I}$ of $\points X$ is summable if and only if 
    for all $x' \in P$, $\scalar{\sum_{i \in I} x(i)}{x'}$ is defined and 
    is in $\ball$.  
  \end{proposition}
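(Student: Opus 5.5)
The forward implication is immediate. If $(x(i))_{i \in I}$ is summable in $\points X$, then $x = \sum_{i \in I} x(i)$ is defined pointwise and belongs to $\points X = \orth{P}$. Hence for every $x' \in P$ the scalar product $\scalar{x}{x'}$ is defined and lies in $\ball$.

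For the converse, we assume that for every $x' \in P$ the expression $\scalar{\sum_{i \in I} x(i)}{x'}$ is defined and lies in $\ball$. The plan has two steps.
\begin{enumerate}
\item Show that the pointwise sum $x = \sum_{i \in I} x(i)$ is well-defined as a vector.
\item Observe that, once $x$ exists, the hypothesis says exactly $x \in \orth{P} = \points X$, so the family is summable in the sense of \cref{def:summability-web-positive}.
\end{enumerate}

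The only real content is therefore the first step. It is an analogue of the covering principle (\cref{prop:covering-principle-sm}). The hypothesis must be read as saying that the double-indexed family $(x(i)_a x'_a)_{(i,a) \in I \times \web X}$ is summable in $\rigpos$. Indeed, $\scalar{\sum_i x(i)}{x'}$ is the expression $\sum_{a} (\sum_i x(i)_a) x'_a$. By the PCR distributivity of multiplication over sums, this is $\sumsub$ the sum $\sum_{a}\sum_{i} x(i)_a x'_a$. Strong partition associativity \ref{ax:pa} in $\rigpos$ then shows that this is $\sumiff$ the sum over $I \times \web X$.

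Now fix $a \in \web X$. Since $P$ is a covering, there is some $x' \in P$ with $x'_a$ invertible. By \cref{prop:subfamily-summable}, the subfamily $(x(i)_a x'_a)_{i \in I}$ is summable. Multiplying by the constant $(x'_a)^{-1}$ and using distributivity, we get
\[
(x'_a)^{-1} \cdot \sum_{i \in I} x(i)_a x'_a \;\sumsub\; \sum_{i \in I} x(i)_a ,
\]
so the family $(x(i)_a)_{i \in I}$ is summable in $\rigpos$. This holds for every $a$, so the pointwise sum $x$ is defined.

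We must still check that the scalar product computed via the intermediate sum agrees with the one in the hypothesis. Re-applying \ref{ax:pa}, we obtain $\scalar{x}{x'} \sumiff \sum_{(i,a)} x(i)_a x'_a$ for every $x' \in P$, and this value lies in $\ball$. Hence $x \in \orth{P} = \points X$, which completes the converse.

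The main obstacle is the interpretation of the hypothesis: the pointwise sum is not known to exist a priori, so the statement ``$\scalar{\sum_i x(i)}{x'}$ is defined'' must be understood as definedness of the rearranged sum over $I \times \web X$. The key move is to pass to this unfolded double sum, exactly as in the proof of the covering principle. This is where strongness of $\rigpos$ (\ref{ax:pa}) is essential, because it lets us move back and forth between the nested and the flat sums.
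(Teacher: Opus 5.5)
Your proof is correct and is essentially the argument the paper has in mind. The paper only remarks that the proof mirrors the covering principle (\cref{prop:covering-principle-sm}) and the predual characterization (\cref{thm:predual-characterization}). That is exactly what you do: you use a vector of $P$ with invertible $a$-component to obtain the pointwise sums, then \ref{ax:pa} to identify $\scalar{x}{x'}$ with the flattened double sum.

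By working from the flattened sum, you also prove the stronger, non-trivial reading of the hypothesis. This subsumes the literal reading, under which the statement reduces to unfolding $\points X = \orth{P}$.
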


We now define the summability in $\semimod{X}$.

\begin{definition} \label{def:summability-web}
  Let $X$ be an $\rigpos$-space.
  A family $(x(i))_{i \in I}$ of elements of $\semimod X$ is summable
  if $(\pos{x(i)})_{i \in I}$ is summable in $\points X$ (in the 
  sense of \cref{def:summability-web-positive}). 
\end{definition}
By triangle inequality, $\pos{\sum_{i \in I} x(i)} \leqsumiff 
\sum_{i \in I} \pos{x(i)} \in \points X$
so the pointwise sum $\sum_{i \in I} x(i)$ is well-defined 
and belongs to $\semimod{X}$ by downward closure
of $\points{X}$.

  \begin{proposition}
  The pointwise sum defined above is a PCM on $\semimod X$. 
  \end{proposition}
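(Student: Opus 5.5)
We need to check the two axioms for the pointwise sum on $\semimod X$, where a family $(x(i))_{i\in I}$ is summable precisely when $(\pos{x(i)})_{i \in I}$ is summable in $\points X$. That is, $\sum_i \pos{x(i)}$ is defined pointwise in $\rigpos$ and belongs to $\points X$. The sum itself is the pointwise sum in $\rig$, which exists by the triangle inequality as observed just above.

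\ref{ax:unary} is immediate. For $x \in \semimod X$ we have $\pos x \in \points X$, and a singleton family in $\points X$ is summable with sum $\pos x$ by \ref{ax:unary} in $\rigpos$. The pointwise singleton sum in $\rig$ is $x$ by \ref{ax:unary} in $\rig$.

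For \ref{ax:wpa}, let $(x(i))_{i \in I}$ be summable in $\semimod X$ and let $\set{I_k}_{k\in K}$ be a partition of $I$. The argument has three steps.

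First, each block is summable. Since $\points X$ is a strong PCM (previous proposition), $(\pos{x(i)})_{i \in I_k}$ is summable in $\points X$, by \cref{prop:subfamily-summable} or directly by downward closure. Hence $(x(i))_{i\in I_k}$ is summable in $\semimod X$; write $y(k) = \sum_{i \in I_k} x(i)$.

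Second, the family $(y(k))_{k \in K}$ is summable in $\semimod X$. Componentwise, the triangle inequality gives $\pos{y(k)} \leq \sum_{i\in I_k}\pos{x(i)} =: z(k)$. By \ref{ax:pa} in $\rigpos$, $\sum_k z(k) = \sum_i \pos{x(i)} \in \points X$. The difficulty is that $\leq$ is only the existential preorder, so we need the sum of the dominated family $\pos{y(k)}$ to be defined. To get it, write $z(k) = \pos{y(k)} + w(k)$ pointwise. Using \ref{ax:pa} in $\rigpos$ (strength is essential here), summability of $\sum_k (\pos{y(k)} + w(k))$ gives summability of the doubly indexed family and hence of its subfamily $(\pos{y(k)})_k$. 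Moreover $\sum_k \pos{y(k)} \leq \sum_k z(k)$, so $\sum_k \pos{y(k)}$ lies in $\points X$ by \cref{prop:downward-closure}. Thus $(y(k))_k$ is summable in $\semimod X$.

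Third, the sums agree. This is pointwise: for each $a\in\web X$, the family $(x(i)_a)_i$ is summable in $\rig$, so \ref{ax:wpa} in $\rig$ gives $\sum_i x(i)_a = \sum_k \sum_{i\in I_k} x(i)_a = \sum_k y(k)_a$. The definedness of the right-hand side in $\rig$ is also guaranteed by the absolute summability established in the second step.

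The main obstacle is the second step: passing from a preorder bound $\pos{y(k)} \le z(k)$ to summability of the smaller family. This relies on strength of $\rigpos$ (positivity and subfamily closure after splitting $z(k)$). It may be cleaner to isolate it as a lemma: in a strong PCM, if $u_k \le v_k$ and $(v_k)$ is summable, then $(u_k)$ is summable with $\sum u_k \le \sum v_k$. The proof of that lemma is exactly the \ref{ax:pa} splitting argument above.
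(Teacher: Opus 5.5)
Your proof is correct and follows the paper's argument: blocks are summable by subfamily closure, then the triangle inequality $\pos{\sum_{i\in I_k} x(i)} \leq \sum_{i \in I_k}\pos{x(i)}$, then \ref{ax:pa} in the strong PCM $\points X$, then downward closure (\cref{prop:downward-closure}). Your second step usefully makes explicit the domination lemma for strong PCMs, which the paper compresses into the single relation $\leqsumsubinv$; you also check \ref{ax:unary} and the pointwise equality of the sums via \ref{ax:wpa} in $\rig$, both of which the paper leaves implicit.
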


  \begin{proof}
  Let $(x(i))_{i \in I}$ be summable in $\semimod X$, and
  $\set{I_k}_{k \in K}$ be a partition of $A$.
  For all $k \in K$, $(\pos{x(i)})_{i \in I_k}$ is summable in $\points X$ 
  by \cref{prop:subfamily-summable}. Thus, 
  $(x(i))_{i \in I_k}$ is summable in $\semimod X$: 
  \[ \sum_{k \in K} \pos{\sum_{i \in I_k} x(i)}
  \eqnum{1}{\leqsumsubinv} \sum_{k \in K} \sum_{i \in I_k} \pos{x(i)} 
  \eqnum{2}{\sumiff} \sum_{a \in A} \pos{x_a} \in \points X \] 
 where  \texteqnum{1} holds by triangle inequality and  \texteqnum{2} by \ref{ax:pa} on 
  the strong PCM $\points X$.
  We conclude that $(\sum_{i \in I_k} x(i))_{k \in K}$ 
  is summable in $\semimod{X}$ by downward closure of $\points X$.
  \end{proof}

\part{Taylor expansion}

\label{part:taylor}
We generalize coherent Taylor expansion~\cite{EhrhardWalch25} to models 
of LL enriched over \emph{non-necessary strong} PCMs (\cref{def:pcm-category}),
%
that subsume the monoids of
differential categories~\cite{BluteCockettSeely06}.
First, let us give a bird's-eye view of the theory developed 
in \cite{EhrhardWalch25}. 
A \emph{summability structure} on $\catLL$ is, in essence, a functor 
$\S$ with projections $\Sproj_i : \S \naturalTrans \idfun$
such that 
$\sequence{f_i : X \arrow Y}$ is summable 
if and only if there exists 
$f : X \arrow \S Y$ such that $\Sproj_i \compl f = f_i$.

\begin{remark} \label{rem:ss-product-coproduct}
  Defining summability by an endofunctor is 
  standard in algebraic programming 
  semantics. 
  However, the object $\S X$ is often a coproduct, as in 
  partially additive categories~\cite{ManesArbib86}.
  Summability structures are more general, and enjoy an epi-mono factorization (see~\cref{app:ss-coproduct-product}):
  $\begin{tikzcd}
        {\bigoplus_{i \in \N} X } & {\S X} & {\prod_{i \in \N} X}
        \arrow["e", from=1-1, to=1-2]
        \arrow["{m}", from=1-2, to=1-3]
    \end{tikzcd}$.
  %
  When sums are total, this factorization collapses 
  into an equality, and thus the category
  has \emph{countable biproducts}.
\end{remark}

The axioms of (strong) PCMs then endow the summability structure
$\S$ with a canonical \emph{bimonad} 
structure (a monad 
and a comonad structure that interact nicely)~\cite{MesablishviliWisbauer11}.
Furthermore, the compatibility between the PCM structure of $\catLL$ 
and its LL structure boils down to categorical structures 
on the monad $\S$: a lax monoidal structure (wrt $\sm$), 
a strong monoidal structure (wrt $\with$), and 
an invertible pointwise structure 
$\S(X \linarrow Y) \arrow (X \linarrow \S Y)$~\cite{Kock71}.

 In quantitative semantics, morphisms  in the Kleisli category
of the resource comonad are seen as analytic maps.
 Taylor expansion is thus described
as a functor $\T$ on $\klexp$~\cite{EhrhardWalch25}. The action on object
$\T X = \S X$ ensures that all  sums in the Taylor 
expansion are well-defined. The Taylor expansion 
$\T f \in \klexp(\S X, \S Y)$ of
$f \in \klexp(X, Y)$ is intuitively the function:
\begin{equation} \label{eq:coherent-Tinf-explicit}
    \T f : \sequence{x_i} \mapsto 
    \bigg(\sum_{1\leq k\leq j} 
    \ \sum_{i_1 + \cdots + i_k = j}
\frac{1}{\factorial k} \hod{f}{k}(x_0)
\cdot (x_{i_1}, \ldots, x_{i_k}) \bigg)_{j \in \N}. 
\end{equation}
The $j$-th coefficient of $\T f$ corresponds to 
the degree $j$ component of the Taylor expansion of 
$f\left(\sum_{i \in \N} x_i \formalvar^i\right)_{j \in \N}$. 
In particular, $\T f (x, u, 0, \ldots) = \left(\frac{1}{\factorial j} \hod f j 
(x) \cdot (u, \ldots, u) \right)$, so $\T f$ captures the Taylor expansion 
of $f$.
The intuitive~\cref{eq:coherent-Tinf-explicit} is validated by
web model examples in~\cite{EhrhardWalch25}. Furthermore,  this equation is used in \cite{Walch25} to build
$\T$ in cartesian differential 
categories~\cite{BluteCockettSeely09}.

Now, observe that for all $f \in \catLL(X, Y)$, $\Der f \in \catLL(\bang X, Y)$
is a linear map, so we should have that 
$\T(\Der f)(x_i)_{i \in \N} = \left(\Der f(x_i) \right)_{i \in \N}$,
thus $\T(\Der f) = \Der(\S f)$. We say that 
$\T$ is an \emph{extension} of $\S$ to $\klexp$ \cite{PowerWatabane02}.
Such extension boils down to  
a \emph{distributive law} \cite{PowerWatabane02} between 
the functor $\S$ and the comonad $\bangtext$. It consists 
of a natural transformation 
$\Sdl : \bang \S \naturalTrans \S \bang$ subject to a compatibility
condition with $\der$ and $\dig$.
Additionnal compatibility 
conditions between $\Sdl$ and the structure on 
$\S$ (bimonad structure, monoidal structure with regard to 
$\sm$ and $\with$, and projection $\Sproj_0$) ensure that the structure 
on $\S$ extends to  $\T$. This structure on $\T$ describes with 
categorical equations the standard properties of the differential 
calculus~\cite{Walch25}.

Summability is often given by $\S = \Dbimon 
\linarrow \_$, with $\Dbimon = \withFam 1$ (see~\cite{EhrhardWalch25}).
The object $\Dbimon$ has a  bimonoid structure
that is \emph{mate} \cite{KellyStreet74} to 
the bimonad structure of $\S$ (see \cref{fig:mate}). 
The mate isomorphism implies that the existence of the distributive law 
$\Sdl : \bang \S \naturalTrans \S \bang$ is equivalent to the existence 
of a $\bang$-coalgebra $\Dbimonca \in \catLL(\Dbimon, \bang \Dbimon)$ 
compatible with the bimonoid structure of $\Dbimon$.

\paragraph{Contribution.}
We adapt~\cite{EhrhardWalch25} to a setting 
where PCMs are not necessarily strong. 
In~\cref{sec:taylor-pcm}, we present summability structures in the absence of~\ref{ax:pa}.
In~\cref{sec:representable}, 
we replace the object $\Dbimon = \withFam 1$ as it does not
account for respresentable summability in Köthe spaces.
In~\cref{sec:web-model-representable}, we show 
that all web models defined in \cref{part:web-models} can be captured  
by this theory.

\section{Taylor expansion in PCM categories}

\label{sec:taylor-pcm}

\subparagraph*{Summability structure.} We generalize
the summability structure of \cite{EhrhardWalch25}
to PCM categories.


\begin{definition} \label{def:ss} \labeltext{(S-fun)}{def:ss-fun} 
    \labeltext{(S-proj)}{def:ss-proj} 
    \labeltext{(S-sum)}{def:ss-sum}
    \labeltext{(S-sum-1)}{def:ss-sum-1}
    \labeltext{(S-sum-2)}{def:ss-sum-2}
    \labeltext{(S-sum-3)}{def:ss-sum-3}   
A \emph{summability structure} on a category $\catLL$
is a tuple $(\S, \sequence{\Sproj_i}, \Ssum, 0)$ 
where \begin{itemize}
    \item \ref{def:ss-fun}: $\S$ is a functor on $\catLL$, and
    $\Sproj_i, \Ssum :  \S \naturalTrans \idfun$
    are natural transformations called respectively projections and sum, and $0$ 
    is a family of zero morphisms $0^{X, Y} \in \catLL(X, Y)$. 
    \item \ref{def:ss-proj} The $\Sproj_i$ are jointly monic: 
    $\forall f, g \in \catLL(X, \S Y)$, 
    $(\forall i \in \N, \Sproj_i \compl f = \Sproj_i \compl g) \implies f = g$.
    \item \ref{def:ss-sum} 
    For all objects $X, Y$, there exists a PCM structure on 
    $\catLL(X, Y)$ such that \begin{itemize} 
        \item \ref{def:ss-sum-1} 
        A family $(f_{i_1, \ldots, i_n})_{i_1, \ldots, i_n \in \N}$ over $\catLL(X, Y)$
    is summable if and only if there exists $f \in \catLL(X, \S^n Y)$
    such that $\Sproj_{i_1} \compl \cdots \compl \Sproj_{i_n} \compl f = 
    f_{i_1, \ldots, i_n}$. 
    \item \vskip-0.7em \ref{def:ss-sum-2}
    If $(f_{i_1, \ldots, i_n})_{i_1, \ldots, i_n \in \N}$ is summable then 
    $\sum_{i_1, \ldots, i_n \in \N} f_{i_1, \ldots, i_n} = \overbrace{\Ssum \compl \cdots \compl 
    \Ssum}^n \compl f$.
    \item \ref{def:ss-sum-3} The sum over the empty family is equal to the zero morphism $0$.
    \end{itemize}
\end{itemize} 
\end{definition}
The axiom \ref{def:ss-proj} ensures that the object $\S X$
describes $\N$-indexed families over $X$. It is similar to a
product, but the pairing only exists if the family 
is summable by \ref{def:ss-sum}.

\begin{definition}
By \ref{def:ss-proj}, the morphism $f$ given in \ref{def:ss-sum} 
is unique and is called the witness of 
$(f_{i_1, \ldots, i_n})_{i_1, \ldots, i_n \in \N}$.
We write this witness 
$\Spairing[(i_1, \ldots, i_n)]<\N^n>{f_{i_1, \ldots, i_n}}$.
\end{definition}


The reader may be puzzled as for why we consider $\N^n$ indexed 
families in \ref{def:ss-fun} instead of simply $\N$.
We need to consider these families to ensure that the iteration of $\S$ 
does not introduce more constraints than 
the summability of each components.

\begin{example}
In web models, the summability structure is given by 
$\web{\S X} = \bigcup_{i \in \N} \set{i} \times \web{X}$ and
$\points{(\S X)} = \set{\sequence{x_i} \in \left(\vrigpos X\right)^{\N} \suchthat 
\sequence{x_i} \text{ is summable}}$,
with projections $\Sproj_i \cdot \sequence{x_i} = x_i$ and sum  
$\Ssum \cdot \sequence{x_i} = \sum_{i \in \N} x_i$.
Observe that $\S X$ sits between the product and the coproduct, as 
mentionned in \cref{rem:ss-product-coproduct}.
We will prove in \cref{sec:web-model-representable} that our web models
are \emph{representable}, which will ensure \emph{de facto}
that $\S$ is a summability structure.  
\end{example}

We do not assume in \cref{def:ss} above that $\catLL$ is a PCM category.
This is captured by
\ref{def:ss-fun}, as shown in \cref{prop:ss-is-pcm-cat} below.

\begin{proposition} \label{prop:ss-is-pcm-cat}
If $\catLL$ is equipped by a summability structure, then $\catLL$ is a PCM category.
\end{proposition}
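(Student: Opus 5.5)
We must show strong distributivity: for families $\family<A>[a]{f_a}$ in $\catLL(X,Y)$ and $\family<B>[b]{g_b}$ in $\catLL(Y,Z)$, if both are summable then $\family<A\times B>[(a,b)]{g_b\compl f_a}$ is summable and $\sum_{(a,b)} g_b \compl f_a = (\sum_b g_b)\compl(\sum_a f_a)$. Since \ref{def:ss-sum} only speaks of $\N^n$-indexed families, the first step is reindexing. Assuming index sets are countable, we choose injections $A\injection\N$ and $B\injection\N$. By \cref{prop:reindexing}, summability and sums are unchanged if we pad with zeros to $\N$-indexed families $\family<\N>[i]{f_i}$ and $\family<\N>[j]{g_j}$, with $f_i=0$ and $g_j=0$ outside the images. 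The same applies to the product family, via the injection $A\times B\injection\N^2$. Moreover, $g_j\compl 0 = 0$ and $0\compl f_i = 0$ hold by the zero-morphism property of the family $0$, and this $0$ is the empty sum by \ref{def:ss-sum-3}. So the padded product family is exactly $\family<\N^2>[(i,j)]{g_j\compl f_i}$. It therefore suffices to treat $A=B=\N$.

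Next, by \ref{def:ss-sum-1} with $n=1$, we take witnesses $f\in\catLL(X,\S Y)$ with $\Sproj_i\compl f=f_i$ and $g\in\catLL(Y,\S Z)$ with $\Sproj_j\compl g=g_j$. We form $h = \S g\compl f \in \catLL(X,\S^2 Z)$. By naturality of $\Sproj_i$ we get
$\Sproj_j\compl\Sproj_i\compl \S g\compl f = \Sproj_j\compl g\compl \Sproj_i\compl f = g_j\compl f_i$.
By \ref{def:ss-sum-1} with $n=2$, the family $(g_j\compl f_i)_{(i,j)\in\N^2}$ is summable. Here one must match the order of indices in the definition, $\Sproj_{i_1}\compl\Sproj_{i_2}\compl f$, and relabel accordingly; reindexing via the swap bijection, \cref{prop:reindexing}, takes care of this.

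For the value of the sum, \ref{def:ss-sum-2} gives $\sum g_j\compl f_i = \Ssum\compl\Ssum\compl\S g\compl f$. Naturality of $\Ssum$ rewrites this as $\Ssum\compl g\compl\Ssum\compl f$. By \ref{def:ss-sum-2} with $n=1$, this equals $(\sum_j g_j)\compl(\sum_i f_i)$.

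The main obstacle I expect is the bookkeeping in the first step: handling arbitrary countable index sets, including finite and empty ones, through zero padding. This needs the interaction between \ref{def:ss-sum-3}, zero morphisms and \cref{prop:reindexing}. The categorical computation itself is a direct use of naturality.
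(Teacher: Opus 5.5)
Your proof is correct and follows the paper's argument essentially step for step: you pad both families to $\N$-indexed ones via \cref{prop:reindexing} (using \ref{def:ss-sum-3} and absorption by zero morphisms to identify the padded product family), obtain the $\N^2$-indexed witness $\S g \compl f$ by naturality of the $\Sproj_i$, and compute its sum with \ref{def:ss-sum-2} and naturality of $\Ssum$. Your explicit treatment of the index order $\Sproj_{i_1} \compl \Sproj_{i_2}$ in \ref{def:ss-sum-1} through a swap bijection is a detail the paper leaves implicit, and it does not change the argument.
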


\begin{proof}
    We need to prove strong distributivity. Let $\vect f = (f_a)_{a \in A}$ be a summable 
    family of $\catLL(X, Y)$, and $\vect g = (g_a)_{a \in A}$ be a summable 
    family of $\catLL(Y, Z)$. 
    Let $\phi_1 : A \injection \N$
    and $\phi_2 : B \injection \N$ be any injections (they exist because 
    $A$ and $B$ are countable).
    Let $(f'_{i})_{i \in \N} = \famact {\phi_1} {\vect f}$ and 
    $(g'_{j})_{j \in \N} = \famact {\phi_2} {\vect g}$.
    These families are summable by assumption and \cref{prop:reindexing}.

    Let $f' \in \catLL(X, \S Y)$ and 
    $g' \in \catLL(Y, \S Z)$ be the witnesses
    given by \ref{def:ss-sum-1}.
    Then, by naturality of the $\Sproj_i$,
    $\Sproj_j \compl \Sproj_i \compl \S g' \compl f' 
    = \Sproj_j \compl g' \compl \Sproj_i \compl f' 
    = g'_j \compl f'_i$.
    Thus, by \ref{def:ss-sum-1},
    $(g'_j \compl f'_i)_{i, j \in \N}$
    is summable with sum 
    $\Ssum \compl \Ssum \compl \S g' \compl f' \eqnum{1}= 
    \Ssum \compl g' \compl \Ssum \compl f' \eqnum{2}= 
    \left(\sum_{j \in \N} g'_j \right) \compl 
    \left(\sum_{i \in \N} f'_i \right) 
    \eqnum{3}= \left( \sum_{b \in B} g_b \right) \compl 
    \left(\sum_{a \in A} f_a \right)$
    where \texteqnum{1} is naturality of $\Ssum$,
    \texteqnum{2} is \ref{def:ss-sum-2} and \texteqnum{3} is \cref{prop:reindexing}.
    To conclude the proof that
    $\catLL$ is a PCM category, we only need by \cref{prop:reindexing}
    to give an injection $\psi : A \times B \arrow \N^2$
    such that 
    $(g'_j \compl 
    f'_i)_{i, j \in \N} 
    = \famact \psi {(h_{a, b})_{(a, b) \in A \times B}}$ with  
    $h_{a, b} = g_b \compl f_a$.
    Define $\psi$ as $\psi(a, b) = (\phi_1(a), \phi_2(b))$.
    Then, for all $i, j \in \N$,
    \begin{itemize}
        \item if $(i, j) \in \im(\psi)$,  then 
        $i \in \im(\phi_1)$ and $j \in \im(\phi_2)$, and 
        $g'_j \compl f_i' = g_{\phi_2^{-1}(j)} \compl f_{\phi_1^{-1}(i)} 
        = h_{\psi^{-1}(i, j)}$;
        \item otherwise, either $i \notin \im(\phi_1)$ and $f_i = 0$,
        or $j \notin \im(\phi_2)$ and $g_j = 0$.
        By \ref{def:ss-sum-3}, $0$ is a zero morphism 
        and thus $g_j \compl f_i = 0$. \qedhere
    \end{itemize}
\end{proof}

As shown in Section 3.4 of \cite{EhrhardWalch25}, every summability structure 
$\S$ has a canonical \emph{bimonad} structure \cite{MesablishviliWisbauer11}. 
%
The proof of \cite{EhrhardWalch25} carries to our setting, 
so we simply recall the result here.



\begin{proposition}\label{prop:bimonad-structure}
  \begin{enumerate}
    \item For all $i \in \N$, 
    there exists 
    $\Sinj_i : \idfun \naturalTrans \S$
    characterized by \(\Sproj_j\comp\Sinj_i=\kronecker ij  \).
    \item There exists a natural transformation $\SmonadSum : \S \S \naturalTrans \S$
  characterized by
  \(\Sproj_i\comp\SmonadSum =\sum_{j=0}^i\Sproj_{i-j}\comp\Sproj_j\).
    \item There exists a natural transformation 
    $\Slift : \S \naturalTrans \S \S$ characterized
    by $\Sproj_i \comp \Sproj_j \comp \Slift = \kronecker ij \Sproj_i$.
    \item There exists a natural transformation 
    \(\Sswap : \S \S \naturalTrans \S \S \) 
    characterized by
  \(\Sproj_i\comp\Sproj_j\comp\Sswap=\Sproj_j\comp\Sproj_i\).
  \end{enumerate}
\end{proposition}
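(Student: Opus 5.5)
Each of the four natural transformations is built with the same recipe: write down the family of components that the characterizing equation prescribes, show that this family is summable in the sense of \ref{def:ss-sum-1}, take its witness, and check naturality. Uniqueness in each case follows from \ref{def:ss-proj}, or from its iterate: the $\Sproj_i \compl \Sproj_j$ are jointly monic on $\S\S Y$, since a witness of an $\N^2$-indexed family is unique.

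\textbf{(1) and (4).} For (1), the family $(\kronecker i j)_{j \in \N}$ in $\catLL(X, X)$ has support $\{i\}$. By \ref{ax:unary} and \cref{prop:zero-neutral} it is summable, using \ref{def:ss-sum-3} to identify the empty sum with $0$. Its witness is $\Sinj_i$. For (4), the family $(\Sproj_j \compl \Sproj_i)_{(i,j) \in \N^2}$ in $\catLL(\S\S X, X)$ is summable because $\id_{\S\S X}$ witnesses $(\Sproj_i \compl \Sproj_j)_{i,j}$ by \ref{def:ss-sum-1} with $n = 2$. By \cref{prop:reindexing} applied to the bijection $(i,j) \mapsto (j,i)$, the transposed family is then also summable, and its witness is $\Sswap$.

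\textbf{(3).} The family $(\kronecker i j \Sproj_i)_{i,j}$ on $\S X$ is the reindexing of the summable family $(\Sproj_i)_i$ (witnessed by $\id$) along the diagonal injection $i \mapsto (i,i)$. By \cref{prop:reindexing} it is summable, and its witness is $\Slift$.

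\textbf{(2), the main obstacle.} We must show that $(\sum_{j \le i} \Sproj_{i-j} \compl \Sproj_j)_{i \in \N}$ is summable as an $\N$-indexed family. Start from the summable family $(\Sproj_k \compl \Sproj_j)_{(j,k) \in \N^2}$, witnessed by $\id_{\S\S X}$, and partition $\N^2$ into the finite antidiagonals $A_i = \{(j,k) \mid j + k = i\}$. Only \ref{ax:wpa} is available, not \ref{ax:pa}, but \ref{ax:wpa} is exactly the direction needed: summability of the whole family implies that the family of partial sums over the $A_i$ is summable, with the same total. These inner finite sums exist by \cref{prop:subfamily-summable}. The witness of the resulting family is $\SmonadSum$.

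\textbf{Naturality.} In every case, naturality follows from joint monicity together with naturality of the $\Sproj$'s. For example, $\Sproj_i \compl \SmonadSum \compl \S\S f$ and $\Sproj_i \compl \S f \compl \SmonadSum$ both equal $\sum_{j \le i} f \compl \Sproj_{i-j} \compl \Sproj_j$. Checking this uses distributivity of composition over finite sums, which \cref{prop:ss-is-pcm-cat} provides.

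This is the point of the construction in the non-strong setting: we never need to decompose an arbitrary sum, only to group an already summable family. That is precisely what \ref{ax:wpa} allows, so the argument of \cite{EhrhardWalch25} goes through unchanged.
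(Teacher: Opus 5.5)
Your proposal is correct and matches the argument the paper has in mind. The paper gives no proof of its own; it only asserts that the proof of \cite{EhrhardWalch25} carries over. That proof is the same construction as yours: $\Sinj_i$, $\SmonadSum$, $\Slift$ and $\Sswap$ are obtained as witnesses of summable families, with uniqueness and naturality coming from joint monicity. Your observation that the antidiagonal grouping for $\SmonadSum$ uses only the grouping direction guaranteed by (WPA), and that the other three families come from the unary axiom or from reindexing, is exactly why it transfers to non-strong PCMs.
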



\begin{theorem}[Theorem 74 of \cite{EhrhardWalch25}]
$(\S, \Sinj_0, \SmonadSum)$ is a monad, $(\S, \Ssum, \Slift)$ is a comonad, 
$\Sswap : \S \S \naturalTrans \S \S$ is a distributive law 
between the two~\cite{Beck69,PowerWatabane02}, and 
together they form a bimonad.
\end{theorem}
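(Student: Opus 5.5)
The plan is to check each law by post-composing with projections and using the joint monicity \ref{def:ss-proj}, extended to iterated functors. Since $\S$ is a functor and each $\Sproj_i$ is natural, the family $\Sproj_{i_1}\compl\cdots\compl\Sproj_{i_n}$ is jointly monic on $\S^n Y$. Indeed, if $\Sproj_{i_1}\compl\Sproj_{i_2}\compl f = \Sproj_{i_1}\compl\Sproj_{i_2}\compl g$ for all $i_1,i_2$, then applying \ref{def:ss-proj} on the outer $\S$ yields $\Sproj_{i_2}\compl f=\Sproj_{i_2}\compl g$, and then again on the inner one. This step is routine. As a result, each of the monad, comonad, distributive-law and bimonad equations reduces to an identity between families of morphisms indexed by $\N^n$. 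Each such identity is an equation between sums of composites of projections.

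First I will treat the comonad $(\S,\Ssum,\Slift)$. The counit laws become $\Sproj_i\compl\Ssum\compl\Slift=\Sproj_i$ and $\Sproj_i\compl\S\Ssum\compl\Slift=\Sproj_i$. By \ref{def:ss-sum-2}, $\Ssum\compl\Slift$ computes $\sum_j \Sproj_i\compl\Sproj_j\compl\Slift = \sum_j\kronecker ij\Sproj_i$. This is $\Sproj_i$ by \cref{prop:zero-neutral}, since the family has support $\{i\}$. The same computation gives $\Sproj_i\compl\S\Ssum\compl\Slift$ after pushing $\Sproj_i$ past $\S\Ssum$ by naturality. Coassociativity compares $\kronecker ij\kronecker jk\Sproj_i$ on both sides and is immediate. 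The monad $(\S,\Sinj_0,\SmonadSum)$ is similar. The unit laws use $\Sproj_j\compl\Sinj_0=\kronecker 0j$, so that the finite sum $\sum_{j\le i}$ collapses. Associativity requires showing $\Sproj_i\compl\SmonadSum\compl\SmonadSum$ and $\Sproj_i\compl\SmonadSum\compl\S\SmonadSum$ both equal $\sum_{a+b+c=i}\Sproj_a\compl\Sproj_b\compl\Sproj_c$. Here the non-strong setting bites, and I expect this step to be the main obstacle. In a general PCM we only have \ref{ax:wpa}, i.e.\ $\sumsub$, so we may regroup a summable family but may not un-regroup. I will therefore start from the three-index family $(\Sproj_a\compl\Sproj_b\compl\Sproj_c)$, which is summable by \ref{def:ss-sum-1} with witness $\id_{\S^3}$. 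I will then apply \ref{ax:wpa} to the partition by $a+b+c=i$, refined first by $a+b$ and then by $b+c$. This shows that both bracketings are summations of that one summable family. Left and right distributivity (\cref{prop:ss-is-pcm-cat}) let me move composition inside the finite sums. The key point is always to go from the finest family to coarser ones, never the reverse.

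For $\Sswap$, being a distributive law amounts to compatibility with $\Sinj_0,\SmonadSum,\Ssum,\Slift$. Each compatibility becomes a reindexing of projections, handled by \cref{prop:reindexing} together with the same finest-family trick. The bimonad compatibility is analogous, namely $\Ssum\compl\SmonadSum=\Ssum\compl\Ssum$ and $\Slift\compl\SmonadSum$ expressed via $\Sswap$. For instance, $\Ssum\compl\SmonadSum=\sum_i\sum_{j\le i}\Sproj_{i-j}\compl\Sproj_j$. This is obtained by \ref{ax:wpa} from $\sum_{(k,j)\in\N^2}\Sproj_k\compl\Sproj_j=\Ssum\compl\Ssum$, which is summable with witness the identity. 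Since everything parallels Theorem~74 of \cite{EhrhardWalch25}, I will follow its proof verbatim. At each place where it uses \ref{ax:pa} to split a sum, I will replace that step by exhibiting a witness of the finer family via \ref{def:ss-sum-1} (typically an identity or a composite of structure maps), and then coarsening with \ref{ax:wpa}.
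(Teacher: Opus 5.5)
Your proposal is correct and follows the same route as the paper. The paper simply asserts that the projection-wise proof of Theorem~74 of \cite{EhrhardWalch25} carries over, and your point that each use of \ref{ax:pa} can be replaced by starting from the finest family, then coarsening with \ref{ax:wpa} and one-sided distributivity, is exactly the justification it leaves implicit; for associativity that finest family is $(\Sproj_a\compl\Sproj_b\compl\Sproj_c)_{a,b,c\in\N}$, summable with witness $\id_{\S\S\S X}$.
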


By monicity of the $\Sproj_i$, a morphism $f \in \catLL(X, \S Y)$ 
of the Kleisli category of the monad $\S$ can be seen as a formal power 
series $\sum_{i \in \N} f_i \formalvar^i$ over $\catLL(X, Y)$, 
where $f_i = \Sproj_i \compl f$ and 
$\formalvar$ is a formal variable. 
Then the composition of two power series in the Kleisli category of $\S$
is given by their Cauchy product, 
$\left(\sum_{j \in \N} g_j \formalvar^j \right) \compl 
\left(\sum_{i \in \N} f_i \formalvar^i \right)
= \sum_{k \in \N} \left( \sum_{i + j = k} g_j \compl f_i\right) \formalvar^k$.
As such, the Kleisli category of $\S$ is quite reminiscent of the monoid semiring 
construction of~\cite{Hines13}.

\subparagraph*{Compatibility between summability and categorical structure.}
We detail how the PCM structure of a PCM category $\catLL$
should interact with the symmetric monoidal (closed) structure of 
$\catLL$, and its products, whenever they exist.

\begin{definition} \label{def:sum-sm}
The PCM structure is compatible with the symmetric monoidal product 
if $\sm$ is strongly distributive: for all indexed families 
$\family{f_a : X_1 \arrow Y_1}$ and $\family<B>[b]{g_b : X_2 \arrow Y_2}$,
$\left(\sum_{a \in A} f_a \right) \sm \left(\sum_{b \in B} g_b\right) 
\sumsub \sum_{a \in A, b \in B} (f_a \sm g_b)$.
\end{definition}

\begin{definition} The PCM structure is compatible with the monoidal closure 
if it is compatible with the symmetric monoidal product and 
if for all indexed family $\family{f_a : X \sm Y 
	\arrow Z}$, 
	$\sum_{a \in A} \cur(f_a) 
	\sumiff \cur \left(\sum_{a \in A} f_a \right)$.
\end{definition} 

\begin{definition}
The PCM structure is compatible with the finite products (respectively countable products)
  if for all finite (respectively countable) index set $I$
  and for all collection of indexed families 
  $\family{f_a^i \in \catLL(X, Y_i)}$ where $i$ ranges over $I$,
  $\sum_{a \in A} \prodPairing{f_a^i} \sumiff \prodPairing{\sum f_a^i}$
 \end{definition}

 \begin{remark} \label{rem:sum-ll-weaker}
 These definitions are the most intuitive, but are quite redundant.
 \begin{itemize}
 \item \emph{Monoidal product}: the left and right distributivity of $\sm$ 
 implies strong distributivity, by strong distributivity of the composition 
 in PCM categories. Furthermore, left distributivity 
 implies right distributivity by symmetry of the product, and vice versa.
 \item \emph{Cartesian product}: 
 by left distributivity, $\proj_i \compl \sum_{a \in A} \prodPairing{f_a^i} 
\sumsub \sum_{a \in A} (\proj_i \compl \prodPairing{f_a^i})
= \sum_{a \in A} f_a^i$
so by uniqueness of the pairing, we always have 
$\sum_{a \in A} \prodPairing{f_a^i} 
\sumsub \prodPairing{\sum f_a^i}$.
\item \emph{Monoidal closed structure}: similarly, if the PCM structure is compatible with 
 $\sm$, we can prove that $\sum_{a \in A} \cur(f_a) \sumsub \cur \left(\sum_{a \in A} f_a \right)$.
 \end{itemize}
\end{remark}

If the PCM category structure of $\catLL$
is induced by a summability structure $\S$, 
all the compatibility conditions 
above boil down to categorical structure on $\S$ (see proof 
in~\cref{app:compatibility-pcm-ll}).

\begin{restatable}{proposition}{compatibilitypcmll}
  \label{prop:ss-ll}
  Assume that $\catLL$ is equipped with a summability structure $\S$. 
  \begin{enumerate}
  \item The PCM structure induced by $\S$ is compatible with the symmetric monoidal structure of 
  $\catLL$ if and only if 
  for all objects $X$ and $Y$, $0 \tensor Y = 0$ and there exists
  \begin{equation} \label{eq:sm-strength}
    \begin{split}
    \SstrR_{X, Y} = \Spairing{\Sproj_i \sm Y} 
    : \S X \sm Y \arrow \S (X \sm Y) \text{ such that } 
    \Ssum \compl \SstrL = (\Ssum \sm Y) \\ 
    \SstrL_{X, Y} = \Spairing{X \sm \Sproj_i} 
    : X \sm \S Y \arrow \S (X \sm Y) \text{ such that } 
    \Ssum \compl \SstrR = (X \sm \Ssum)
    \end{split}
  \end{equation}
  \item The PCM structure is compatible with the symmetric monoidal closed structure 
  (whenever there is one) if and only if it is compatible with the 
  symmetric monoidal structure, and if for all object $X$, 
	$\sequence{X \linarrow \Sproj_i} : (X \linarrow \S Y) \arrow 
  (X \linarrow Y)$ is summable.
  \item The PCM structure is compatible with the finite (resp. countable) products 
  (whenever they exist)
  if and only if for all finite (resp. countable) index set $I$, 
  $(\withFam \Sproj_j)_{j \in \N}$ is summable.
  \end{enumerate}
\end{restatable}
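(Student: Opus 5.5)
The plan is to prove each of the three equivalences with the same two-step pattern. First, use \ref{def:ss-sum-1} to translate a summability condition into the existence of a witness. Second, use \ref{def:ss-sum-2} together with joint monicity \ref{def:ss-proj} to translate equalities of sums into equalities of morphisms. Throughout, \cref{rem:sum-ll-weaker} means only one direction of each compatibility has to be checked, and \cref{prop:ss-is-pcm-cat} supplies strong distributivity of composition.

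\textbf{Item 1, forward direction.} Assume $\sm$ is strongly distributive.
\begin{itemize}
\item Tensoring the empty sum with $Y$ gives the empty sum, so $0 \sm Y = 0$.
\item The family $(\Sproj_i)_{i\in\N}$ in $\catLL(\S X, X)$ is summable with witness $\id_{\S X}$ and sum $\Ssum$.
\item Right distributivity of $\sm$ then says $(\Sproj_i \sm Y)_i$ is summable with sum $\Ssum \sm Y$.
\item By \ref{def:ss-sum-1} this yields the witness $\SstrR = \Spairing{\Sproj_i\sm Y}$, and \ref{def:ss-sum-2} gives $\Ssum\compl\SstrR = \Ssum\sm Y$. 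The left strength $\SstrL$ is symmetric.
\end{itemize}

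\textbf{Item 1, converse.} Let $(f_a)$ be summable. Reindex it along an injection into $\N$ using \cref{prop:reindexing}, so that it has a witness $f \in \catLL(X_1,\S Y_1)$.
\begin{itemize}
\item Consider $\SstrR\compl(f\sm X_2)$. Its $i$-th projection is $(\Sproj_i\sm X_2)\compl(f\sm X_2)=f_i\sm X_2$ by functoriality of $\sm$.
\item So $(f_i\sm X_2)$ is summable, with sum $\Ssum\compl\SstrR\compl(f\sm X_2)=(\Ssum\compl f)\sm X_2=(\sum f_i)\sm X_2$.
\item The padding indices carry $0\sm X_2 = 0$, and this is exactly where the hypothesis $0\sm Y=0$ is used: it makes the reindexing legitimate.
\item This gives right distributivity. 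Left distributivity is symmetric, and \cref{rem:sum-ll-weaker} upgrades the two to strong distributivity.
\end{itemize}

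\textbf{Item 2.} The forward direction starts from the family $(\Sproj_i\compl\ev)_i$ in $\catLL((X\linarrow \S Y)\sm X, Y)$.
\begin{itemize}
\item This family is summable with witness $\ev$.
\item Compatibility with the closure gives that $\cur(\Sproj_i\compl\ev)=X\linarrow\Sproj_i$ is summable.
\end{itemize}
For the converse we must show $\cur(\sum f_a)\sumsub\sum\cur(f_a)$; the other direction already follows from \cref{rem:sum-ll-weaker}.
\begin{itemize}
\item Let $f:X\sm Y\arrow\S Z$ be a witness of $(f_i)$, with $X$ here the domain object.
\item Then $X\linarrow\Sproj_i$ postcomposed with $\cur(f)$ is $\cur(\Sproj_i\compl f)=\cur(f_i)$. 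Right distributivity of composition then shows $(\cur f_i)_i$ is summable, since $(X\linarrow\Sproj_i)$ is summable by hypothesis.
\item To compare the sums, uncurry and use the already established strong distributivity of $\sm$ together with naturality of $\ev$: $\ev\compl(\sum\cur(f_i)\sm Y)=\sum \ev\compl(\cur(f_i)\sm Y)=\sum f_i$. Currying gives the claimed equality.
\end{itemize}

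\textbf{Item 3.} This is analogous.
\begin{itemize}
\item In the forward direction, $(\Sproj_j)_j$ is summable in each component $i\in I$. Compatibility then makes $(\prodPairing{\Sproj_j\compl\proj_i}_{i})_j = (\withFam\Sproj_j)_j$ summable.
\item In the converse, given summable families $(f^i_a)$ with witnesses $f^i$, postcompose the summable family $(\withFam\Sproj_j)_j$ with $\prodPairing{f^i}$. Right distributivity shows $(\prodPairing{f^i_j})_j$ is summable.
\item Equality of the sums comes from \cref{rem:sum-ll-weaker}.
\end{itemize}

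\textbf{Main obstacle.} The delicate point is the bookkeeping of index sets. The PCM sums range over arbitrary countable $A$, while witnesses only exist for $\N$-indexed, or more generally $\N^n$-indexed, families. In the non-strong setting, padding by zeros is only harmless because of \cref{prop:zero-neutral} and the hypothesis $0\sm Y=0$. So I would state once, carefully, the reindexing lemma: whenever a functorial operation preserves $0$, summability of $\N$-indexed families transfers to arbitrary countable ones. Every step above then reduces to that lemma.
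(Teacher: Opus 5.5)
Your proposal is correct and follows essentially the same route as the paper. The forward implications push the canonically summable families ($\Sproj_i$ with witness $\id$, $\Sproj_i \compl \ev$ with witness $\ev$, $\Sproj_j \compl \proj_i$) through the assumed compatibility, and each converse reindexes into $\N$, composes the witness with $\SstrR$, $Y \linarrow \Sproj_i$ or $\withFam \Sproj_j$, and pads using $0 \sm Y = 0$, $\cur(0) = 0$, $\prodPairing{0} = 0$; the only slip is notational, namely that in Item 2 the summable family you postcompose with $\cur(f)$ must be $(Y \linarrow \Sproj_i)_{i \in \N}$, since you named the domain of $f$ as $X$.
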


The natural transformations 
$\SstrR$ and $\SstrL$ endow $\S$ with the structure 
of a \emph{commutative monad}~\cite{EhrhardWalch25}. This induces
by standard results~\cite{Kock70,Kock72}
a \emph{lax monoidal monad} structure
$\Sinj_0 : 1 \arrow \S 1$ and 
$\Sdist_{X, Y} : \S X \sm \S Y \naturalTrans \S(X \sm Y)$
characterized by $\Sproj_i \compl \Sdist = 
\sum_{i_1 + i_2 = i} \Sproj_{i_1} \sm \Sproj_{i_2}$. 
Similarly, $\S$ has a canonical 
oplax monoidal structure $\prodPair{\S \proj_1}{\S \proj_2} :
\S(X \with Y) \arrow \S X \with \S Y$ 
(Proposition 22 of \cite{Mellies09})
with inverse $\Spairing{\Sproj_i \with \Sproj_i}
: \S X \with \S Y \naturalTrans \S(X \with Y)$.
Thus, the compatibility 
between sums and products turns $\S$ into a strong monoidal 
monad (wrt $\with$).

\subparagraph*{Taylor expansion.}
Our Taylor expansion is the same as 
in \cite{EhrhardWalch25}.

\begin{definition}[Definition 112 of \cite{EhrhardWalch25}] 
  \label{def:taylor-expansion} An analytic Taylor expansion is  a distributive law 
$\Sdl : \bang \S \naturalTrans \S \bang$ between 
the functor $\S$ and the comonad $\bangtext$
that is also compatible with the structure on $\S$
(bimonad, monoidal structure with regard to $\sm$ and $\with$, and 
projection $\Sproj_0$).
\end{definition}

\section{The representable theory}

\label{sec:representable}

We assume in this section that $\catLL$ is a symmetric monoidal 
closed category, with zero morphisms $0$ that are also absorbing for the monoidal product : $X \sm 0 = 0$ and 
$0 \sm Y = 0$.

\begin{definition}
A summation object in $\catLL$ 
is an object $\Dbimon$ equipped with 
a family of jointly epic morphisms 
$(\Dinj_i : \Sone \arrow \Dbimon)_{i \in \N}$
called the basis, 
and $\Ddiag : \Sone \arrow \Dbimon$ called the diagonal.
\end{definition}

In \cite{EhrhardWalch25}, $\Dbimon$ is always assumed to be equal to the product
$\withFam<\N> 1$. However, it turns out that this assumption 
does not give the expected notion of sum in Köthe spaces.
We refer the reader to \cref{sec:web-model-representable} for a generic description
of this summation object in web models.

The internal hom 
\((\Sone\Limpl X,\Evlin_{1, X})\) can be described as 
  \(\Sone\Limpl X=X\), \(\Evlin_{1, X}=\smunitR_X : X \sm 1 \arrow X\) and 
  for all $f : X \sm \smone \arrow Y$,
  $\cur(f) = f \compl \tensorUnitR^{-1}$.
  We define natural transformations
  $\Sproj_i=(\Dinj_i \Limpl X) : (\Dbimon \linarrow X) \arrow X $
  and $\Ssum = (\Ddiag\Limpl X) : (\Dbimon \linarrow X) \arrow X$. More explicitely,
\begin{equation} \label{eq:projection-representable-def}
    \begin{tikzcd}[column sep = large, row sep = abysmal]
      {\Sproj_i = \quad \Dbimon \linarrow X} & {(\Dbimon \linarrow X) \tensor 1} & {(\Dbimon \linarrow X) \tensor \Dbimon} & X. \\
      {\Ssum =  \quad \Dbimon \linarrow X} & {(\Dbimon \linarrow X) \tensor 1} & {(\Dbimon \linarrow X) \tensor \Dbimon} & X.
      \arrow["{\tensorUnitR^{-1}}", from=1-1, to=1-2]
      \arrow["{(\Dbimon \linarrow X) \tensor \Dinj_i}", from=1-2, to=1-3]
      \arrow["{\ev_{\Dbimon, X}}", from=1-3, to=1-4]
      \arrow["{\tensorUnitR^{-1}}", from=2-1, to=2-2]
      \arrow["{(\Dbimon \linarrow X) \tensor \Ddiag}", from=2-2, to=2-3]
      \arrow["{\ev_{\Dbimon, X}}", from=2-3, to=2-4]
    \end{tikzcd}
    \end{equation}

\begin{definition} \label{def:representable-pcm}
  A symmetric monoidal closed category with zero morphisms 
  is representable if it has a summation 
  object $\Dbimon$ such that 
  $(\Dbimon \linarrow \_, \sequence{\Sproj_i}, \Ssum, 0)$ 
  is a summability structure.
\end{definition}

We now provide an alternative characterization of representable 
PCM categories, which can be obtained by ``uncurrying'' the 
definition of summability structures.
By standard naturality equations on the currying, we get that 
for all $f : X \sm \Dbimon \arrow Y$ and $g : X \sm \Dbimon^{\sm n} 
\arrow Y$,
  \begin{align}
     \Sproj_i \compl \cur(f) 
     = f \compl (X \tensor \Dinj_i) \compl \tensorUnitR^{-1}
     && \ \Sproj_{i_1} \compl \cdots \compl \Sproj_{i_n} \compl \cur^{n}(g) 
     &= g \compl (X \sm \Dinj_{i_n} \sm \cdots \compl \Dinj_{i_1}) \compl
     (\tensorUnitR^n)^{-1} \label{eq:curry-proj} \\
     \Ssum \compl \cur(f) 
     = f \compl (X \tensor \Ddiag) \compl \tensorUnitR^{-1}
     && \ \Ssum \compl \cdots \compl \Ssum \compl \cur^{n}(g) 
     &= g \compl (X \sm \Ddiag \sm \cdots \compl \Ddiag) \compl
     (\tensorUnitR^n)^{-1}  \label{eq:curry-sum}
   \end{align}

\begin{lemma} \label{prop:representable-monicity}
  The following assertions are equivalent: \begin{enumerate}
    \item the morphisms $\Dinj_i$ are jointly epic;
    \item for all object $X$, the morphisms 
    $X \sm \Dinj_i : X \sm \smone \arrow X \sm \Dbimon$
    are jointly epic;
    \item the morphisms $\Sproj_i$ are jointly monic.
  \end{enumerate}
\end{lemma}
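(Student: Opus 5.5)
We prove the cycle $1 \Rightarrow 2 \Rightarrow 3 \Rightarrow 1$; the only tools are the closed structure (the currying bijection $\cur$ and its naturality, as in \cref{eq:curry-proj}) and the unitors.

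\emph{$1 \Rightarrow 2$.} The functor $X \sm \_$ is a left adjoint, with right adjoint $X \linarrow \_$, so it preserves colimits. A jointly epic family is exactly a family whose associated cocone exhibits its codomain as a colimit of a certain (cokernel-pair) diagram, so the family stays jointly epic after applying $X \sm \_$. A more direct route avoids this: take $g, h : X \sm \Dbimon \arrow Z$ with $g \compl (X \sm \Dinj_i) = h \compl (X \sm \Dinj_i)$ for all $i$. Using the symmetry, transpose $g$ and $h$ to $\tilde g, \tilde h : \Dbimon \arrow (X \linarrow Z)$. By naturality of currying in the first variable, $\tilde g \compl \Dinj_i$ is the transpose of $g \compl (X \sm \Dinj_i)$, up to the symmetry and unitor isomorphisms. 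Hence $\tilde g \compl \Dinj_i = \tilde h \compl \Dinj_i$ for all $i$. By 1, $\tilde g = \tilde h$, and therefore $g = h$.

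\emph{$2 \Rightarrow 3$.} Let $f, g : Y \arrow \Dbimon \linarrow X$ with $\Sproj_i \compl f = \Sproj_i \compl g$ for all $i$. Write $f = \cur(f')$ and $g = \cur(g')$, where $f', g' : Y \sm \Dbimon \arrow X$. By \cref{eq:curry-proj}, $f' \compl (Y \sm \Dinj_i) \compl \tensorUnitR^{-1} = g' \compl (Y \sm \Dinj_i) \compl \tensorUnitR^{-1}$. Since $\tensorUnitR^{-1}$ is an isomorphism, this gives $f' \compl (Y \sm \Dinj_i) = g' \compl (Y \sm \Dinj_i)$ for all $i$. By 2, $f' = g'$, so $f = g$ because $\cur$ is a bijection.

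\emph{$3 \Rightarrow 1$.} Let $u, v : \Dbimon \arrow X$ with $u \compl \Dinj_i = v \compl \Dinj_i$ for all $i$. Apply the argument of the previous step in reverse with $Y = \smone$. The morphism $u \compl \smunitL_{\Dbimon} : \smone \sm \Dbimon \arrow X$ curries to some $\hat u : \smone \arrow \Dbimon \linarrow X$, and similarly $v$ gives $\hat v$. Then $\Sproj_i \compl \hat u = u \compl \smunitL \compl (\smone \sm \Dinj_i) \compl \tensorUnitR^{-1}$. By naturality of $\smunitL$, and since $\smunitL_\smone = \smunitR_\smone$, this equals $u \compl \Dinj_i$. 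Likewise $\Sproj_i \compl \hat v = v \compl \Dinj_i$. Hence $\Sproj_i \compl \hat u = \Sproj_i \compl \hat v$ for all $i$, so $\hat u = \hat v$ by 3, and uncurrying gives $u = v$.

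The only delicate point is the coherence bookkeeping. It appears in step $1 \Rightarrow 2$, where currying is naturally in the second tensor factor while $\Dinj_i$ acts on it, so the symmetry is needed. It also appears in step $3 \Rightarrow 1$, where we need the identity $\smunitL_\smone = \smunitR_\smone$. Both are standard facts about symmetric monoidal categories.
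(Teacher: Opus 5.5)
Your proof is correct and is essentially the paper's argument. Your $1 \Rightarrow 2$ is the paper's symmetry-plus-currying step, your $2 \Rightarrow 3$ is its use of \cref{eq:curry-proj} with bijectivity of $\cur$, and your direct $3 \Rightarrow 1$ at $Y = \smone$ simply merges the paper's $3 \Rightarrow 2$ and $2 \Rightarrow 1$ (the latter taking $X = \smone$); the coherence $\smunitL_\smone = \smunitR_\smone$ is not actually needed there, since the same isomorphism $\smunitL_\smone \compl \tensorUnitR^{-1}$ appears on both sides.
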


  \begin{proof}
    $(1) \Rightarrow (2)$. Let \(f,g : \Tens X \Dbimon \arrow Y\) be such that %
    $f\compl\Tensp X{\Dinj_i} =g\compl\Tensp X{\Dinj_i}$ for all $i \in \N$.
    By naturality of \(\Sym\), we get %
    \(f\compl\Sym\compl\Tensp {\Dinj_i}X
    =g\compl\Sym\compl\Tensp {\Dinj_i}X \) and hence %
    \(\Curlin(f\compl\Sym)\compl\Dinj_i
    =\Curlin(g \compl\Sym)\compl\Dinj_i\) %
    so that \(\Curlin(f\compl\Sym)=\Curlin(g\compl\Sym)\) and hence %
    \(f=g\). So the $X \sm \Dinj_i$ are jointly epic. 
    Conversely, $(2) \Rightarrow (1)$, taking $X = \smone$. 
    Finally, $(2) \Leftrightarrow (3)$ by \cref{eq:curry-proj} and 
    bijectivity of $\cur$.
  \end{proof}


  
  Thus, the only condition that does not hold de facto when there is a 
  summation object is \ref{def:ss-sum}. This condition can be uncurryfied 
  by \cref{eq:curry-proj,eq:curry-sum} as follows.

  \begin{proposition} \label{prop:representable-uncurry}
    The SMCC with zero morphisms $\catLL$ is representable
  if and only if it has a summation object $\Dbimon$ 
 such that for all objects $X, Y$,
  $\catLL(X, Y)$ has a PCM structure in which:
  \begin{enumerate}
    \item A family $(f_{i_1, \ldots, i_n})_{i_n, \ldots, i_n \in \N}$ of morphisms 
    $X \arrow Y$ is summable if and only if there exists 
    $h : X \sm \Dbimon^{\sm n} \arrow Y$ such that 
    $h \compl (X \tensor \Dinj_{i_n} \tensor \cdots \tensor \Dinj_{i_1}) 
    \compl (\tensorUnitR_X^n)^{-1}  = h_{i_1, \ldots, i_n}$.
     \item Then, 
    $\sum_{i_1, \ldots, i_n \in \N} h_{i_1, \ldots, i_n} 
    = h \compl (X \tensor \Ddiag \tensor \cdots \tensor \Ddiag) 
    \compl (\tensorUnitR_X^n)^{-1}$.
    \item The sum over the empty family is equal to the zero morphism $0$ 
  \end{enumerate}
  \end{proposition}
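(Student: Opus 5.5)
The plan is to show that, once a summation object $\Dbimon$ is given, the three clauses of \ref{def:ss-sum} for $\S = \Dbimon \linarrow \_$ translate term by term into the three clauses of the statement, by currying. \ref{def:ss-fun} holds by construction: $\S$ is a functor, and $\Sproj_i = \Dinj_i \linarrow \_$ and $\Ssum = \Ddiag \linarrow \_$ are natural because $\_ \linarrow \_$ is a bifunctor. By \cref{prop:representable-monicity}, the $\Sproj_i$ are jointly monic because the $\Dinj_i$ are jointly epic, so \ref{def:ss-proj} holds for free. Hence $\catLL$ is representable if and only if, for the same PCM structures on the hom-sets, \ref{def:ss-sum-1}, \ref{def:ss-sum-2} and \ref{def:ss-sum-3} hold. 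Clause~3 is literally \ref{def:ss-sum-3}, so only the first two need work.

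The key tool is the iterated currying bijection
\[ \cur^n : \catLL(X \sm \Dbimon^{\sm n}, Y) \arrow \catLL(X, \S^n Y), \]
where $\S^n Y = \Dbimon \linarrow (\cdots \linarrow (\Dbimon \linarrow Y))$. It is obtained by currying $n$ times, with $\Dbimon^{\sm n}$ bracketed to the left so that each step peels off one factor; associators are absorbed into this convention. Since this map is a bijection, a morphism $f : X \arrow \S^n Y$ exists if and only if $f = \cur^n(h)$ for some unique $h$.

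Then \cref{eq:curry-proj} gives
\[ \Sproj_{i_1} \compl \cdots \compl \Sproj_{i_n} \compl \cur^n(h) = h \compl (X \sm \Dinj_{i_n} \sm \cdots \sm \Dinj_{i_1}) \compl (\tensorUnitR^n)^{-1}, \]
so "there is a witness $f$" is equivalent to "there is an $h$ with the uncurried components", which is clause~1 $\Leftrightarrow$ \ref{def:ss-sum-1}. Likewise \cref{eq:curry-sum} gives
\[ \Ssum^{n} \compl \cur^n(h) = h \compl (X \sm \Ddiag \sm \cdots \sm \Ddiag) \compl (\tensorUnitR^n)^{-1}, \]
so clause~2 is equivalent to \ref{def:ss-sum-2} once the witness is taken to be $f = \cur^n(h)$. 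For the direction from the statement to \ref{def:ss-sum-2}, uniqueness matters: if the family is summable with witness $f$, we set $h = \cur^{-n}(f)$, and clause~2 applied to this $h$ yields the required formula. Here clause~2 must be read as holding for the (unique, by joint epicity) $h$ of clause~1.

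The main obstacle is justifying \cref{eq:curry-proj,eq:curry-sum} for general $n$, including the index reversal $i_n, \ldots, i_1$. I would prove them by induction on $n$, using the naturality of $\cur$: $(g \linarrow Y) \compl \cur(k) = \cur(k \compl (X \sm g))$, and $\cur(k) \compl u = \cur(k \compl (u \sm Z))$. The outermost $\Sproj_{i_1}$ acts on the innermost $\Dbimon$ of $\S^n Y$, which corresponds to the last tensor factor after uncurrying; this is what produces the reversed order. The base case $n=1$ follows from the description $\smone \linarrow X = X$ with $\cur(f) = f \compl \tensorUnitR^{-1}$. The remaining bookkeeping with unitors and associators is routine, so the whole argument reduces to this induction plus \cref{prop:representable-monicity}.
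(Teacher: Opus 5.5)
Your proposal is correct and follows the paper's route. In both, the functoriality and naturality conditions hold by construction, joint monicity of the projections comes from \cref{prop:representable-monicity}, and the summability clauses are carried along the iterated currying bijection via \cref{eq:curry-proj,eq:curry-sum}. The paper only invokes those two equations as standard naturality facts about currying, whereas you also sketch their inductive proof and state the uniqueness of the uncurried witness that the sum clause needs.
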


  It immediately follows from \ref{ax:unary} and from the characterization above 
  that there exists for all $i \in \N$ 
  a projection $\Dproj_i : \Dbimon \arrow 1$ such that $\Dproj_i \compl \Dinj_j 
  = \kronecker i j$ and $\Dproj_i \compl \Ddiag = \id_{1}$.
  As such, $\Dbimon$ is quite similar to a biproduct, except that 
  the pairing and the copairing do not always exist.

  \begin{theorem} \label{thm:representable-compatible}
  For all representable PCM category, the sum is compatible with the 
  monoidal product, the closure, and the cartesian product (whenever it exists).
  \end{theorem}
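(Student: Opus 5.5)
We reduce each compatibility condition to the categorical criteria of \cref{prop:ss-ll}, and check those criteria with the uncurried characterization of \cref{prop:representable-uncurry}. Throughout, $\S = \Dbimon \linarrow \_$, so a witness of a summable family $(f_i)$ of morphisms $X \arrow Y$ is $\cur(h)$ for some $h : X \sm \Dbimon \arrow Y$ with $h \compl (X \sm \Dinj_i) \compl \tensorUnitR^{-1} = f_i$.

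\emph{Monoidal product.} We first note that $0 \sm Y = 0$ holds by the standing assumption that zero morphisms are absorbing for $\sm$. For the strength, we need $(\Sproj_i \sm Y)_{i}$ to be summable as morphisms $\S X \sm Y \arrow X \sm Y$. Take
\[ h = (\ev_{\Dbimon, X} \sm Y) \compl (\text{symmetry/associativity iso}) : ((\Dbimon \linarrow X) \sm Y) \sm \Dbimon \arrow X \sm Y, \]
which feeds the $\Dbimon$ component to the evaluation. By \cref{eq:projection-representable-def} and naturality of the symmetric monoidal coherence isos, precomposing $h$ with $(\_ \sm \Dinj_i) \compl \tensorUnitR^{-1}$ gives $\Sproj_i \sm Y$. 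Hence $\SstrR = \cur(h)$ exists. Using \cref{eq:curry-sum}, $\Ssum \compl \SstrR = h \compl (\_ \sm \Ddiag)\compl\tensorUnitR^{-1}$, and the same coherence computation yields $\Ssum \sm Y$. The left strength $\SstrL$ follows by symmetry, as in \cref{rem:sum-ll-weaker}.

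\emph{Closure.} Here we must show that $(X \linarrow \Sproj_i)_i$ is summable as a family $(X \linarrow (\Dbimon \linarrow Y)) \arrow (X \linarrow Y)$. Take
\[ h = \cur\big(\ev \compl (\ev \sm \Dbimon)\compl \text{(iso)}\big) : (X \linarrow (\Dbimon \linarrow Y)) \sm \Dbimon \arrow (X \linarrow Y), \]
i.e.\ the canonical map obtained from the swap iso $X \linarrow(\Dbimon\linarrow Y) \cong \Dbimon \linarrow (X \linarrow Y)$ and evaluation. Naturality of currying in the argument $\Dbimon$ then shows that precomposition with $\Dinj_i$ gives $X \linarrow \Sproj_i$. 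Because every piece is a canonical SMCC iso composed with $\Dinj_i$, this is a routine diagram chase.

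\emph{Products.} We must show that $(\withFam \Sproj_j)_{j}$ is summable, with $\withFam \Sproj_j : \withFam (\Dbimon \linarrow X_k) \arrow \withFam X_k$. Use
\[ h = \prodPairing{\ev \compl (\proj_k \sm \Dbimon)}_{k} : \Big(\withFam (\Dbimon \linarrow X_k)\Big) \sm \Dbimon \arrow \withFam X_k; \]
the pairing always exists for countable products whenever they exist. Then $\proj_k \compl h \compl (\_ \sm \Dinj_j)\compl \tensorUnitR^{-1} = \Sproj_j \compl \proj_k$, so $h \compl (\_ \sm \Dinj_j)\compl\tensorUnitR^{-1} = \withFam \Sproj_j$ by uniqueness of pairing, which gives summability.

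The main obstacle is bookkeeping rather than ideas. We must ensure that the characterization of \cref{prop:ss-ll} applies with families indexed by $\N$ only, and not by $\N^n$. This is fine, since \cref{prop:ss-ll} is stated that way. We must also verify the equation $\Ssum\compl\SstrR = \Ssum \sm Y$, which requires careful coherence of unitors; we would handle this by writing everything through \cref{eq:curry-proj,eq:curry-sum} and naturality of $\tensorUnitR$.
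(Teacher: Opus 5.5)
Your proposal is correct and follows essentially the same proof as the paper. The paper also reduces each compatibility to the criteria of \cref{prop:ss-ll} and checks them through \cref{prop:representable-uncurry}, with the same three uncurried witnesses: evaluation after a symmetry for $\sm$ (including the check against $\Ddiag$ that the sum is $\Ssum \sm Y$), the curried double evaluation for the closure, and the pairing of the maps $\ev \compl (\proj_i \sm \Dbimon)$ for products.
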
 
  For proving \cref{thm:representable-compatible}, we check that 
  the summability structure $\Dbimon \linarrow \_$
  satisfies the conditions of
  \cref{prop:ss-ll} (see \cref{app:representable-summability}).
  
  We now adapt to our PCM setting one 
  of the main result of~\cite{EhrhardWalch25}: 
  there is a bijection between Taylor expansions 
  $\Sdl : \bang \S \naturalTrans \S \bang$
  and \emph{analytic coalgebras} on $\Dbimon$.
  
  The object $\Dbimon$ can be equipped with a
  comonoid structure, given by $\Dproj_0 : \Dbimon \arrow 1$ and 
  $\Dbimoncm : \Dbimon \arrow \Dbimon \sm \Dbimon$ 
  characterized by $\Dbimoncm \compl \Dinj_n = 
  \left(\sum_{i + j = n} (\Dinj_i \sm \Dinj_j)\right) \tensorUnitR_{1}^{-1}$.
  The morphism $\Dbimoncm$ exists because $(\Dinj_i \sm \Dinj_j)_{i, j \in \N}$ 
  is summable by \cref{prop:representable-uncurry}, so 
  $\left(\sum_{i+j =n} \Dinj_i \sm \Dinj_j\right)_{n \in \N}$ 
  is summable by \ref{ax:wpa}.
  The object $\Dbimon$ can also be equipped with a
  monoid structure, given by $\Ddiag : 1 \arrow \Dbimon$ and 
  $\Dbimonm : \Dbimon \sm \Dbimon \arrow \Dbimon$ characterized by 
  $\Dbimonm \compl (\Dinj_i \sm \Dinj_j) = \kronecker i j \compl \Dinj_i$.
  The morphism $\Dbimonm$ exists because the family 
  $(\Dinj_i)_{i \in \N}$ is summable by \cref{prop:representable-uncurry},
  so $(\kronecker i j \Dinj_i)_{i,j \in \N}$ is summable 
  by \cref{prop:zero-neutral}.

  \begin{theorem}[Theorem 212 of \cite{EhrhardWalch25}]
  The tuple $(\Dbimon, \Ddiag, \Dbimonm, \Dproj_0, \Dbimoncm)$
  is a bicommutative bimonoid.
  \end{theorem}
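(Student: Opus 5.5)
The plan is to check every bimonoid equation on the basis. Each equation is an equality of morphisms out of $\Dbimon^{\sm k}$ with $k \leq 3$, and the basis tensors determine such morphisms. Concretely, I would first show that the morphisms $\Dinj_{i_1} \sm \cdots \sm \Dinj_{i_k}$ (precomposed with the unitors) are jointly epic. This follows by iterating \cref{prop:representable-monicity}(2), which says each $X \sm \Dinj_i$ is jointly epic, together with bifunctoriality of $\sm$: write $\Dinj_i \sm \Dinj_j = (\Dbimon \sm \Dinj_j)\compl(\Dinj_i \sm \smone)$ and peel off one factor at a time. After that, each axiom reduces to a computation of both sides on $\Dinj_{i_1}\sm\cdots\sm\Dinj_{i_k}$.

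The first key fact is that $\Ddiag = \sum_{i\in\N}\Dinj_i$. I obtain it from \cref{prop:representable-uncurry} with $X = \smone$, $n = 1$ and $h = \id_\Dbimon$ (up to unitors). The same proposition with $h = \id_{\Dbimon^{\sm n}}$ shows that the family $(\Dinj_{i_n}\sm\cdots\sm\Dinj_{i_1})_{i_1,\dots,i_n}$ is summable with sum $\Ddiag\sm\cdots\sm\Ddiag$.

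The monoid axioms are then direct. Associativity and commutativity hold because both sides send $\Dinj_i\sm\Dinj_j\sm\Dinj_k$ to $\kronecker ij\kronecker jk\Dinj_i$ (commutativity uses naturality of $\smsym$). For the unit, I compute
\[
\Dbimonm\compl(\Ddiag\sm\Dinj_j) = \Dbimonm\compl\Big(\sum_i \Dinj_i\sm\Dinj_j\Big) = \sum_i \kronecker ij\Dinj_i = \Dinj_j .
\]
This uses compatibility of sums with $\sm$ (\cref{thm:representable-compatible}) and left distributivity of composition, both of which go in the $\sumsub$ direction that we need, since the left-hand side is defined.

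For the comonoid axioms, counitality reads $(\Dproj_0\sm\Dbimon)\compl\Dbimoncm\compl\Dinj_n = \sum_{i+j=n}\kronecker 0i\,\Dinj_j = \Dinj_n$. Cocommutativity follows from the symmetry of the index set $\{(i,j) \mid i+j=n\}$. Coassociativity requires showing that both $(\Dbimoncm\sm\Dbimon)\compl\Dbimoncm\compl\Dinj_n$ and $(\Dbimon\sm\Dbimoncm)\compl\Dbimoncm\compl\Dinj_n$ equal $\sum_{i+j+k=n}\Dinj_i\sm\Dinj_j\sm\Dinj_k$ (up to associator).

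For the bimonoid compatibilities, I check the following on the basis:
\begin{itemize}
\item $\Dbimoncm\compl\Dbimonm$ against $(\Dbimonm\sm\Dbimonm)\compl(\Dbimon\sm\smsym\sm\Dbimon)\compl(\Dbimoncm\sm\Dbimoncm)$: on $\Dinj_i\sm\Dinj_j$ both sides give $\kronecker ij\sum_{a+b=i}\Dinj_a\sm\Dinj_b$, because the Kronecker symbols force $a=c$ and $b=d$ in $\sum_{a+b=i,\,c+d=j}\kronecker ac\kronecker bd\,\Dinj_a\sm\Dinj_b$.
\item $\Dproj_0\compl\Dbimonm = \Dproj_0\sm\Dproj_0$: both sides give $\kronecker ij\kronecker 0i$.
\item $\Dproj_0\compl\Ddiag=\id_\smone$: this holds by construction of $\Dproj_0$.
\item $\Dbimoncm\compl\Ddiag = \Ddiag\sm\Ddiag$: this last one is not a basis computation. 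I compute $\Dbimoncm\compl\sum_n\Dinj_n = \sum_n\sum_{i+j=n}\Dinj_i\sm\Dinj_j$, and regroup into $\sum_{i,j}\Dinj_i\sm\Dinj_j = \Ddiag\sm\Ddiag$.
\end{itemize}

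The main obstacle throughout is that the PCM need not be strong, so sums cannot be freely regrouped: \ref{ax:wpa} only allows passing from a summable big family to a partitioned one, never the reverse. My rule is to always start from a family whose summability is guaranteed by representability, namely the full family $(\Dinj_{i_n}\sm\cdots\sm\Dinj_{i_1})$. I then reach the needed grouping (by $i+j=n$, or $i+j+k=n$) by \ref{ax:wpa} applied to that family, which yields both definedness and the equality. In $\Dbimoncm\compl\Ddiag$, for instance, \ref{ax:wpa} applied to the summable family $(\Dinj_i\sm\Dinj_j)_{i,j}$ gives $\sum_{i,j}\Dinj_i\sm\Dinj_j = \sum_n\sum_{i+j=n}\Dinj_i\sm\Dinj_j$. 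Likewise, in coassociativity I apply \ref{ax:wpa} to the summable triple family rather than nesting two sums. For all finite inner sums I only use left and right distributivity of composition and of $\sm$ in the $\sumsub$ direction, which is exactly the safe direction.
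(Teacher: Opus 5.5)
Your proposal is correct; it checks every bimonoid and bicommutativity law, and it handles the non-strong PCM setting properly. The paper gives no argument of its own for this statement. It only cites \cite{EhrhardWalch25}, asserts that the proof there carries over to non-strong PCMs, and relates the bimonoid to the bimonad structure of $\S$ through the mate correspondence of Figure~\ref{fig:mate}.

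Your route is a direct verification on the basis. You get joint epicness of the $\Dinj_{i_1}\sm\cdots\sm\Dinj_{i_k}$ from \cref{prop:representable-monicity} plus symmetry, and then do Kronecker computations. This is exactly what would substantiate the paper's ``carries over'' claim, because it isolates where sums must be regrouped: coassociativity, and $\Dbimoncm\compl\Ddiag=\Ddiag\sm\Ddiag$. You justify each of these by applying \ref{ax:wpa} to a subfamily of $(\Dinj_{i_1}\sm\cdots\sm\Dinj_{i_k})_{i_1,\ldots,i_k}$, which is summable by \cref{prop:representable-uncurry}. Everywhere else you only use distributivity of $\compl$ and of $\sm$ (via \cref{thm:representable-compatible}) in the $\sumsub$ direction, starting from a defined left-hand side. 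Nothing in your argument needs \ref{ax:pa}.

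A mate-based route, which is how the paper frames the result, would work differently. It would transport the bimonad laws of $\Dbimon\linarrow\_$ (the theorem recalled in \cref{sec:taylor-pcm}) to $\_\sm\Dbimon$ and specialise at $X=\smone$. That pushes all the PCM bookkeeping into the bimonad theorem for $\S$. You would still have to identify the mates of $\Ssum,\Slift,\Sinj_0,\SmonadSum,\Sswap$ with $\Ddiag,\Dbimonm,\Dproj_0,\Dbimoncm,\smsym$, which is a basis computation of the same kind as yours. Your version is self-contained and shows concretely where the weaker axiom suffices.

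One point to state explicitly: several of your Kronecker evaluations use that zero morphisms are absorbing for $\sm$. Examples are $\kronecker{0}{i}\sm\Dinj_j=0$ for $i\neq 0$ in counitality, and $0\sm\Dinj_k=0$ in associativity. This is a standing hypothesis of \cref{sec:representable}, so the argument is fine, but you should cite it.
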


  The proof of \cite{EhrhardWalch25} also works for our PCMs.
  The bimonoid structure on $\Dbimon$ is related to the bimonad structure 
  of $\S$ through the 
  \emph{mate isomorphism}~\cite{KellyStreet74} (see~\cref{fig:mate}).  
  The bimonad $\S$ can be seen as a kind of writter/reader bimonad 
  associated to the bimonoid $\Dbimon$.

  \begin{definition}[Definition 218 of \cite{EhrhardWalch25}]
  An analytic coalgebra is a $\bang$-coalgebra $\Dbimonca : \Dbimon \arrow 
  \bang \Dbimon$ such that the bimonoid structure 
  of $\Dbimon$ and $\Dinj_0 : 1 \arrow \Dbimon$ 
  are coalgebra morphisms.
  \end{definition}

  \begin{theorem}[Corollary 222 of \cite{EhrhardWalch25}]
    There is a bijection between analytic Taylor expansions 
    $\Sdl : \bang \S \naturalTrans \S \bang$ for $\S = \Dbimon \linarrow \_$ 
     and analytic coalgebras.
  \end{theorem}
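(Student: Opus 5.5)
The plan is to transport the argument of Corollary 222 of \cite{EhrhardWalch25} through the mate isomorphism between the bimonoid $\Dbimon$ and the bimonad $\S = \Dbimon \linarrow \_$ (\cref{fig:mate}). Its proof there is purely categorical: it never uses the additivity axioms of the PCM, except in the construction of $\S$'s structure and in the bimonoid structure on $\Dbimon$. Both of these have already been re-established in our setting, namely \cref{prop:bimonad-structure}, Theorem 74, \cref{thm:representable-compatible} and Theorem 212. So the main task is to check that every step only uses equations between morphisms, never \ref{ax:pa}.

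\textbf{From coalgebras to distributive laws.} Given an analytic coalgebra $\Dbimonca : \Dbimon \arrow \bang \Dbimon$, I will define $\Sdl_X : \bang(\Dbimon \linarrow X) \arrow \Dbimon \linarrow \bang X$ as the curry of
\[ \bang(\Dbimon \linarrow X) \sm \Dbimon \xrightarrow{\id \sm \Dbimonca} \bang(\Dbimon\linarrow X) \sm \bang \Dbimon \xrightarrow{\mu} \bang((\Dbimon \linarrow X)\sm \Dbimon) \xrightarrow{\bang \ev} \bang X, \]
where $\mu$ is the lax monoidal structure of $\bangtext$. Naturality is clear. The coalgebra axioms of $\Dbimonca$, namely compatibility with $\der$ and $\dig$, give the distributive-law axioms. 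Compatibility of $\Sdl$ with $\Sinj_0$, $\SmonadSum$, $\Ssum$ and $\Slift$ follows from the fact that $\Ddiag$, $\Dbimonm$, $\Dproj_0$ and $\Dbimoncm$ are coalgebra morphisms. Indeed, each of these structure maps of $\S$ is $f \linarrow X$ for the corresponding bimonoid map $f$ via the mate correspondence, and $\mu$ is monoidal. Compatibility with $\Sproj_0$ comes from $\Dinj_0$ being a coalgebra morphism. Compatibility with $\sm$ and $\with$ uses the strengths $\SstrL$, $\SstrR$ and the isomorphism $\S(X\with Y)\cong \S X\with\S Y$, which are expressed through $\Dbimoncm$ and the diagonal and which exist by \cref{thm:representable-compatible}.

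\textbf{From distributive laws to coalgebras.} Conversely, from $\Sdl$ I will set $\Dbimonca$ to be the uncurrying at $X = \Dbimon$ of $\Sdl_{\Dbimon}\compl \dig'$, where $\dig'$ is the promotion of the unit $\cur(\smsym\text{-twisted identity}) : 1 \arrow \Dbimon \linarrow \Dbimon$. Put more plainly, $\Dbimonca = \ev\compl(\Sdl_{\Dbimon}\compl \prom{\cur(\smunitL)}\sm \Dbimon)$ up to unitors. I will then check that the two constructions are mutually inverse. One direction uses naturality of $\Sdl$ in $X$, i.e. Yoneda, since $\Dbimon \linarrow \_$ is representable. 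The other is a direct coherence computation.

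The main obstacle is to confirm that no hidden use of strong partition associativity occurs, particularly in the compatibility with $\with$ and $\sm$. In \cite{EhrhardWalch25} these rely on the equivalences $\sumiff$ in \cref{def:sum-sm} and in the product compatibility. I would handle this by noting that in our setting these are provided as structural morphisms by \cref{prop:ss-ll} and \cref{thm:representable-compatible}, so the diagrammatic proofs apply verbatim. Joint epicity of the $\Dinj_i$ (\cref{prop:representable-monicity}) replaces componentwise arguments wherever equalities of morphisms out of $\Dbimon$ are needed.
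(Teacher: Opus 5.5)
Your top-level plan matches the paper's own justification: the argument of \cite{EhrhardWalch25} is purely categorical and runs through the mate correspondence, so it transfers once the bimonad and bimonoid structures are in place. However, the step of your sketch that carries the content of the bijection is wrongly justified. You claim that $\Sdl$ is recovered from the coalgebra extracted at $\prom{\cur(\smunitL)}$ ``by naturality in $X$, i.e.\ Yoneda, since $\Dbimon \linarrow \_$ is representable''. No Yoneda reduction is available: the domain functor $\bang(\Dbimon \linarrow \_)$ is not representable, and $\bang$ is not an enriched (strong) functor. Naturality yields only two things.
(i) It gives the values of $\Sdl_X$ on promoted names $\prom{\cur(f \compl \smunitL)}$ of maps $f : \Dbimon \arrow X$. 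Such promoted points are not jointly epic in general: in the relational model, the matrices $\bang A \arrow \smone$ with supports $\{[a]\}$ and $\{[a],[a,a]\}$ agree on every $\prom u$, since $1+1=1$.
(ii) Writing $\cur(f) = (\Dbimon \linarrow f) \compl \cur(\id_{Z \sm \Dbimon})$, it shows that $\Sdl$ is determined by the whole natural family $\lambda_Z = \ev \compl \big((\Sdl_{Z \sm \Dbimon} \compl \bang(\cur(\id_{Z \sm \Dbimon}))) \sm \Dbimon\big) : \bang Z \sm \Dbimon \arrow \bang(Z \sm \Dbimon)$.
Point (ii) is just the mate bijection, and it stops there. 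Nothing in it collapses $\lambda$ onto its single component at $Z = \smone$, which is what ``$\Sdl$ is determined by $\Dbimonca$'' means.

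The missing ingredient is the monoidal ($\sm$) clause of \cref{def:taylor-expansion}, which your converse direction never invokes. Transport it through mates, with $\Sdist$ expressed through $\Dbimoncm$ and $\Sinj_0$ the mate of $\Dproj_0$. It then says that $\lambda$ commutes, up to $\Dbimoncm$, with the lax monoidal structure $\mu$ of $\bang$. Now instantiate the second factor at $\smone$, precompose with $\mu^0 : \smone \arrow \bang \smone$, and erase one copy of $\Dbimon$ with $\Dproj_0$, using the $\Sinj_0$ clause and the counit law of $\Dbimoncm$. This gives $\lambda_X = \mu \compl (\bang X \sm \Dbimonca)$ up to unitors, where $\Dbimonca$ is $\lambda_{\smone} \compl (\mu^0 \sm \Dbimon)$. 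Only with this does your round trip $\Sdl \mapsto \Dbimonca \mapsto \Sdl$ close.

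You also need, and do not mention, that the extracted $\Dbimonca$ is an analytic coalgebra. Its $\bang$-coalgebra laws follow from the $\der$/$\dig$ distributive-law axioms at $\smone$. That the bimonoid maps and $\Dinj_0$ are coalgebra morphisms follows from the bimonad and $\Sproj_0$ clauses, once the above reduction is available. A minor inaccuracy: $\SstrL$, $\SstrR$ and $\S(X \with Y) \cong \S X \with \S Y$ come from currying and from $\Dbimon \linarrow \_$ being a right adjoint. It is $\Sdist$, not the strengths or the product isomorphism, that is built from $\Dbimoncm$.
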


  The proof of \cite{EhrhardWalch25} also carries directly to our setting (it does 
  not use any assumption on summability, only a purely categorical 
  construction involving the mate isomorphism).

  \section{Taylor expansion in web models}

  \label{sec:web-model-representable}

  We prove in this section that our web models are representable, and feature 
  a Taylor expansion.
  For all set $A$ and all PCR $\rig$, define $\Ddiag_A \in \rig^A$ by $(\Ddiag_A)_a = 1$. 
  We then define  $\lone[A] \subseteq \rig^A$ and 
  $\linf[A] \subseteq \rig^A$ by
  $\lone[A] = \orth{\set{\Ddiag_A}}$ and $\linf[A] = \biorth{\set{\Ddiag_A}}$.
  In particular, 
  $\family{x_a} \in \lone[A]$ if and only if $\family{x_a}$ is summable and 
  $\sum_{a \in A} x_a \in \ball$.
  Then, we set $\Dbimon = (\N, \linf[\N])$.
  
  \begin{example}
  \begin{itemize}
  \item In probabilistic coherence spaces \cite{DanosEhrhard11}, $\lone[A]$ 
  is the set of all sub probability distributions 
  on $A$, and $\linf[A]$ is the set $[0,1]^A$.
  \item In finiteness spaces \cite{Ehrhard05}, $\lone[A]$ is the set of all $A$-indexed families 
  with finite support, and $\linf[A]$ is the set of all $A$-indexed families.
  \item In Köthe spaces \cite{Ehrhard02}, $\lone[A]$ is the set of all absolutely convergent 
  families, and $\linf[A]$ is the set of all bounded families.
  \end{itemize}
  More often than not, $\linf[A] = \points{\left(\withFam<A>[a] 1 \right)}$
  so $\Dbimon = \withFam<\N> 1$, as assumed in \cite{EhrhardWalch25}.
  However, this is not true in Köthe spaces in which 
  $\points{\left(\withFam<A>[a] 1 \right)}$ is the set of all $A$-indexed families.  
  \end{example}

  The object $\Dbimon$ is a summation object, with diagonal $\Ddiag = \Ddiag_{\N}$ and
  basis $\Dinj_i \in \rig^{\Web \Dbimon}$ defined by $(\Dinj_a)_{a'} = \kronecker a {a'}$ 
  (they belong to $\points \Dbimon$ by downward closure, since 
  $\Dinj_a \leq \Ddiag$).
  We now prove that the summability in $\WEB$ described in \cref{sec:summability-web}
  is representable. 

  \begin{lemma} \label{prop:Dbimon-iterated}
    For all $k \in \N$, $\points{\left(\Dbimon^{\sm k}\right)} = \linf[\N^k]$.
  \end{lemma}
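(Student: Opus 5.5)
We argue by induction on $k$, using \cref{cor:predual-characterization} at each step. For $k = 0$, $\Dbimon^{\sm 0} = \smone$, whose web is a singleton, and $\points\smone = \biorth{\set{1}}$. This is exactly $\linf$ on a one-point set, since $\Ddiag_{\set{*}} = (1)$. For $k = 1$ the statement is the definition of $\Dbimon$. For the inductive step, we write $\Dbimon^{\sm (k+1)} = \Dbimon^{\sm k} \sm \Dbimon$ and identify the webs $\N^k \times \N \cong \N^{k+1}$.

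Apply \cref{cor:predual-characterization} with $P = \set{\Ddiag_{\N^k}}$ and $Q = \set{\Ddiag_{\N}}$. Two hypotheses must be checked. First, $P$ and $Q$ are coverings, because every component equals $1$, which is invertible. Second, $\biorth{P} = \linf[\N^k] = \points{(\Dbimon^{\sm k})}$ by the induction hypothesis, and likewise $\biorth{Q} = \points\Dbimon$. The corollary then gives $\points{(\Dbimon^{\sm k}\sm\Dbimon)} = \biorth{(P \sm Q)} = \biorth{\set{\Ddiag_{\N^k} \sm \Ddiag_\N}}$. Finally, $(\Ddiag_{\N^k}\sm\Ddiag_\N)_{(\vec a, b)} = 1 \cdot 1 = 1$, so this vector is $\Ddiag_{\N^{k+1}}$ under the identification of webs. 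Hence the right-hand side is $\linf[\N^{k+1}]$.

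The step most likely to need care is the identification of webs. The set $\linf$ is defined through orthogonality, i.e. through scalar products indexed by the web, so we need it to be invariant under the bijection $\N^k\times\N \cong \N^{k+1}$. This follows from \cref{prop:reindexing}: reindexing along a bijection preserves summability and sums, hence preserves orthogonals. A second point is that \cref{cor:predual-characterization} is stated for $\rigpos$-spaces, so we must know that $\Dbimon^{\sm k}$ is an $\rigpos$-space. The tensor product of $\rigpos$-spaces is one, and $\Dbimon$ itself is one: $\linf[\N]$ contains $\Ddiag_\N$, and $\lone[\N]$ contains the basis vectors $e_n$, since $\scalar{e_n}{\Ddiag_\N} = 1 \in \ball$. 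This last membership needs $1\in\ball$, which holds in all the examples of \cref{ex:web-models-positive}; if it is not automatic in general, we would add it as an assumption.
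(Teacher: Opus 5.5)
Your proof is correct and is the paper's argument written out in full. The paper's one-line chain $\points{(\Dbimon^{\sm k})} = \biorth{(\biorth{\set{\Ddiag}} \sm \cdots \sm \biorth{\set{\Ddiag}})} = \biorth{\set{\Ddiag_{\N^k}}}$ ``iterates'' \cref{cor:predual-characterization} with the singleton coverings $\set{\Ddiag}$, which is exactly your induction. On your side remark, $1 \in \ball$ is stronger than needed: any invertible $r \in \ball$ already makes $\lone[\N]$ a covering via the vectors $r e_n$, and such an $r$ must exist anyway, because $\orth{\points{\smone}} = \ball$ has to be a covering for $\smone$ to be an $\rigpos$-space.
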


  \begin{proof} 
    $\points{\left(\Dbimon^{\sm k}\right)} = 
  \biorth{\left(\biorth{\set{\Ddiag}} \sm \cdots \sm \biorth{\set{\Ddiag}}\right)} 
  \eqnum{1}= \biorth{\left(\set{\Ddiag} \sm \cdots \sm \set{\Ddiag}\right)} 
  = \biorth{\set{\Ddiag_{\N^k}}} = \linf[\N^k]$
  where \texteqnum{1} is obtained by iterating
  \cref{cor:predual-characterization}.
  \end{proof}

  \begin{theorem}
  The sum in $\WEBPOS$ and $\WEB$ is representable.
  \end{theorem}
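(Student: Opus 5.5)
We plan to use the uncurried characterization of \cref{prop:representable-uncurry}. We already know that $\Dbimon = (\N, \linf[\N])$ is a summation object: the $\Dinj_i$ lie in $\points\Dbimon$, and they are jointly epic because a matrix $s$ with web $\N \times \web Y$ is determined by its rows $s \cdot \Dinj_i$. The zero morphism is the zero matrix, which is the sum of the empty family. So everything reduces to condition 1 and condition 2 of \cref{prop:representable-uncurry}, with the PCM structure on $\points(X \linarrow Y)$ (resp.\ $\semimod{(X \linarrow Y)}$) from \cref{sec:summability-web}.

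We treat $\WEBPOS$ first. Write $Z = X \linarrow Y$. Given $h \in \points{(X \sm \Dbimon^{\sm n} \linarrow Y)}$, the matrix $h \cdot (X \sm \Dinj_{i_n} \sm \cdots \sm \Dinj_{i_1})$ is simply the slice $f_{\vect i}$ with $(f_{\vect i})_{a,b} = h_{(a,\vect i),b}$. Applying $h$ to $X \sm \Ddiag_{\N^n}$ gives the pointwise sum $\sum_{\vect i} f_{\vect i}$. So the statement amounts to the following bijection. The slices of a matrix $h$ give an element of $\points(X \sm \Dbimon^{\sm n} \linarrow Y)$ if and only if the family $(f_{\vect i})$ is summable in $\points Z$, meaning its pointwise sum lies in $\points Z$.

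To prove this, we use \cref{thm:predual-characterization} with $P = \points X \sm \set{\Ddiag_{\N^n}}$, which satisfies $\biorth P = \points(X \sm \Dbimon^{\sm n})$ by \cref{cor:predual-characterization} and \cref{prop:Dbimon-iterated}, and with $Q = \points\orth Y$. The set $P$ is a covering because $\points X$ is a covering and $\Ddiag$ has entries $1$. Then $h \in \points(X \sm \Dbimon^{\sm n} \linarrow Y)$ if and only if for all $x \in \points X$ and $y' \in \points\orth Y$, the scalar product $\scalar{h}{x \sm \Ddiag \sm y'}$ is defined and lies in $\ball$. By \ref{ax:pa} in the strong PCR $\rigpos$, this scalar product equals $\scalar{\sum_{\vect i} f_{\vect i}}{x \sm y'}$, with each side defined exactly when the other is. By \cref{rem:linarrow-predual}, the latter condition over all $x, y'$ is exactly $\sum_{\vect i} f_{\vect i} \in \points Z$. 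This also yields the sum formula in condition 2.

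The step we expect to be the main obstacle is the definedness bookkeeping. We must make sure that the pointwise sum $\sum_{\vect i} f_{\vect i}$ is defined as a matrix before we rearrange the scalar product. For this we plan to use the covering principle (\cref{prop:covering-principle-sm}), or equivalently \ref{ax:pa}, which in the strong case gives definedness of all inner sums from definedness of the total sum.

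For $\WEB$ we reduce to the positive case via $\pos{\cdot}$. By \cref{def:summability-web}, $(f_{\vect i})$ is summable in $\semimod Z$ if and only if $(\pos{f_{\vect i}})$ is summable in $\points Z$. The matrix $h$ built from the slices satisfies $\pos h$ equal to the matrix built from the $\pos{f_{\vect i}}$. Hence, by the positive case, $h \in \semimod{(X \sm \Dbimon^{\sm n} \linarrow Y)}$ if and only if the family is summable. The sum formula then follows from the triangle inequality: $h \cdot (x \sm \Ddiag)$ is absolutely convergent, so it is defined and equals the pointwise sum $\sum_{\vect i} f_{\vect i} \cdot x$.
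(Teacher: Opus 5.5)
Your proposal is correct and follows essentially the paper's proof. Both reduce to \cref{prop:representable-uncurry}, take $h$ to be the matrix of slices, and chain the same steps:
\begin{itemize}
\item the predual characterization \cref{thm:predual-characterization}, with $\points(X \sm \Dbimon^{\sm n}) = \biorth{(\points X \sm \set{\Ddiag_{\N^n}})}$ obtained from \cref{prop:Dbimon-iterated};
\item the rearrangement of $\scalar{h}{x \sm \Ddiag \sm y'}$ by \ref{ax:pa};
\item the covering principle;
\item the predual description of $\points(X \linarrow Y)$.
\end{itemize}
The only difference is cosmetic: the paper works directly with $\pos{h}$ in $\WEB$ and treats $\WEBPOS$ as the case where the absolute value is the identity, while you go the other way. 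One small precision: for a fixed pair $x, y'$ the two scalar products are not equi-defined, since a zero factor $x_a y'_b$ can hide non-summability of $\sum_{\vect i} (f_{\vect i})_{a,b}$. \ref{ax:pa} alone cannot remove that factor, so the equivalence only holds after quantifying over the coverings $\points X$ and $\points \orth Y$ and using invertible entries — which is exactly the covering-principle step you anticipate.
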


  \begin{proof}
    It suffices to do the proof for $\WEB$ ($\WEBPOS$ is a special instance 
    of $\WEB$ in which $\abs{.}$ is the identity function).
  We prove that $\Dbimon$ satisfies the conditions of \cref{prop:representable-uncurry}.
  Let $(f(i_1, \ldots, i_k))_{i_1, \ldots, i_k \in \N}$ be a family of 
  $\semimod{X \linarrow Y}$.
  Define $h \in \vrig{X \sm \Dbimon^{\sm k} \linarrow Y}$ by 
  $h_{a, i_k, \ldots, i_1, b} = f(i_1, \ldots, i_k)_{a, b}$
  so that $f(i_1, \ldots, i_k) = h \cdot (X \sm \Dinj_{i_k} \sm \cdots \sm \Dinj_{i_1}) \cdot 
  (\tensorUnitR^{k})^{-1}$ 
  and $\sum_{i_1, \ldots, i_k \in \N} f(i_1, \ldots, i_k) 
  \sumiff h \cdot (X \sm \Ddiag \sm \cdots \sm \Ddiag) \cdot (\tensorUnitR^{k})^{-1} 
  \sumiff h \cdot (X \sm \Ddiag_{\N^k}) \cdot (\tensorUnitR^{k})^{-1}$.
  It suffices to prove that $(f(i_1, \ldots, i_k))_{i_1, \ldots, i_k \in \N}$
  is summable in $\semimod{X \linarrow Y}$ if and only if 
  $h \in \semimod {X \sm \Dbimon^{\sm k} \linarrow Y}$
  to conclude. We proceed by the following chain of equivalences.
  \begin{align*}
    &(f_{i_1, \ldots, i_k})_{i_1, \ldots, i_k \in \N} 
      \text{ is summable in } \semimod{X \linarrow Y} \\
    &\iff (\pos{f}_{i_1, \ldots, i_k})_{i_1, \ldots, i_k \in \N} 
    \text{ is summable in } \points{(X \linarrow Y)} \\ 
    &\eqnum{1}\iff \forall x \in \points X, \forall y' \in \points {\orth Y}, 
    \scalar{\sum_{i_1, \ldots, i_k \in \N} \pos{f}_{i_1, \ldots, i_k}}{x \sm y'}
    \text{is defined and in } \ball \\ 
    &\eqnum{2}\iff \forall x \in \points X, \forall y' \in \points {\orth Y}, 
    \scalar{\pos{h}}{x \sm \Ddiag_{\N^k} \sm y'} \text{ is defined and in } \ball \\ 
    &\eqnum{3}\iff \pos{h} \in \points{(X \sm \Dbimon^{\sm k} \linarrow Y)}
    \iff h \in \semimod{X \sm \Dbimon^{\sm k} \linarrow Y}
  \end{align*}
  where \texteqnum{1} is the predual characterization of summability given 
  in \cref{prop:summability-predual}, \texteqnum{2} holds by the covering 
  principle of \cref{prop:covering-principle-sm} ($\set{\Ddiag_{\N}}$ is 
  a covering), and \texteqnum{3} holds by 
  the predual characterization of \cref{thm:predual-characterization}, using 
  that $\points{\left(\Dbimon^{\sm k}\right)} = \biorth{\set{\Ddiag_{\N^k}}}$ 
  by \cref{prop:Dbimon-iterated}.
  \end{proof}

  Finally, the analytic coalgebra on 
  $\Dbimon$ is given by 
  \(\Dbimonca \in \vrig{\Dbimon \linarrow \bang \Dbimon}\) with  
  $\Dbimonca_{i, [i_1, \ldots, i_n]} = \kronecker{i}{i_1 + \cdots + i_n}$, so that
  $(\Dbimonca \cdot x)_{[i_1, \ldots, i_n]} = x_{i_1 + \cdots + i_n}$.
  %
  Independently of any specificities of the model, we always 
  have that 
  $\pos{\Dbimonca} \cdot \pos{\Ddiag} 
  = \pos{\prom{\Ddiag}} \in \points{(\bang \Dbimon)}$ so by 
  \cref{thm:predual-characterization}, $\Dbimonca \in \semimod{\Dbimon \linarrow 
  \bang \Dbimon}$.
  Thus, this analytic coalgebra generalizes to our generic 
  web models the analytic coalgebra of the 
  weighted relational model~\cite{EhrhardWalch25}. 
  This analytic coalgebra induces a Taylor expansion $\Sdl : \bang \S \naturalTrans 
  \S \bang$ that in turns induces a functor $\T$ on $\klexp$.
  For all $s \in \semimod{\bang X \linarrow Y}$, 
  $\T s \in \vrig{\bang \S X \linarrow \S Y}$ is
  defined as in the examples of \cite{EhrhardWalch25} by
  \begin{equation} \label{eq:taylor-functor}
    (\T s)_{[(i_1, a_1), \ldots, (i_k, a_k)], (j, b)} 
  = \kronecker{i_1 + \cdots + i_k}{j} \frac{\factorial{[a_1, \ldots, a_k]}}{
    \factorial{[(i_1, a_1), \ldots, (i_k, a_k)]}} s_{[a_1, \ldots, a_k], b}.
  \end{equation}
  We know \emph{by construction} that 
  $\T s$ belongs to $\semimod{\bang \S X \linarrow \S Y}$, for \emph{all} our web models.
  It is proved in \cite{EhrhardWalch25} that this functor 
  corresponds formally to the intuition given in 
  \cref{eq:coherent-Tinf-explicit}, using the 
  differential linear logical structure 
  of the weighted relational model~\cite{BluteCockettSeely06}.

\section{Conclusion and future work}

In~\cref{part:web-models}, we have built a generic construction of 
web models, based on strong and absolute PCM. It unifies various models of the literature. 
In \cref{part:taylor}, we have described a theory of Taylor expansion 
in categories enriched over PCMs. Web models given by our generic construction feature 
such Taylor expansion as proved in \cref{sec:web-model-representable}. 
Thus, Taylor expansion is compatible 
with the computational properties described by partial sums, such as determinism,
finite or countable non-determinism, randomness,
and absolute convergence.

\textbf{Future work.}
We conjecture that the exponential of our web models is always Lafont (see 
\cref{rem:lafont}). 
Our theory of coherent Taylor expansion 
captures a wide class of web models, but there are no known web-free examples. 
We conjecture that the analytic version~\cite{KerjeanTasson18} 
of the category of convenient 
vector spaces \cite{BluteEhrhardTasson10} is not representable but 
features a coherent Taylor expansion. 
%
%
Finally, although we described our web models with a strong emphasis 
on the \emph{algebraic} point of view of partial sums,
it should be possible to adopt a topological viewpoint on absolute PCMs, as it is done  in
Köthe spaces~\cite{Ehrhard02}
and finiteness spaces~\cite{Ehrhard05}.

\bibliography{biblio}

\appendix 

\section{Proofs on web based models}

\label{app:web-models}

\label{app:web-models-positive}

\scalarrearanging*

\begin{proof}
Observe that 
\[ \scalar{s \cdot x}{y}
= \sum_{b \in \web Y} \left(\sum_{a \in \web X} s_{a, b} x_a \right) y_b 
\eqnum{1}\sumsub \sum_{b \in \web Y} \sum_{a \in \web X} (s_{a, b} x_a y_b) 
\eqnum{2}\sumiff \ \smashoperator{\sum_{a \in \web X, b \in \web Y}} \ s_{a, b} x_a y_b 
=  \scalar{s}{x \sm y}\]
where \texteqnum{1} is by distributivity of multiplication over sums, and 
\texteqnum{2} is by \ref{ax:pa}. Furthemore, if 
$s \cdot x$ is defined, then \texteqnum{1} turns into an equivalence 
$\sumiff$ (but not in general, consider $y_b = 0$).
The second item is proved in the same way.
\end{proof}

\coveringtwo*

\begin{proof}
  Let $b_0 \in \web Y$. 
  We prove that $(s \cdot x)_{b_0}$ is defined.
  By covering, there exists $y \in F$ such that $y_{b_0}$ 
  is invertible. By assumption, $(s_{a, b} x_a y_b)_{a \in \web{X}, b \in \web Y}$
  is summable, so by \cref{prop:subfamily-summable}
  $(s_{a, b_0} x_a y_{b_0})_{a \in \web X}$ is summable.
  But then
  $y_{b_0}^{-1} \left(\sum_{a \in \web X} s_{a, b_0} x_a y_{b_0} \right)
  \sumsub \sum_{a \in \web X} s_{a, b_0} x_a y_{b_0} y_{b_0}^{-1} 
  = \sum_{a \in \web X} s_{a, b_0} x_a$ so 
  $(s_{a, b_0} x_a)_{a \in \web X}$ is summable.
\end{proof}

\linarrowcharacterization*

\begin{proof}
We first prove $(1) \iff (2)$. By \cref{rem:linarrow-predual},
\begin{align*} 
  s \in \points{(X \linarrow Y)} 
  &\iff \forall x \in \points X, \forall y' \in \points Y, 
  \scalar{s}{x \sm y'} \text{ is defined and in $B$.} \\
  \orth{s} \in \points{(\orth Y \linarrow \orth X)}
  &\iff \forall x \in \points X, \forall y' \in \points Y, 
  \scalar{\orth s}{y' \sm x} \text{ is defined and in $B$.} 
\end{align*}
We can easily check that $\scalar{s}{x \sm y'} \sumiff 
\scalar{\orth{s}}{y' \sm x}$, which concludes the proof of equivalence.
Next, we prove $(1) \imply (3)$. Let $s \in \points{(X \linarrow Y)}$ and 
$x \in \points X$. First, observe that 
$s \cdot x$ is well-defined by covering principle 
(\cref{prop:covering-principle-sm}), since 
$\points \orth{Y}$ is a covering.
Then, for all $y' \in \orth{Y}$, 
$\scalar{s \cdot x}{y'} \sumiff \scalar{s}{x \sm y'}$ by 
\cref{prop:scalar-rearanging}. Since
$s \orthrel {x \sm y'}$ by assumption, we get that $s \cdot x \orthrel y'$.
It concludes the proof of $(3)$. 
The proof of the reverse implication $(3) \imply (1)$ relies on 
\cref{prop:scalar-rearanging} in a similar same way. 
Finally, $(2) \iff (4)$ holds by applying the equivalence
$(1) \iff (3)$ on $\orth{s}$.
\end{proof}

\compositionwebpos*

\begin{proof}
Let $s \in \points{(X \linarrow Y)}$ and $t \in \points{(Y \linarrow Z)}$.
First, let us prove that $t \cdot s$ is well-defined. By covering, for all 
$a_0 \in \web X$ there exists $x \in \points X$ such that $x_{a_0}$ is invertible.
Then, $(t \cdot (s \cdot x))$ is defined by iterating 
\cref{prop:linarrow-characterization} twice. But
\[ (t \cdot (s \cdot x))_{c} 
= \sum_{b \in \web Y} t_{b, c} \left(\sum_{a \in \web X} s_{a, b} x_a \right)
\sumsub \sum_{b \in \web{Y}} \sum_{a \in \web X} t_{b, c} s_{a, b} x_a 
\overset{\text{\ref{ax:pa}}}{\sumiff} \ 
\smashoperator{\sum_{a \in \web{X}, b \in \web{Y}}} \ s_{a, b} t_{b, c} x_a. \]
It follows from \ref{ax:wpa} that $(s_{a_0, b} t_{b, x} x_{a_0})_{b \in \web Y}$ is summable,
so by invertibility of $x_{a_0}$, $(s_{a_0, b} t_{b, c})_{b \in \web Y}$ 
is summable.
That is, $t \cdot s$ is well-defined. We now prove that 
it belongs to $\points{(X \linarrow Z)}$ using \cref{prop:linarrow-characterization}.
Let $x \in \points X$. Then,
\[ ((t \cdot s) \cdot x)_c 
= \sum_{a \in \web X} \left(\sum_{b \in \web Y} s_{a, b} t_{b, c} \right) x_a 
\eqnum{1}{\sumiff} \sum_{a \in \web X} \left(\sum_{b \in \web Y} s_{a, b} t_{b, c} x_a \right). \] 
\[ (t \cdot (s \cdot x))_c 
= \sum_{b \in \web Y} t_{b, c} \left( \sum_{a \in \web X} s_{a, b} x_a \right) 
\eqnum{2}{\sumiff}  \sum_{b \in \web{Y}} \left( \sum_{a \in \web X} s_{a, b} t_{b, c} x_a \right). \]
Be aware: \texteqnum{1} and \texteqnum{2}
hold because we already know that $t \cdot s$ and $s \cdot x$ are well-defined.
Otherwise, it would not be an equivalence.
Then, by \ref{ax:pa}, $(t \cdot s) \cdot x \sumiff t \cdot (s \cdot x)$.
We conclude that $t \cdot s$ is well-defined 
and belongs to $\points{(X \linarrow Z)}$ by 
iterating \cref{prop:linarrow-characterization} twice.
\end{proof}

\predualcharacterization*

First, observe that $(1) \imply (2)$ is trivial.

\begin{proof}[Proof that $(2) \imply (3)$]
  Let $s \in \orth{(P \sm Q)}$. Then for all 
  $x \in P$, $s \cdot x$ is well defined by 
  covering principle (\cref{prop:covering-principle-sm}).
  Furthermore, for all $y' \in Q$, 
  $\scalar{s}{x \sm y'} \sumiff \scalar{s \cdot x}{'y'}$ by 
  \cref{prop:scalar-rearanging}.
  But $s \orthrel x \sm y'$ so we deduce that $s \cdot x \orthrel y'$.
  Thus, $s \cdot x \in \orth Q = \points Y$.
\end{proof}

\begin{proof}[Proof that $(3) \imply (2)$]
  Assume that $s$ satisfies $(3)$. Then we prove 
  that $s \in \orth{(P \sm \points{\orth Y})} \subseteq \orth{(P \sm Q)}$. 
  Let $x \in P$ and $y' \in \orth{(\points Y)}$.
  Then by assumption $s \cdot x$ is defined 
  so by \cref{prop:scalar-rearanging} $\scalar{s}{x \sm y'} \sumiff 
  \scalar{s \cdot x}{y'}$ and we have that
  $s \orthrel x \sm y'$. Thus, $s \in \orth{(P \sm \orth{\points Y})}$.
\end{proof}

  The equivalence $(2) \Leftrightarrow (4)$ holds by 
  a similar argument, since 
  $\scalar{s}{x \sm y'} \sumiff \scalar{\orth s}{y' \sm x}$.

\begin{proof}[Proof that $(3) \wedge (4) \imply (1)$]
Assume that $s$ satisfies $(3)$ and $(4)$. 
First, we know from the proof 
of $(3) \imply (2)$ that $s \in \orth{(P \sm \orth{\points Y})}$.
Second, by the implication 
$(2) \imply (4)$ applied on $P$ and $\orth{\points Y}$, 
we know that for all $y' \in \points \orth{Y}$, 
$\orth s \cdot y'$ is defined and belongs to 
$\orth{\points X}$. So by the characterization of \cref{prop:linarrow-characterization},
we conclude that $s \in \points{(X \linarrow Y)}$.
\end{proof}

\section{Results on summability structures}

\subsection{Comparison with the product and the coproduct}

\label{app:ss-coproduct-product}

We prove here that the
summability structure sits in between the coproducts and 
the product. As stated in \cref{prop:bimonad-structure}, 
for all $i \in \N$ and all object $X$
there exists a morphism \(\Sinj_i \in\catLL(X,\S X)\) such
that $\Sproj_j \compl \Sinj_i = \kronecker i j$. 
Furthermore, $\Ssum \compl \Sinj_i = \id$. This is an immediate 
consequence of \ref{ax:unary} and \ref{def:ss-sum}.
We now prove that there is a \emph{factorization} 
\begin{equation} \label{eq:factorization-ss}
\begin{tikzcd}[column sep = large]
	{\bigoplus_{i=0}^n X } & {\S X} & {\prod_{i = 0}^n X.}
	\arrow["{\coprodPairing<\N>{\Sinj_i}}", from=1-1, to=1-2]
	\arrow["{\prodPairing<\N>{\Sproj_i}}", from=1-2, to=1-3]
\end{tikzcd}
\end{equation}
Observe that  $\prodPairing<\N>{\Sproj_i}$ is a mono, this is an immediate
consequence of the joint monicity of the $\Sproj_i$ given
by \ref{def:ss-proj} and the uniqueness of the pairing. We now prove that 
$\coprodPairing<\N>{\Sinj_i} : \bigoplus_{i=0}^n X \arrow \S X$
is an epi. For a similar reason, it suffices to prove that 
the $\Sinj_i$ are jointly epic. This is done in \cref{prop:summability-of-pieces}
and \cref{prop:Sinj-jointly-epic} below.
  
\begin{proposition} \label{prop:summability-of-pieces}
The family $\sequence{\Sinj_i \compl \Sproj_i}$ of $\catLL(\S X, \S X)$ is summable, 
with sum equal to $\id_{\S X}$.
\end{proposition}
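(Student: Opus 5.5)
To show summability I will use \ref{def:ss-sum-1} with $n=1$: it is enough to produce a morphism $h \in \catLL(\S X, \S\S X)$ with $\Sproj_i \compl h = \Sinj_i \compl \Sproj_i$ for every $i \in \N$. For $h$ I take the comonad comultiplication $\Slift : \S \naturalTrans \S\S$ from \cref{prop:bimonad-structure}, characterized by $\Sproj_j \compl \Sproj_i \compl \Slift = \kronecker ij \Sproj_i$.

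The first step checks $\Sproj_i \compl \Slift = \Sinj_i \compl \Sproj_i$ as morphisms $\S X \arrow \S X$. Since the $\Sproj_j$ are jointly monic by \ref{def:ss-proj}, it suffices to postcompose with every $\Sproj_j$. On the left, $\Sproj_j \compl \Sproj_i \compl \Slift = \kronecker ij \Sproj_i$. On the right, $\Sproj_j \compl \Sinj_i \compl \Sproj_i = \kronecker ij \compl \Sproj_i$, using the characterization of $\Sinj_i$. These agree once we identify $\kronecker ij \Sproj_i$ with $\kronecker ij \compl \Sproj_i$: for $i=j$ both sides are $\Sproj_i$, and for $i \neq j$ both are the zero morphism, because $0$ is absorbing. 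Hence $\Slift$ witnesses the family and $(\Sinj_i \compl \Sproj_i)_{i\in\N}$ is summable.

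The second step computes the sum with \ref{def:ss-sum-2}: $\sum_i \Sinj_i \compl \Sproj_i = \Ssum_{\S X} \compl \Slift_X$. By the counit law of the comonad $(\S, \Ssum, \Slift)$ (Theorem 74 of \cite{EhrhardWalch25}), $\Ssum_{\S X} \compl \Slift = \id_{\S X}$. To avoid relying on which of the two counit laws is meant, I can also check this directly with joint monicity. Naturality of $\Ssum$ gives $\Sproj_j \compl \Ssum_{\S X} = \Ssum_X \compl \S\Sproj_j$, so $\Sproj_j \compl \Ssum_{\S X} \compl \Slift = \Ssum \compl \S\Sproj_j \compl \Slift$. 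The morphism $\S\Sproj_j \compl \Slift$ is the witness of the family $(\Sproj_i \compl \Sproj_j \compl \Slift)_i = (\kronecker ji \Sproj_j)_i$. By \ref{def:ss-sum-2} its sum is the sum of a family with a single nonzero term, which equals $\Sproj_j$ by \cref{prop:zero-neutral} and \ref{ax:unary}. So $\Sproj_j \compl \Ssum_{\S X} \compl \Slift = \Sproj_j = \Sproj_j \compl \id$ for every $j$, and joint monicity gives $\Ssum_{\S X} \compl \Slift = \id_{\S X}$.

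The main obstacle is the bookkeeping in the second step: keeping track of which of $\S\Sproj_j$ and $\Sproj_{j,\S X}$ appears, and which index order the characterization of $\Slift$ uses. If the index convention turns out to be reversed, the same argument still goes through, because the characterization is symmetric under $\kronecker ij$.
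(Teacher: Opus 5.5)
Your proof is correct and follows the paper's argument. The witness the paper builds from the summable $\N^2$-family $(\kronecker{i}{j}\,\Sproj_i)_{i,j\in\N}$ satisfies exactly the equations characterizing $\Slift$, and your identification $\Sproj_i \compl \Slift = \Sinj_i \compl \Sproj_i$ by joint monicity is the same step. For the sum, the paper uses left distributivity of composition (\cref{prop:ss-is-pcm-cat}) to get $\Sproj_k \compl \sum_{i} \Sinj_i \compl \Sproj_i = \sum_i \kronecker{i}{k}\,\Sproj_i = \Sproj_k$ and then concludes by joint monicity; your direct check computes the same thing by unfolding naturality of $\Ssum$, and it is the better of your two options because it does not depend on the comonad laws of $(\S,\Ssum,\Slift)$ carrying over from the strong setting.
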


\begin{proof}
Observe that
$\Sproj_j \compl \Sinj_i \compl \Sproj_i = \kronecker i j \ \Sproj_i$. 
It follows from the summability of $\sequence{\Sproj_i}$ and 
\cref{prop:zero-neutral} that
$(\Sproj_j \compl \Sinj_i \compl \Sproj_i)_{i, j \in \N}$ is 
summable. But then, $\Sproj_j \compl \Sproj_i \compl 
\Spairing[(j,i)]<\N^2>{\Sproj_j \compl \Sinj_i \compl \Sproj_i}
=  \Sproj_j \compl \Sinj_i \compl \Sproj_i$ so by joint monicity
of the $\Sproj_j$, 
$\Sproj_i \Spairing[(j,i)]<\N^2>{\Sproj_j \compl \Sinj_i \compl \Sproj_i}
= \Sinj_i \compl \Sproj_i$ and
$\sequence{\Sinj_i \compl \Sproj_i}$ is summable by \ref{def:ss-sum}.
Then by \cref{prop:ss-is-pcm-cat},
\[ \Sproj_k \compl \left(\sum_{i \in \N} \Sinj_i \compl \Sproj_i \right) 
\sumsub \sum_{i \in \N} (\Sproj_k \compl \Sinj_i \compl \Sproj_i)
\sumsub \sum_{i \in \N} (\kronecker i k \ \Sproj_i)
= \Sproj_k \text{ by \ref{ax:unary}} \]
so by joint monicity of the $\Sproj_k$, $\sum_{i \in \N} \Sinj_i \compl \Sproj_i 
= \id_{\S X}$. 
\end{proof}

\begin{corollary} \label{prop:Sinj-jointly-epic}
The $\Sinj_i$ are jointly epic: for all 
$f, g : \S X \arrow Y$, if $f \compl \Sinj_i = g \compl \Sinj_i$ for 
all $i \in \N$, then $f = g$.
\end{corollary}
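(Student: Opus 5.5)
We derive the corollary directly from \cref{prop:summability-of-pieces}, which writes the identity of $\S X$ as the sum of the family $\sequence{\Sinj_i \compl \Sproj_i}$. The idea is the usual one for biproducts: precompose $f$ and $g$ with this decomposition of the identity and use distributivity of composition over sums.

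Let $f, g : \S X \arrow Y$ with $f \compl \Sinj_i = g \compl \Sinj_i$ for all $i \in \N$. By \cref{prop:ss-is-pcm-cat}, $\catLL$ is a PCM category, so composition is right distributive (as noted after \cref{def:pcm-category}). This gives
\[ f = f \compl \id_{\S X} = f \compl \Big(\sum_{i \in \N} \Sinj_i \compl \Sproj_i\Big) \sumsub \sum_{i \in \N} f \compl \Sinj_i \compl \Sproj_i . \]
Since the left-hand side is defined, the family $(f \compl \Sinj_i \compl \Sproj_i)_{i \in \N}$ is summable with sum $f$. The same argument for $g$ shows that $(g \compl \Sinj_i \compl \Sproj_i)_{i \in \N}$ is summable with sum $g$.

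By hypothesis, $f \compl \Sinj_i \compl \Sproj_i = g \compl \Sinj_i \compl \Sproj_i$ for every $i$. The two summable families are therefore identical, so their sums coincide and $f = g$.

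There is no real obstacle here. The only point needing care is the direction of $\sumsub$: we only use the implication from left to right, which is exactly what distributivity in a PCM category provides. Summability of the family $\sequence{\Sinj_i \compl \Sproj_i}$ is already given by \cref{prop:summability-of-pieces}. As the surrounding text explains, joint epicity of the $\Sinj_i$, together with the universal property of the coproduct, then yields that $\coprodPairing<\N>{\Sinj_i}$ is an epi.
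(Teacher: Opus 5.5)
Your proof is correct and is essentially the paper's own argument: write $\id_{\S X} = \sum_{i \in \N} \Sinj_i \compl \Sproj_i$ using \cref{prop:summability-of-pieces}, distribute $f$ and $g$ over this sum, and observe that the two resulting families coincide termwise. The paper writes this as a single chain $f \sumsub \sum_{i} f \compl \Sinj_i \compl \Sproj_i = \sum_{i} g \compl \Sinj_i \compl \Sproj_i \sumsubinv g$, which is your two applications of distributivity joined together.
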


\begin{proof}
By a straightforward computation,
\[ f = f \compl \left( \sum_{i \in \N} \Sinj_i \compl \Sproj_i \right)
\sumsub \sum_{i \in \N} f \compl \Sinj_i \compl \Sproj_i 
= \sum_{i \in \N} g \compl \Sinj_i \compl \Sproj_i 
\sumsubinv g \compl \left( \sum_{i \in \N} \Sinj_i \compl \Sproj_i \right)
= g \qedhere \]
\end{proof}

When the category has arbitrary sums (either finite or countable), 
then the hierarchy of \cref{eq:factorization-ss} collapses into 
an isomorphism bewteen the coproduct, the product, and 
the summability structure, and the category has countable 
biproducts. 
This is an infinitary counterpart of a well known
result on finite biproducts, see for example Chapter 8 of \cite{Maclane71}.
Observe the similarity between the proof of this result and 
the proof of \cref{prop:Sinj-jointly-epic}.

\begin{proposition}[Proposition 237 of \cite{EhrhardWalch25}] 
  If the PCM category $\catLL$ is total (the PCM structure on each homset is 
  a total function) then it has countable biproducts.
\end{proposition}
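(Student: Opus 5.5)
Suppose every family of morphisms in $\catLL(X, Y)$ is summable. I would exhibit, for every countable index set $I$ and object $X$, a single object with both a product and a coproduct structure. Reindexing by \cref{prop:reindexing} lets me take $I = \N$ or a finite set. For $I = \N$ the candidate is $\S X$, with projections $\Sproj_i$ and injections $\Sinj_i$ from \cref{prop:bimonad-structure}. For a family of possibly different objects $(X_i)_{i \in I}$ this does not apply directly, so I would instead follow the classical argument of Chapter 8 of \cite{Maclane71}. Fix families $\proj_i$, $\inj_i$ satisfying $\proj_j \compl \inj_i = \kronecker i j$ and $\sum_i \inj_i \compl \proj_i = \id$; the candidate object is the countable product $\withFam X_i$, if the category has it.

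Since the statement only assumes totality, I would rather obtain the biproduct from the summability structure. I assume the intended setting is a category with a summability structure $\S$ whose PCM is total, as in \cite{EhrhardWalch25}. The product part comes from \ref{def:ss-sum-1}: for a family $f_i \in \catLL(Z, X)$, totality makes $\sequence{f_i}$ summable, so there is a witness $\Spairing{f_i} : Z \arrow \S X$ with $\Sproj_i \compl \Spairing{f_i} = f_i$. This witness is unique by joint monicity \ref{def:ss-proj}, so $(\S X, \Sproj_i)$ is an $\N$-indexed power of $X$. The coproduct part uses the copairing of $g_i : X \arrow Z$, which I would define as $\sum_{i \in \N} g_i \compl \Sproj_i$; this sum exists by totality. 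Then \cref{prop:ss-is-pcm-cat} and \ref{ax:unary} give
\[
\left(\sum_i g_i \compl \Sproj_i\right) \compl \Sinj_j = \sum_i g_i \compl \kronecker i j = g_j .
\]
Uniqueness of the copairing follows from \cref{prop:Sinj-jointly-epic}. So $\S X$ is simultaneously the product and the coproduct of $\N$ copies of $X$, and $\Sproj_j \compl \Sinj_i = \kronecker i j$ is the biproduct equation.

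For arbitrary families $(X_i)_{i\in I}$, I would build the object as a retract of $\S(\bigoplus\ldots)$, which is awkward. A cleaner route is to take the product $\withFam X_i$ as given and produce the coproduct injections $\inj_i = \prodPairing{\kronecker i j}$. The copairing of $g_i : X_i \arrow Z$ is then $\sum_i g_i \compl \proj_i$, which exists by totality. Strong distributivity gives $\left(\sum_i g_i \compl \proj_i\right) \compl \inj_j = g_j$. For uniqueness I need $\sum_i \inj_i \compl \proj_i = \id$. This holds because $\proj_k \compl \sum_i \inj_i \compl \proj_i = \sum_i \kronecker i k \proj_i = \proj_k$, using left distributivity and \cref{prop:zero-neutral}, and then uniqueness of the pairing.

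The main obstacle is the existence of the underlying product when only a summability structure is given. I would handle this by fixing a common ambient object: if a product is not available, I would work with homogeneous families via $\S$, and rely on the hypothesis of \cite{EhrhardWalch25} that the relevant products exist. Everything else is just zero-absorption and the distributivity laws of a PCM category.
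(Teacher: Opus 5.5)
Your third paragraph is essentially the paper's proof: take the countable product $\withFam X_i$ as given (the paper also assumes it exists), set $\inj_i = \prodPairing[j]{\kronecker i j}$, use totality to define the copairing $\sum_i g_i \compl \proj_i$, and get uniqueness from $\sum_i \inj_i \compl \proj_i = \id$, which you verify correctly via left distributivity, \cref{prop:zero-neutral} and uniqueness of the pairing. The detour through $\S X$ for homogeneous families is also correct, but it is not needed.
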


\begin{proof}
Having $I$-indexed biproducts means that the cartesian product 
$\prod_{i \in I} \_$ is also a coproduct.
We define the injections $\inj_i : X_i \arrow \prod_{i \in I} X_i$ 
by $\Sinj_i = \prodPairing[j]{\kronecker i j \id_X}$.
The universal property of the coproduct is proved using the fact that 
\[ \sum_{i \in} \Sinj_i \compl \proj_i = \id. \qedhere \]
\end{proof}

\subsection{Compatibility with the categorical structure}

\label{app:compatibility-pcm-ll}

\compatibilitypcmll*

\begin{proof}[Proof of Item 1]
  The forward implication is trivial, because \cref{eq:sm-strength}
  is a particular instance of \cref{def:sum-sm}. The proof 
  of the reverse direction is very similar 
  to the proof of \cref{prop:ss-is-pcm-cat}.
  Let $g : X_2 \arrow Y_2$ be a morphism, and
  $(f_i)_{a \in A}$ be a summable 
  family of morphisms in $X_1 \arrow X_2$.  Let 
  $\psi : A \injection \N$ be an injection.
  Let $(f_i')_{i \in \N} = \famact{\psi}(f_a)_{a \in A}$. This family is summable by 
  \cref{prop:reindexing}.
  Let 
  \[ h = \SstrR_{Y_1, Y_2} \compl \left(\Spairing{f_i'} \tensor g\right)
  : X_1 \tensor Y_2 \arrow \S(Y_1 \tensor Y_2). \]
  
  Then we can easily check (by definition of $\SstrR$) 
  that $\Sproj_i \compl h = f_i' \tensor g$ 
  and $\Ssum \compl h = \left(\sum_{i \in \N} f_i' \right) \tensor g$,
  so by \ref{def:ss-sum} $(f_i')_{i \in \N}$ is summable 
  with sum $\left(\sum_{a \in A} f_a \right) \tensor g$.
  We conclude that $(f_a)_{a \in A}$ is summable with the same sum, 
  by reindexing (\cref{prop:reindexing}), and using 
  that $0 \tensor g = 0$. 
  So $\sm$ is left distributive. We conclude that $\sm$ is strongly distributive
  by \cref{rem:sum-ll-weaker}.
\end{proof}

  \begin{proof}[Proof of Item 2]
  For the forward direction, recall that $X \linarrow \Sproj_i :
  (X \linarrow \S Y) \arrow (X \linarrow Y)$
  is defined as $\cur\left(\ev \compl \left( (X \linarrow Y) \sm \Sproj_i\right) \right)$.
  But $\ev \compl ((X \linarrow Y) \sm \Sproj_i)$ is summable, by summability
  of the $\Sproj_i$, right distributivity (of $\sm$) and right distributivity
  (of composition). So the $X \linarrow \Sproj_i$ are summable.
  
  Conversely, it suffices by \cref{rem:sum-ll-weaker}
  to prove that for all summable family $\family{f_a}$ of $\catLL(X \sm Y, Z)$, 
  $\family{\cur(f_a)}$ is summable.
  Let $\psi : A \injection \N$ be an injection.
  Let $(f_i')_{i \in \N} = \famact{\psi}(f_a)_{a \in A}$. This family is summable by 
  \cref{prop:reindexing}. Let 
  \[ h = \Spairing{Y \linarrow \Sproj_i} \compl \cur \left(\Spairing{f_i'}\right) : 
  X \arrow \S(Y \linarrow Z). \]
  Then we can check that $\Sproj_i \compl h = \cur(f_i')$, 
  so $(\cur(f_i'))_{i \in \N}$ is summable. But 
  $(\cur(f_i'))_{i \in \N} = \famact{\psi}(\cur(f_a))_{a \in A}$
  (because $\cur(0) = 0$) so we conclude that 
  $\family{\cur(f_a)}$ is summable.
  \end{proof}

\begin{proof}[Proof of Item 3]
The forward direction is immediate, upon observing that 
$\withFam \Sproj_j = \prodPairing{\Sproj_j \comp \proj_i}$
and that for all $i \in I$, 
$(\Sproj_j \compl \proj_i)_{j \in \N}$ is summable 
(by right distributivity of composition).
The reverse direction is very similar to the proof of Item 2.
\end{proof}

\section{Results on representable summability}
\label{app:representable-summability}
The goal of this section is to prove \cref{thm:representable-compatible}.
This is a consequence of 
\cref{lemma:representable-sm,lemma:representable-closure,lemma:representable-product}
proved below.

\begin{lemma} \label{lemma:representable-sm}
The sum of a representable PCM category is always compatible with 
the symmetric monoidal product.
\end{lemma}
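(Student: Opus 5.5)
I will use Item~1 of \cref{prop:ss-ll} with $\S = \Dbimon \linarrow \_$: I need $0 \sm Y = 0$ and the strength $\SstrR_{X,Y} = \Spairing{\Sproj_i \sm Y} : \S X \sm Y \arrow \S(X \sm Y)$ satisfying $\Ssum \compl \SstrR = \Ssum \sm Y$ (and symmetrically $\SstrL$). The first condition is one of the standing assumptions of \cref{sec:representable} (zero morphisms are absorbing for $\sm$). By symmetry of $\sm$, composing with $\smsym$ gives $\SstrL$ from $\SstrR$, so I only treat $\SstrR$. Since $\Dbimon \linarrow (X \sm Y)$ is an internal hom, I will not build the witness from the family $(\Sproj_i \sm Y)_i$ directly. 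Instead I will write it down by currying and then check it with \cref{eq:curry-proj,eq:curry-sum}.

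\textbf{Construction.} Let $\sigma : ((\Dbimon \linarrow X) \sm Y) \sm \Dbimon \arrow ((\Dbimon \linarrow X) \sm \Dbimon) \sm Y$ be the canonical symmetric monoidal isomorphism (associators and symmetry). Define
\[ \SstrR = \cur\big( (\ev_{\Dbimon, X} \sm Y) \compl \sigma \big) : (\Dbimon \linarrow X) \sm Y \arrow \Dbimon \linarrow (X \sm Y). \]
By \cref{eq:curry-proj},
\[ \Sproj_i \compl \SstrR = (\ev \sm Y) \compl \sigma \compl ((\Dbimon \linarrow X) \sm Y \sm \Dinj_i) \compl \smunitR^{-1}. \]
Naturality of $\sigma$ lets me move $\Dinj_i$ past the symmetry onto the $\Dbimon$ factor next to $\Dbimon \linarrow X$. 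The coherence of unitors with associators and symmetry then rewrites this as $\big(\ev \compl ((\Dbimon \linarrow X) \sm \Dinj_i) \compl \smunitR^{-1}\big) \sm Y = \Sproj_i \sm Y$. Applying \cref{eq:curry-sum} in exactly the same way gives $\Ssum \compl \SstrR = \Ssum \sm Y$.

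\textbf{Summability.} The computation above shows that $(\Sproj_i \sm Y)_{i\in\N}$ has a witness into $\S(X \sm Y)$. By \ref{def:ss-sum-1} (with $n=1$) the family is therefore summable, and by \ref{def:ss-sum-2} its sum is $\Ssum \compl \SstrR = \Ssum \sm Y$. So $\SstrR$ is exactly the pairing $\Spairing{\Sproj_i \sm Y}$ required in \cref{eq:sm-strength}. Uniqueness of the witness follows from \cref{prop:representable-monicity}. Item~1 of \cref{prop:ss-ll} then yields compatibility with $\sm$, i.e. strong distributivity.

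\textbf{Main obstacle.} The only step with real content is the coherence bookkeeping: showing that $\sigma$ composed with the unitor $\smunitR^{-1}$ on $(\Dbimon \linarrow X) \sm Y$ equals $(\smunitR^{-1} \sm Y)$ followed by the appropriate associator, so that the $\Dinj_i$ (and $\Ddiag$) end up in the right slot. I would argue this by Mac Lane coherence for symmetric monoidal categories, since all the maps involved are built from structural isomorphisms. No summability assumption beyond those in \cref{prop:representable-uncurry} enters the argument.
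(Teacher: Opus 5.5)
Your proof is correct and is essentially the paper's. The paper uses the same morphism in uncurried form, $f = (\ev \tensor X_2) \compl ((\Dbimon \linarrow X_1) \tensor \tensorSym)$, checks $f \compl (\cdots \tensor \Dinj_i) \compl \tensorUnitR^{-1} = \Sproj_i \tensor X_2$ (and likewise with $\Ddiag$) by the same coherence bookkeeping, and concludes via \cref{prop:representable-uncurry} rather than by currying and invoking \ref{def:ss-sum-1} and \ref{def:ss-sum-2} directly, which amounts to the same thing; the left strength and $0 \sm Y = 0$ are handled exactly as you do.
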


\begin{proof}
  We rely on the characterization of \cref{prop:ss-ll}.
  We already know by assumption that $0$ is absorbing for the 
  monoidal product. Let us prove that $\sequence{\Sproj_i \tensor X_2}$ is 
  a summable family of $\catLL((\Dbimon \linarrow X_1) \sm X_2, X_1 \sm X_2)$,
  with sum $\Ssum \tensor X_1$. Let
    \begin{equation} \label{eq:SstrR-representable-uncurry}
        f = \begin{tikzcd}[column sep = large]
          {(\Dbimon \linarrow X_1) \tensor X_2 \tensor \Dbimon} & {(\Dbimon \linarrow X_1) \tensor \Dbimon \tensor X_2} & {X_1 \tensor X_2}
          \arrow["{(\Dbimon \linarrow X_1) \tensor \tensorSym}", from=1-1, to=1-2]
          \arrow["{\ev \tensor X_2}", from=1-2, to=1-3]
        \end{tikzcd}.
      \end{equation}
        Then we can check that 
        \begin{align*}
        f \compl ((\Dbimon \linarrow X_1) \tensor X_2 \tensor \Dinj_i) \compl 
        \tensorUnitR^{-1}_{\Dbimon \linarrow X_1 \tensor X_2}
        &\eqnum{1}= (\ev \tensor X_2) \compl ((\Dbimon \linarrow X_1) \tensor \Dinj_i \tensor X_2)
        \compl (\tensorUnitR^{-1}_{\Dbimon \linarrow X_1} \tensor X_2) \\
        &\eqnum{2}= \Sproj_i \tensor X_2
        \end{align*}
        where \texteqnum{1} is obtained by standard computations on 
        symmetric monoidal categories, and \texteqnum{2}
        is an immediate consequence of \cref{eq:projection-representable-def}.
        So by \cref{prop:representable-uncurry},
        $\sequence{\Sproj_i \tensor X_2}$ is summable with sum
        \[ f \compl ((\Dbimon \linarrow X_1) \tensor X_2 \tensor \Ddiag) \compl 
        \tensorUnitR^{-1}_{\Dbimon \linarrow X_1 \tensor X_2} 
        = \Ssum \tensor X_2 \] 
        by a similar computation.
        The summability of $\sequence{X_1 \tensor \Sproj_i}$ follows by symmetry 
        of the monoidal product, as explained in \cref{rem:sum-ll-weaker}.
\end{proof}

\begin{lemma} \label{lemma:representable-closure}
The sum of a representable PCM category is always compatible with 
the symmetric monoidal closed structure.
\end{lemma}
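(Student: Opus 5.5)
By \cref{prop:ss-ll} (Item 2), and since \cref{lemma:representable-sm} already gives compatibility with the symmetric monoidal product, it remains to show one thing. For all objects $X, Y$, the family $\sequence{X \linarrow \Sproj_i}$ of morphisms $(X \linarrow (\Dbimon \linarrow Y)) \arrow (X \linarrow Y)$ must be summable. I will follow the pattern of the proof of \cref{lemma:representable-sm}. Using \cref{prop:representable-uncurry} with $n = 1$, it suffices to exhibit a morphism $f : (X \linarrow (\Dbimon \linarrow Y)) \sm \Dbimon \arrow (X \linarrow Y)$ such that
$f \compl ((X \linarrow (\Dbimon \linarrow Y)) \sm \Dinj_i) \compl \tensorUnitR^{-1} = X \linarrow \Sproj_i$ for all $i \in \N$.

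The natural candidate is obtained by currying a double evaluation. Write $Z = X \linarrow (\Dbimon \linarrow Y)$ and set
$f = \cur\big(\ev_{\Dbimon, Y} \compl (\ev_{X, \Dbimon \linarrow Y} \sm \Dbimon) \compl \sigma\big)$,
where $\sigma : (Z \sm \Dbimon) \sm X \arrow (Z \sm X) \sm \Dbimon$ is the canonical symmetry/associativity isomorphism. Morally, $f$ takes $(g, d)$ to the map $x \mapsto g(x)(d)$.

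Next I check the equation. By naturality of $\cur$ in its domain, $f \compl (Z \sm \Dinj_i) \compl \tensorUnitR^{-1}$ is the currying of
$\ev \compl (\ev \sm \Dbimon) \compl \sigma \compl ((Z \sm \Dinj_i)\compl \tensorUnitR^{-1} \sm X)$.
Moving $\Dinj_i$ past $\sigma$ by naturality of the symmetric monoidal isomorphisms, and then past $\ev \sm \Dbimon$ by bifunctoriality, this becomes
$\ev \compl ((\Dbimon \linarrow Y) \sm \Dinj_i) \compl \tensorUnitR^{-1} \compl \ev_{X, \Dbimon \linarrow Y}$, up to coherence isomorphisms. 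By \cref{eq:projection-representable-def}, this equals $\Sproj_i \compl \ev$. On the other side, by the definition of the bifunctor $X \linarrow \_$ recalled in the notations, $X \linarrow \Sproj_i = \cur(\Sproj_i \compl \ev)$. The two agree, so \cref{prop:representable-uncurry} yields summability of $\sequence{X \linarrow \Sproj_i}$, and \cref{prop:ss-ll} concludes.

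The only real obstacle is the bookkeeping of coherence isomorphisms (unitors, associator, symmetry) in the second step. I would handle it by checking equality after uncurrying, i.e.\ after composing with $\ev$ and using the universal property $\ev \compl (\cur(h) \sm X) = h$. This reduces the verification to a diagram in a symmetric monoidal category that commutes by coherence together with naturality of $\ev$. As a byproduct, the same computation with $\Ddiag$ in place of $\Dinj_i$ shows that the sum equals $X \linarrow \Ssum$, although Item 2 of \cref{prop:ss-ll} does not require it.
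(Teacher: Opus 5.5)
Your proposal is correct and follows the paper's own proof essentially step for step. Like the paper, you reduce via \cref{prop:ss-ll} to summability of $(X \linarrow \Sproj_i)_{i \in \N}$, take the same witness $\cur\big(\ev \compl (\ev \sm \Dbimon) \compl (\id \sm \smsym)\big)$, use naturality of $\cur$ and a coherence diagram chase to reach $\cur(\Sproj_i \compl \ev) = X \linarrow \Sproj_i$, and conclude with \cref{prop:representable-uncurry}.
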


\begin{proof}
  By \cref{prop:ss-ll} is suffices to prove that 
  $\sequence{X \linarrow \Sproj_i}$ is a 
  summable family of 
  $\catLL(X \linarrow (\Dbimon \linarrow Y), X \linarrow Y)$. Consider the morphism
  \[ \cur(h) : (X \linarrow (\Dbimon \linarrow Y)) \sm \Dbimon \arrow (X \linarrow Y) \]
  where $h$ is defined as 
  \[ 
  \begin{tikzcd}[column sep = small]
	  {(X \linarrow (\Dbimon \linarrow Y)) \sm \Dbimon \sm X} & {(X \linarrow (\Dbimon \linarrow Y)) \sm X \sm \Dbimon} & {(\Dbimon \linarrow Y) \sm \Dbimon} & Y
	  \arrow["{\id \sm \smsym}", from=1-1, to=1-2]
	  \arrow["{\ev \sm \Dbimon}", from=1-2, to=1-3]
	  \arrow["\ev", from=1-3, to=1-4]
  \end{tikzcd}\] 
Then 
\begin{align*}
&\cur(h) \compl \Big( \big(X \linarrow (\Dbimon \linarrow Y) \big) \sm \Dinj_i\Big) \compl 
  \tensorUnitR^{-1}_{X \linarrow (\Dbimon \linarrow Y)} \\
&= \cur\bigg(h \compl \Big(\big(X \linarrow (\Dbimon \linarrow Y)\big) \sm \Dinj_i \sm X\Big)
\compl \Big(\tensorUnitR^{-1}_{X \linarrow (\Dbimon \linarrow Y)} \sm X\Big) \bigg) 
\tag*{By naturality of $\cur$} \\
&= \cur(\Sproj_i \compl \ev_{X, \Dbimon \linarrow Y}) \tag*{By the diagram chase below} \\ 
&= X \linarrow \Sproj_i \tag*{By definition of $\linarrow$}
\end{align*}
\[ 
\begin{tikzcd}[column sep = small]
	{(X \linarrow (\Dbimon \linarrow Y)) \sm \Dbimon \sm X} & {(X \linarrow (\Dbimon \linarrow Y)) \sm X \sm \Dbimon} & {(\Dbimon \linarrow Y) \sm \Dbimon} & Y \\
	{(X \linarrow (\Dbimon \linarrow Y)) \sm 1 \sm X} & {(X \linarrow (\Dbimon \linarrow Y)) \sm X \sm 1} & {(\Dbimon \linarrow Y) \sm 1} \\
	{(X \linarrow (\Dbimon \linarrow Y)) \sm X} & {(X \linarrow (\Dbimon \linarrow Y)) \sm X} & {\Dbimon \linarrow Y}
	\arrow["{\id \sm \smsym}", from=1-1, to=1-2]
	\arrow["h"{pos=0.6}, shift left=3, curve={height=-12pt}, from=1-1, to=1-4]
	\arrow["{\ev \sm \Dbimon}", from=1-2, to=1-3]
	\arrow["\ev", from=1-3, to=1-4]
	\arrow["{\id \sm \Dinj_i \sm X}", from=2-1, to=1-1]
	\arrow["{\id \sm \smsym}", from=2-1, to=2-2]
	\arrow["{\id \sm \id \sm \Dinj_i}", from=2-2, to=1-2]
	\arrow["{\ev \sm 1}", from=2-2, to=2-3]
	\arrow["{\id \sm \Dinj_i}", from=2-3, to=1-3]
	\arrow["{\cref{eq:projection-representable-def}}"{description}, draw=none, from=2-3, to=1-4]
	\arrow["{\tensorUnitR^{-1} \sm X}", from=3-1, to=2-1]
	\arrow["\id"', from=3-1, to=3-2]
	\arrow["{\tensorUnitR^{-1}}"', from=3-2, to=2-2]
	\arrow["\ev"', from=3-2, to=3-3]
	\arrow["{\Sproj_i}"', shift right=3, curve={height=18pt}, from=3-3, to=1-4]
	\arrow["{\tensorUnitR^{-1}}", from=3-3, to=2-3]
\end{tikzcd}\]
By \cref{prop:representable-uncurry}, we conclude that 
the $X \linarrow \Sproj_i$ are summable.
\end{proof}

\begin{lemma} \label{lemma:representable-product}
The sum of a representable PCM category is always compatible
with the categorical products, whenever they exist.
\end{lemma}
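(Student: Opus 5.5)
We follow the pattern of \cref{lemma:representable-sm,lemma:representable-closure}. By Item 3 of \cref{prop:ss-ll}, it suffices to show that for every (finite or countable) index set $I$ and objects $(X_i)_{i \in I}$, the family $(\withFam \Sproj_j)_{j \in \N}$ of morphisms $\withFam (\Dbimon \linarrow X_i) \arrow \withFam X_i$ is summable. By \cref{prop:representable-uncurry}, this reduces to exhibiting a single morphism
\[ h : \Big(\withFam (\Dbimon \linarrow X_i)\Big) \sm \Dbimon \arrow \withFam X_i \]
such that $h \compl (\id \sm \Dinj_j) \compl \tensorUnitR^{-1} = \withFam \Sproj_j$ for all $j \in \N$.

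We take $h = \prodPairing{h_i}$, where $h_i = \ev_{\Dbimon, X_i} \compl (\proj_i \sm \Dbimon)$. This uses only the universal property of the product, so no summability hypothesis is needed to build it. To check the equation, we compose with $\proj_i$ and use the uniqueness of pairing:
\[ \proj_i \compl h \compl (\id \sm \Dinj_j) \compl \tensorUnitR^{-1}
= \ev \compl (\proj_i \sm \Dinj_j) \compl \tensorUnitR^{-1}
= \ev \compl ((\Dbimon \linarrow X_i) \sm \Dinj_j) \compl \tensorUnitR^{-1} \compl \proj_i
= \Sproj_j \compl \proj_i . \]
The middle step is bifunctoriality of $\sm$ together with naturality of $\tensorUnitR$. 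The last step is \cref{eq:projection-representable-def}. Since $\proj_i \compl \withFam \Sproj_j = \Sproj_j \compl \proj_i$, the two morphisms agree after every projection and are therefore equal. Hence $(\withFam \Sproj_j)_{j \in \N}$ is summable.

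The one point needing care is the exact form of the hypothesis in Item 3 of \cref{prop:ss-ll}. If it also asks for the sum to be $\withFam \Ssum$, the same computation with $\Ddiag$ in place of $\Dinj_j$ gives $\proj_i \compl h \compl (\id \sm \Ddiag) \compl \tensorUnitR^{-1} = \Ssum \compl \proj_i$, which yields that sum. No real obstacle is expected. The only subtlety is that the argument is uniform in $I$, whether finite or countable, because it never takes a sum over $I$: the product pairing does all the work.
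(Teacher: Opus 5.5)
Your proposal is correct and essentially identical to the paper's proof. The paper uses the same morphism $\prodPairing{\ev \compl (\proj_i \tensor \Dbimon)}$, checks the same identity after each projection (via naturality of $\tensorUnitR$ and \cref{eq:projection-representable-def}), and concludes with Item 3 of \cref{prop:ss-ll} and \cref{prop:representable-uncurry}. Your extra remark about the sum being $\withFam \Ssum$ is harmless but unnecessary, since Item 3 only asks for summability.
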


\begin{proof}
Let $I$ be a set such that $I$-indexed products exist.
By \cref{prop:ss-ll}, it suffices to prove that 
$\sequence<j>{\withFam \Sproj_j}$ is summable.
     Define
     \[ f = \prodPairing{\ev \compl (\proj_i \tensor \Dbimon)} : 
     \left(\withFam (\Dbimon \linarrow X_i) \right) \tensor \Dbimon \arrow
       \withFam X_i \]
     Then, 
     \begin{align*}
     f \compl (\withFam X_i \tensor \Dinj_j) \compl \tensorUnitR^{-1} 
     &= \prodPairing{\ev \compl (\proj_i \tensor \Dinj_j) 
      \compl \tensorUnitR^{-1}} \\
     &= \prodPairing{\ev \compl ((\Dbimon \linarrow X) \tensor \Dinj_j) 
      \compl \tensorUnitR^{-1} \compl 
      \proj_i} \tag*{by functoriality and naturality} \\
     &= \prodPairing{\Sproj_j \compl \proj_i} \tag*{by \cref{eq:projection-representable-def}} \\
     &= \withFam \Sproj_j
     \end{align*}
Thus by \cref{prop:representable-uncurry}, $\sequence<j>{\withFam \Sproj_j}$ 
is summable.
\end{proof}

\begin{figure}
  \begin{align*}
    \intertext{Monoidal structure: $\smone = (\set{*}, \biorth{\set{e_*}})$ and
    for all $s \in \points {(X_1 \linarrow Y_1)}$ and 
    $s' \in \points{(X_2 \linarrow Y_2)}$, 
    the tensor $s \sm s' \in \vrigpos{X_1 \sm X_2 \linarrow Y_1 \sm Y_2}$ 
    is defined by $(s \sm s')_{(a, a'), (b, b')} = s_{a, b} s'_{a', b'}$ 
    so that $(s \sm s') \cdot (x \sm x') = (s \cdot x) \sm (s' \cdot x')$}
    \smunitL \in \vrigpos{1 \sm X \linarrow X} 
    && \smunitL_{(*, a), a'} &= \kronecker a {a'} 
    & \smunitL \cdot (r \sm x) &= r x \\
    \smunitR \in \vrigpos{X \sm 1 \linarrow X} 
    && \smunitR_{(a, *), a'} &= \kronecker a {a'} 
    & \smunitR \cdot (x \sm r) &= r x \\
    \smassoc \in \vrigpos{(X \sm Y) \sm Z \linarrow X \sm (Y \sm Z)} 
    && \hspace{-0.5em} \smassoc_{((a, b), c), (a', (b', c'))} 
    &= \kronecker a {a'} \kronecker b {b'} \kronecker c {c'} \hspace{-0.5em}
    & \smassoc \cdot ((x \sm y) \sm z) &= x \sm (y \sm z) \\ 
    \smsym \in \vrigpos{X \sm Y \linarrow Y \sm X} 
    && \smsym_{(a, b), (b', a')} &= \kronecker a {a'} \kronecker b {b'}
    & \smsym \cdot (x \sm y) &= y \sm x
    \intertext{Closed structure: for all $s \in \points{(X \sm Y \linarrow Z)}$}
    \ev \in \vrigpos{((X \linarrow Y) \sm X) \linarrow Y} 
    && \ev_{((a, b), a'), b'} &= \kronecker a {a'} \kronecker b {b'} 
    & \ev \cdot (s \sm x) &= s \cdot x  \\ 
    \cur(s) \in \vrigpos{X \linarrow (Y \linarrow Z)}
    && \cur(s)_{a, (b, c)} &= s_{(a, b), c} 
    & (\cur(s) \cdot x) \cdot y &= s \cdot (x \sm y)
    \intertext{Exponential: for all $s \in \points{(X \linarrow Y)}$}
    \bang s \in \vrigpos{\bang X \linarrow \bang Y} 
    && (\bang s)_{m, [b_1, \ldots, b_n]} 
    &= \smashoperator{\sum_{\twolines{a_1, \ldots, a_n \in \web X}
      {\text{s.t. } m = [a_1, \ldots, a_n]}}}
      s_{a_1, b_1} \cdots s_{a_n, b_n} \hspace{-5em}
    & \bang s \cdot \prom{x} &= \prom{(s \cdot x)} \\ 
    \der \in \vrigpos{\bang X \linarrow X} 
    && \der_{m, a} &= \kronecker a {[a]} 
    & \der \cdot \prom{x} &= x \\ 
    \dig \in \vrigpos{\bang \bang X \linarrow \bang X} 
    && \hspace{-5em} \dig_{[m_1, \ldots, m_k], m} 
    &= \kronecker{m}{m_1 + \cdots + m_n}  \hspace{-5em}
    & \dig \cdot \prom{(\prom x)} &= \prom x
    \intertext{Seely isomorphisms: Define $k \cdot [a_1, \ldots, a_n]
    = [(k, a_1), \ldots, (k, a_n)]$. Observe that
    every $m \in \web{\bang{(X \with Y)}}$ can be written 
    uniquely as $m = 1 \cdot m_1 + 2 \cdot m_2$ with 
    $m_1 \in \web{\bang X}$ and $m_2 \in \web{\bang Y}$
    (even when taking into account that $\web{\bang X}$ is not the 
    set of all multisets in general). 
    Furthermore, recall that $\top = (\emptyset, \set{0})$
    so $\bang{\top} = \left(\set{[]}, \biorth{\set{\prom 0}}\right) 
    = \left(\set{[]}, \biorth{\set{e_{[]}}}\right)$.}
    \seelyzero \in \vrigpos{1 \linarrow \bang \top} 
    && \seelyzero_{*, []} &= 1 
    & \seelyzero \cdot r &= r \\
    (\seelyzero)^{-1} \in \vrigpos{\bang \top \linarrow 1} 
    && (\seelyzero)^{-1}_{[], *} &= 1 
    & (\seelyzero)^{-1} \cdot r &= r \\
    \seelytwo \in \vrigpos{(\bang X \sm \bang Y) \linarrow \bang(X \with Y)} 
    && \hspace{-5em} \seelytwo_{(m_1, m_2), m} 
    &= \kronecker {m} {1 \cdot m_1 + 2 \cdot m_2} \hspace{-5em}
    & \seelytwo \cdot (\prom{x} \sm \prom{y}) &= \prom{\prodPair{x}{y}} \\
    \hspace{-10em}(\seelytwo)^{-1} \in \vrigpos{\bang(X \with Y) \linarrow (\bang X \sm \bang Y)} 
    && \hspace{-5em} (\seelytwo)^{-1}_{m, (m_1, m_2)} 
    &= \kronecker {m} {1 \cdot m_1 + 2 \cdot m_2} \hspace{-5em}
    & (\seelytwo)^{-1} \cdot \prom{\prodPair x y} &= 
    \prom x \sm \prom y \\
  \end{align*}
  \caption{LL structure of $\WEBPOS$}
  \label{fig:web-ll}
\end{figure}

\begin{figure}
  \begin{center}
  \begin{tabular}{c c c c c}
    Bimonoid $\Dbimon$ & & Bimonad $\_ \tensor \Dbimon$ 
    & & Bimonad $\Dbimon \linarrow \_$ \\ \\

    Projection $\Dproj_i \in \categoryLL(\Dbimon, 1)$ & & 
    $\tensorUnitR \compl (X \tensor \Dproj_i) \in \categoryLL(X \tensor \Dbimon, X)$ 
    & & Injection $\Sinj_i$ \\
    Basis $\Dinj_i \in \categoryLL(1, \Dbimon)$ & & 
    $(X \tensor \Dinj_i) \compl \tensorUnitR^{-1} \in \categoryLL(X, X \tensor \Dbimon)$
    & & Projection $\Sproj_i$ \\ \\ 

    Monoid unit $\Dbimonu$ & & Monad unit $(X \tensor \Dbimonu) \compl \tensorUnitR^{-1}$
    & & Comonad unit $\Ssum$ \\
    Monoid multiplication $\Dbimonm$ & & Monad sum $(X \tensor \Dbimonm) \compl \tensorAssoc$
    & & Comonad sum $\Slift$ \\
    Comonoid unit $\Dbimoncu$ & \hspace{-3em} $\Longleftrightarrow$ \hspace{-3em} & 
    Comonad unit $\tensorUnitR \compl (X \tensor \Dbimoncu)$ 
    & \hspace{-3em} $\overset{\text{mates}}{\Longleftrightarrow}$ \hspace{-3em} 
    & Monad unit $\Sinj_0$ \\
    Comonoid multiplication $\Dbimoncm$ & & Comonad sum $\tensorAssoc \compl (X \tensor \Dbimoncm)$
    & & Monad sum $\SmonadSum$ \\
    Commutativity $\tensorSym$ & & Distributive law $X \tensor \tensorSym$
    & & Distributive law $\Sswap$ \\ \\

    Taylor coalgebra & & Distributive law & & Distributive law  \\
    $\Dbimonca \in \catLL(\Dbimon, \bang \Dbimon)$ & & 
    $(\bang\_) \tensor \Dbimon \naturalTrans \bang(\_ \tensor \Dbimon)$ & & 
    $\Sdl : \bang \S \naturalTrans \S \bang$
  \end{tabular}
\end{center}
\caption{Bimonoid and bimonad relations}
\label{fig:mate}
\end{figure}